\documentclass[times,doublespace]{amsart}
\usepackage{txfonts}
\usepackage{graphicx}
\usepackage{amssymb}
\usepackage{float}
\usepackage{epstopdf}
\usepackage[a4paper, total={6in, 9in}]{geometry}
\usepackage{amsmath,bm}

\usepackage{latexsym,amsmath,amsfonts,amsthm,mathrsfs,amssymb,cite}
\usepackage[usenames]{color}
\usepackage{graphicx,epsfig}
\usepackage{amscd}
\usepackage{amsbsy}
\usepackage{bm}
\usepackage{color}
\usepackage{cite}
\usepackage[colorlinks=true,urlcolor=blue,
citecolor=red,linkcolor=blue,linktocpage,pdfpagelabels]{hyperref}
\newtheorem{thm}{Theorem}[section]

\newtheorem{lem}{Lemma}[section]

\theoremstyle{definition}
\newtheorem{definition}{Definition}[section]

\newtheorem{rem}{Remark}[section]
\numberwithin{equation}{section}

\numberwithin{equation}{section}

\newcounter{saveeqn}


\newcommand{\bdelta}{\bm{\delta}}

\newcommand{\Bx}{\mathbf{x}}
\newcommand{\By}{\mathbf{y}}

\newcommand{\Ocal}{\mathcal{O}}

\renewcommand{\(}{\left(}
\renewcommand{\)}{\right)}

\newcommand{\RR}{\mathbb{R}}

\title[Multi-layer elastic metamaterial structures and polariton resonances]
{Elastostatics with multi-layer metamaterial structures and an algebraic framework for polariton resonances}

\author{Youjun Deng}
\address{School of Mathematics and Statistics, Central South University, Changsha 410083, China}
\email{youjundeng@csu.edu.cn, dengyijun\_001@163.com}

\author{Lingzheng Kong}
\address{School of Mathematics and Statistics, Central South University, Changsha 410083, China}
\email{math\_klz@csu.edu.cn, math\_klz@163.com}

\author{Hongyu Liu}
\address{Department of Mathematics, City University of Hong Kong, Hong Kong SAR, China.}
\email{hongyu.liuip@gmail.com}

\author{Liyan Zhu}
\address{School of Mathematics and Statistics, Central South University, Changsha, Hunan, China.}
\email{mtah\_zly@163.com}

\date{} 
\begin{document}
\maketitle

\begin{abstract}

Multi-layer structures are ubiquitous in constructing metamaterial devices to realise various frontier applications including super-resolution imaging and invisibility cloaking. In this paper, we develop a general mathematical framework for studying elastostatics within multi-layer material structures in $\mathbb{R}^d$, $d=2,3$. The multi-layer structure is formed by concentric balls and each layer is filled by either a regular elastic material or an elastic metamaterial. The number of layers can be arbitrary and the material parameters in each layer may be different from one another. In practice, the multi-layer structure can serve as the building block for various material devices. Considering the impingement of an incident field on the multi-layer structure, we first derive the exact perturbed field in terms of an elastic momentum matrix, whose dimension is the same as the number of layers. By highly intricate and delicate analysis, we derive a comprehensive study of the spectral properties of the elastic momentum matrix. This enables us to establishe a handy algebraic framework for studying polariton resonances associated with multi-layer metamaterial structures, which forms the fundamental basis for many metamaterial applications.

\noindent{\bf Keywords:}~~multi-layer solid structure, negative elastic materials,  polariton resonance, elastic momentum matrix, spectral analysis, characteristic polynomial

\noindent{\bf 2020 Mathematics Subject Classification:}~~ 35B34, 35L05, 35P15, 74B05, 74J20
\end{abstract}

\maketitle

\section{Introduction}

We are concerned with the mathematical study of elastic deformation within multi-layer solid structures and the peculiar resonance phenomena it induces. Specifically, we allow the presence of negative elastic materials, i.e. elastic materials with negative bulk moduli which form an important class of the so-called elastic metamaterials. In fact, exotic elastic materials with negative stiffness have been artificially structured \cite{KM14,LLBW01}, and have been proposed for many revolutionary applications including super-resolution elastic imaging and invisibility cloaking. In constructing metamaterial devices to realise those frontier applications, multi-layer structures are frequently used as the building blocks. Moreover, we would like to point out that the so-called polariton resonances induced by those negative structures form the fundamental basis of those applications.

In \cite{DLL201}, the authors studied the surface polariton resonance associated with a single elastic metamaterial particle in linear elasticity and considered its application in elastic wave imaging. This can be regarded as a single-layer metamaterial structure. In \cite{AKKY17,AKKY18,LL16,LLL16}, the core-shell structures were proposed for realising invisibility cloaking of elastic deformation based on the so-called anomalous localised resonance. Here, the core is filled with a regular elastic material whereas the shell is filled with a negative elastic material, and both the regular and negative elastic materials are uniform. If the material parameters in the core and the shell match in a delicate way involving the geometric configurations as well, then anomalous localised resonance can occur which in turn induces the cloaking effect due to an impinging field. The core-shell device can be regarded as a two-layer metamaterial structure. The aforementioned studies have also been extended from the micro-scale to the macro-scale in \cite{DLL19,DLL202,LLZ23}. Similarly, one-layer or two-layer metamaterial structures have also been extensively used and theoretically investigated in optics and acoustics; see e.g. \cite{ACKLM1,ADKLMZ,BS11,DLZ21,DLZ22,FDC19,FDL15,LLLW19,LL15,LLL9,RS,WN10} and the references cited therein; and in particular in \cite{DFArxiv}, multi-layer metamaterial structures have been studied in electrostatics. Motivated by those studies, we consider elastostatic deformation within much more general layered metamaterial structures in $\mathbb{R}^d$, $d=2,3$. In fact, we consider the case that the multi-layer structure is formed by concentric balls and each layer is filled by either a regular elastic material or an elastic metamaterial. The number of layers can be arbitrary and the material parameters in each layer may be different, though uniform. This can cover many of the existing studies mentioned above with the number of layers being 1 or 2. Then we consider the polariton resonance that can be induced by such general layered elastic metamaterial structures. To that end, we first derive the elastostatic field within the multi-layer structure due to an impinging field in terms of the so-called elastic momentum matrix, whose dimension is the same as the number of the layers. By highly intricate and delicate analysis, we derive a comprehensive study of the spectral properties of the elastic momentum matrix. It in turns establishes a convenient and handy algebraic framework for studying polariton resonances associated with multi-layer metamaterial structures. In practice, the multi-layer structure can serve as the building block for various material devices, and our algebraic framework can facilitate the proper selection of the metamaterial parameters in order to induce the desired resonance and in turn the realisation of customised applications. We shall investigate along this direction in our forthcoming work.

There is one more motivation and physical relevance of our study which can be described as follows. In fact, as mentioned earlier, the multi-layer structure in our study can be completely occupied by regular elastic materials, i.e. there is no presence of metamaterials. Such regular multi-layer structures have been proposed for achieving the so-called GPT-vanishing (Generalised Polarisation Tensors) or ESC-vanishing (Elastic Scattering Coefficients) structures and hence cloaking devices with enhanced invisibility effects via the transformation approach; see \cite{AAHWY17,LTWW21} for the related study in the elastic case, \cite{AKLLY13} in the electromagnetic case, \cite{AKLL11} in the electrostatic case and \cite{AKLL13} in the acoustic case. The key technical ingredient in those studies is an algebraic system defined by certain integral operators whose coefficient matrix is exactly the momentum matrix derive in this paper in the elastic case. The explicit formula of the characteristic polynomial and the estimation of the corresponding roots of the momentum matrix are set to be open problems in \cite[Page 258]{AKLL11} and \cite[Page 497]{AKLL13}:
\begin{enumerate}
	\item \emph{ However, as the number of layers gets larger, solving analytically the characteristic polynomial  seems too
		complicated, and even proving existence of solutions to the characteristic polynomial  seems to be quite challenging.
		These numerical evidences show us that the characteristic polynomial has solutions, even though we are not able to prove it.
	}
	\item \emph{As in the conductivity case \cite{AKLL11}, it should be emphasized that one does not know if a solution exists for any number of layers. }
\end{enumerate}
Due to such a reason, in the works mentioned above,  the authors only considered the numerical verification of the existence of such multi-layer structures with a small number of layers. Clearly, in the current article, we can completely address these open problems, at least in the elastostatic case. This paves the way for constructing much more general GPT-vanishing structures for enhanced invisibility cloaking devices by following a similar spirit as the works mentioned above. However, we choose to explore along that direction somewhere else and to focus on the development of the algebraic framework for the polariton resonance in the current paper.

The remainder of this paper is organized as follows.
In Section \ref{sec2}, we first present the elastostatic scattering problem with multi-layer structures.
Second, we give the representation of the perturbed field in terms of the elastic moment matrix.
In Section \ref{sec3}, we establish an algebraic framework for polariton resonances  and give the main results on the explicit formula of the characteristic polynomial and the estimation of the roots.
Section \ref{sec4} is devoted to the proofs of the main results in Section \ref{sec3}.
In Section \ref{sec5},
we give some similar results  in the two dimensional setting.
In Section \ref{sec6}, numerical examples are presented in finding all the polariton resonance modes in a fixed multi-layer structure and polariton resonance is simulated. Some conclusions are made in Section \ref{sec7}.

\section{Elastostatics with multi-layer metamaterial structures}\label{sec2}
\subsection{Multi-layer metamaterial structures}
We give the general $N$-layer elastic structure as shown in FIGURE \ref{fig:1}.
\begin{figure}[!h]
	\begin{center}
		{\includegraphics[width=2.5in]{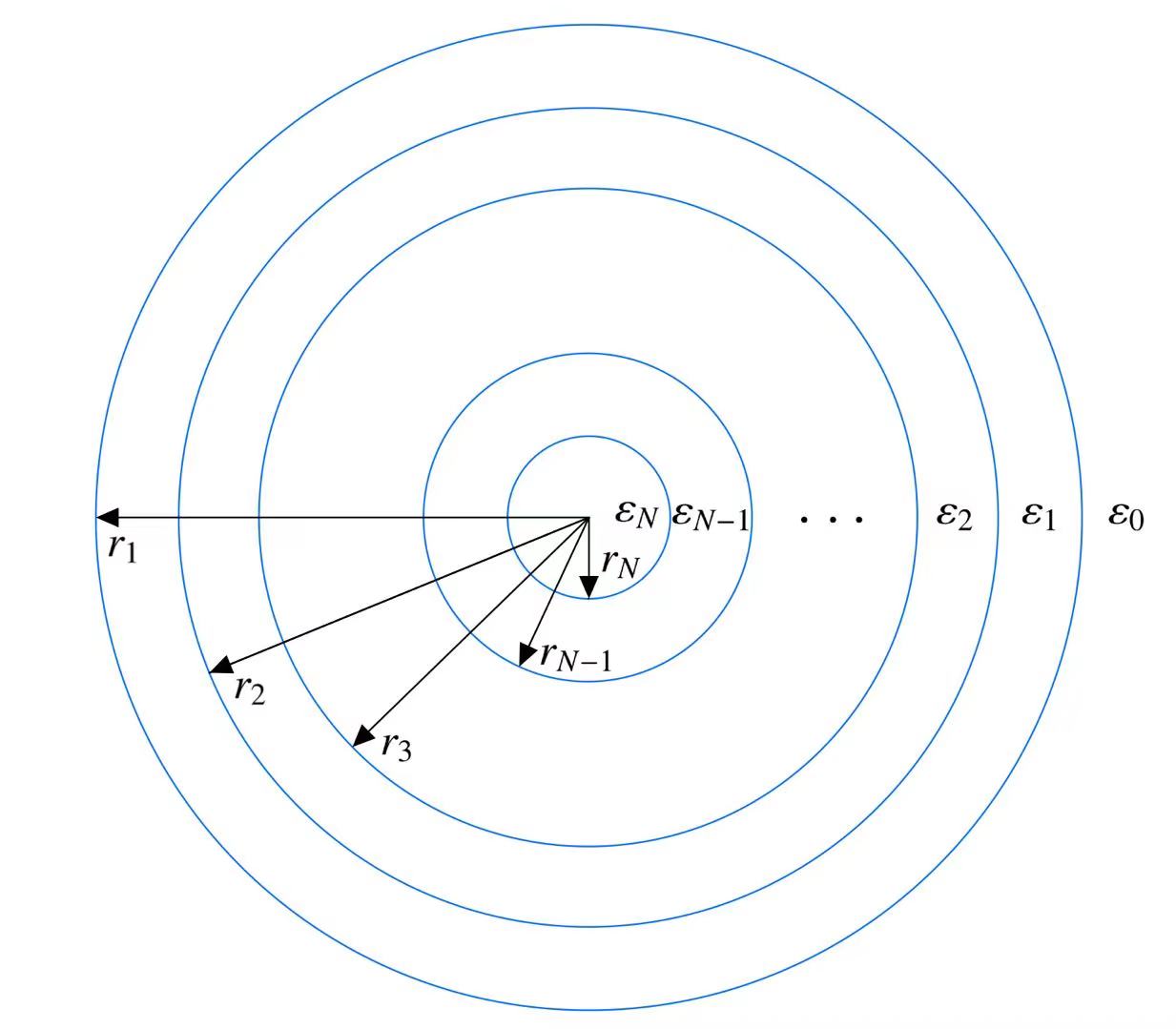}}
	\end{center}
	\caption{Schematic illustration of an $N$-layer elastic structure.
	}\label{fig:1}
\end{figure}
Precisely, we construct a sequence of layers, $D_0,D_1,\ldots,D_{N}$, by
\begin{equation}\label{eq:aj}
D_{0}:=\{r>r_{1}\}, \quad D_j:=\{r_{j+1}<r\leqslant r_{j}\}, \quad  j=1,2,\ldots, N-1, \quad D_N:=\{r\leqslant r_N\},
\end{equation}
and denote the interfaces between the adjacent layers by
\[
S_j:=\left\{|x|=r_j\right\}, \quad  j=1,2,\ldots,N,
\]
where $N\in \mathbb{N}$ and $r_j\in\mathbb{R}_+$. The elastic structure configuration in $\RR^d$, $d=2,3$, is characterized by the Lam\'e parameters $\tilde{\lambda}$ and $\tilde{\mu}$ which are given by
\begin{equation}\label{eq:def01}
(\tilde{\lambda},\tilde{\mu})= \varepsilon(\Bx)\cdot(\lambda,\mu)\quad\mbox{and}\quad \varepsilon(\Bx)=\varepsilon_c(\Bx)\chi(D)+\varepsilon_0\chi(\mathbb{R}^d\backslash \overline{D}),
\end{equation}
where $\chi$ denotes the characteristic function; $\mathbb{R}^d\backslash \overline{D}$ is the matrix, namely the background space; and $D=\cup_{j=1}^ND_j$
is the $N$-layer elastic structure. In \eqref{eq:def01}, $\lambda$ and $\mu$ signify the bulk moduli of a regular elastic material which are real constants satisfying the following strong convexity conditions,
\begin{equation}\label{eq:convexity condition}
	\mu>0 \quad \mbox{and} \quad d\lambda + 2\mu>0.
\end{equation}
$(\lambda, \mu)$ serves as the reference elastic configuration in our study. $\varepsilon_0$ is set to be a positive constant and hence $\varepsilon_0\cdot(\lambda, \mu)$ defines a regular elastic material in the matrix $\mathbb{R}^d\backslash\overline{D}$. For the $N$-layer structure, we set $\varepsilon_c$ to be of the following form
\begin{equation}\label{eq:paracho01}
\varepsilon_c(\Bx)=\varepsilon_j, \quad \Bx\in D_j,\quad j=1,2,\ldots,N.
\end{equation}
where
\begin{equation}\label{eq:pppn1}
\varepsilon_j\in\mathbb{C}\ \ \mbox{with}\ \ \Im\varepsilon_j\geq 0,\quad j=1,2,\ldots,N.
\end{equation}
That is, $\varepsilon_c$ is of a layered-piecewise constant structure. If $\Re\varepsilon_j>0$, then $\varepsilon_j\cdot(\lambda, \mu)$ defines a regular elastic material in $D_j$ with the bulk moduli being $\Re\varepsilon_j\cdot(\lambda, \mu)$ which clearly fulfils the strong convexity conditions in \eqref{eq:convexity condition}, and $\Im\varepsilon_j\cdot(\lambda, \mu)$ signifying the lossy/dampling parameters of the material. If $\Re\varepsilon_j\leq 0$, then $\varepsilon_j\cdot(\lambda, \mu)$ defines an elastic metamaterial in $D_j$ with the bulk moduli $\Re\varepsilon_j\cdot(\lambda, \mu)$ clearly broken the convexity conditions \eqref{eq:convexity condition}. In this paper, we simply refer to $\varepsilon_j\cdot(\lambda, \mu)$ with $\Re\varepsilon_j\leq 0$ as a negative elastic material in $D_j$.

In summary, we consider a rather general multi-layer structure where the number of layers can be arbitrarily given and the material in each layer can either be regular or negative. Moreover, the material parameters in each layer can be different from one another.

\subsection{Elastostatic scattering and momentum matrix}
Let $\mathbf{C}_{\lambda,\mu}(\Bx)  =(\mathrm{C}_{ijkl}(\Bx))_{i,j,k,l=1}^d$ be a four-rank tensor such that
\[
\mathbf{C}_{\tilde\lambda,\tilde\mu}=(\mathrm{C}_{ijkl}),\quad \mathrm{C}_{ijkl}(\Bx):=\tilde\lambda(\Bx)\bdelta_{ij}\bdelta_{kl}+\tilde\mu(\Bx)(\bdelta_{ik}\bdelta_{jl}+\bdelta_{il}\bdelta_{jk}),\ \ \Bx=(\Bx_1,\ldots,\Bx_d)\in\mathbb{R}^d,
\]
where $\bm{\delta}$ is the Kronecker delta, and $\tilde\lambda$ and $\tilde\mu$ are the elastic moduli defined in \eqref{eq:def01}.
The corresponding Lam\'e operator $\mathcal{L}_{\tilde{\lambda},\tilde{\mu}}$ is defined by
\begin{equation}\label{lameoperator}
	\mathcal{L}_{{\tilde\lambda},\tilde{\mu}} \mathbf{u}:=\nabla\cdot\mathbf{C}_{\tilde\lambda,\tilde\mu} {{\nabla}^s}\mathbf{u}=\tilde{\mu}\Delta\mathbf{u}+(\tilde{\lambda}+\tilde{\mu})\nabla\nabla\cdot\mathbf{u},
\end{equation}
with the strain tensor
\[
{\nabla}^s\mathbf{u}=\frac{1}{2}(\nabla \mathbf{u}+\nabla\mathbf{u}^T),
\]
where $\mathbf{u}$ is a $\mathbb{C}^d$-valued function, which signifies the elastic deformation field, and $T$ denotes the the transpose of a matrix.

Associated with the multi-layer elastic structure configuration described by \eqref{eq:def01}--\eqref{eq:paracho01}, the elastostatic scattering is governed by the following Lam\'e system:
\begin{equation}\label{eq:mainmd01}
\left\{
\begin{array}{ll}
\mathcal{L}_{\tilde{\lambda},\tilde{\mu}} \mathbf{u} =0, & \mbox{in} \quad \RR^d,\\
\mathbf{u}-\mathbf{H}=\Ocal(|\Bx|^{1-d}), & |\Bx|\rightarrow \infty,
\end{array}
\right.
\end{equation}
where the displacement field $\mathbf{u}\in H^1_{loc}(\RR^d)^d$, and the background field $\mathbf{H}$ satisfies $\mathcal{L}_{{\lambda},{\mu}} \mathbf{H}=0$ in $\RR^d$ with $\mathcal{L}_{{\lambda},{\mu}}$ defined in \eqref{lameoperator} with the subscripts replaced by ${\lambda}$ and ${\mu}$.

In what follows, we shall confine our study of the electrostatic system \eqref{eq:mainmd01} within the three dimensional setting, and shall present the corresponding two-dimensional extension in Section~\ref{sec5}.
Recall that, in \cite[Lemma 3.2]{DLL19}, three types of vectorial polynomials are introduced:
\[
\mathcal{T}^m_n(\Bx)=\nabla (r^{n} Y^m_{n}(\hat{\Bx}))\times \Bx, \quad n\geqslant 1, \quad  -n\leqslant m\leqslant n,
\]
\[
\mathcal{M}^m_n(\Bx)=\nabla(r^{n} Y^m_{n}(\hat{\Bx})), \quad n\geqslant 1,\quad  -n\leqslant m\leqslant n,
\]
and
\[
\mathcal{N}^m_n(\Bx)=\gamma^m_n r^{n-1} Y^m_{n-1}(\hat{\Bx})\Bx + (1-\frac{\gamma_n^m}{2n-1}-r^2) \nabla(r^{n-1} Y^m_{n-1}(\hat{\Bx})),
\]
where
\[
\gamma^m_n = \frac{2(n-1)\lambda + 2(3n-2)\mu}{(n+2)\lambda + (n+4)\mu}, \quad n\geqslant 1, \quad  -(n-1)\leqslant m\leqslant n-1,
\]
which are solutions to the Lam\'e $\mathcal{L}_{{\lambda},{\mu}} \mathbf{u} =0$ in $\mathbb{R}^d$.
By straight forward computations, one can also verify that
\[
{\mathcal{\widetilde{T}}}^m_n(\Bx)=\nabla (r^{-n-1} Y^m_{n}(\hat{\Bx}))\times \Bx, \quad n\geqslant 1, \quad  -n\leqslant m\leqslant n,
\]
\[
{\mathcal{\widetilde{M}}}^m_n(\Bx)=\tilde\gamma^m_n r^{-n-2} Y^m_{n-1}(\hat{\Bx})\Bx + (1+\frac{\tilde\gamma_n^m}{2n+1}-r^2) \nabla(r^{-n-2} Y^m_{n-1}(\hat{\Bx})),\quad n\geqslant 1, \quad  -(n-1)\leqslant m\leqslant n-1,
\]
where
\[
\tilde\gamma^m_n = \frac{2(n+1)\lambda + (6n+11)\mu}{(n-1)\lambda + (n-5)\mu},
\]
and
\[
{\mathcal{\widetilde{N}}}^m_n(\Bx)=\nabla(r^{-n-1} Y^m_{n}(\hat{\Bx})), \quad n\geqslant 1,\quad  -n\leqslant m\leqslant n,
\]
are solutions to the elastic system $\mathcal{L}_{{\lambda},{\mu}} \mathbf{u} =0$ in $\mathbb{R}^d$ excluding the origin.

Since $\{\mathcal{T}_n^m, \mathcal{M}_n^m, \mathcal{N}_n^m\}$ forms a complete system, we can suppose that the background field $\mathbf{H}$ is represented by
\begin{equation}\label{eq:defH0101}
\mathbf{H}= \sum_{n=1}^{\infty}\sum_{m=-n}^n (\mathfrak{t}_{n,m}^0  \mathcal{T}_{n}^{m}+\mathfrak{m}_{n,m}^0  \mathcal{M}_{n}^{m}+\mathfrak{n}_{n,m}^0  \mathcal{N}_{n}^{m}).
\end{equation}
Then the displacement field $\mathbf{u}$ can be written as the following form:
\begin{equation}\label{eq:defu0101}
\mathbf{u}=\left\{
\begin{aligned}
&\sum_{n=1}^{\infty}\sum_{m=-n}^n(\mathfrak{t}_{n,m}^N  \mathcal{T}_{n}^{m}+\mathfrak{m}_{n,m}^N  \mathcal{M}_{n}^{m}+\mathfrak{n}_{n,m}^N  \mathcal{N}_{n}^{m}), & \Bx&\in D_N,\\
&\sum_{n=1}^{\infty}\sum_{m=-n}^n(\mathfrak{t}_{n,m}^j  \mathcal{T}_{n}^{m}+\mathfrak{m}_{n,m}^j  \mathcal{M}_{n}^{m}+\mathfrak{n}_{n,m}^j  \mathcal{N}_{n}^{m})& &\\
&+ \sum_{n=1}^{\infty}\sum_{m=-n}^n({\mathfrak{\tilde t}}_{n,m}^j  {\mathcal{\widetilde T}}_{n}^{m}+{\mathfrak{\tilde m}}_{n,m}^j  {\mathcal{\widetilde M}}_{n}^{m}+{\mathfrak{\tilde n}}_{n,m}^j  {\mathcal{\widetilde N}}_{n}^{m}), & \Bx&\in D_j,\quad  j=N-1, N-2, \ldots , 1,\\
&\mathbf{H}+\sum_{n=1}^{\infty}\sum_{m=-n}^n({\mathfrak{\tilde t}}_{n,m}^0  {\mathcal{\widetilde T}}_{n}^{m}+{\mathfrak{\tilde m}}_{n,m}^0  {\mathcal{\widetilde M}}_{n}^{m}+{\mathfrak{\tilde n}}_{n,m}^0  {\mathcal{\widetilde N}}_{n}^{m}), & \Bx&\in D_{0}.
\end{aligned}
\right.
\end{equation}
By using transmission conditions on each layer and making use of the orthogonality of the vectorial spherical harmonic functions, one can separate each term above and form different transmission equations. In order to ease the exposition, we next mainly consider the case that the background field contains only one type of modes with $n=1$ and $m=0$; see Remark~\ref{rem2.102} for more related discussion about this point.

Suppose the background field is uniformly distributed in $\RR^3$, i.e. the field $\mathbf{H}$ can be represented as
\begin{equation}\label{eq:defH01}
\mathbf{H}=  a_{0}  \mathcal{N}_{1}^{0}(\Bx).
\end{equation}
Based on the symmetric properties of the multi-layer structure, we assume that the displacement field $\mathbf{u}$ takes the form
\[
\mathbf{u}=\left\{
\begin{aligned}
&a_{N}\mathcal{N}_{1}^{0}, & \Bx&\in D_N,\\
&a_{j}\mathcal{N}_{1}^{0}+r^{-3}b_{j}\mathcal{N}_{1}^{0}, & \Bx&\in D_j,\quad  j=N-1, N-2, \ldots , 1.\\
&a_{0}\mathcal{N}_{1}^{0}+r^{-3}b_{0}\mathcal{N}_{1}^{0}, & \Bx&\in D_{0},
\end{aligned}
\right.
\]
The solution $\mathbf{u}$ satisfies the transmission conditions
\begin{equation}\label{TC}
\left.\mathbf{u}\right|_{-}=\left.\mathbf{u}\right|_{+} \quad \text { and }\left.\quad \varepsilon_j \frac{\partial \mathbf{u}}{\partial \bm{\nu}}\right|_{-}=\left.\varepsilon_{j-1} \frac{\partial \mathbf{u}}{\partial \bm{\nu}}\right|_{+} \quad \text { on }\quad S_j,
\end{equation}
for $j=1,2,\ldots,N$, where
the conormtal derivative (or traction) on $S_j$ is defined by
\begin{equation}\label{traction}
	\frac{\partial \mathbf{u}}{\partial \bm{\nu}}=\lambda(\nabla \cdot \mathbf{u}) \bm{\nu}+\mu\left(\nabla \mathbf{u}+\nabla \mathbf{u}^{T}\right) \bm{\nu}
\end{equation}
where $\bm{\nu}$ is the exterior unit normal vector to $S_j$
and
\[
\left.v\right|_{\pm}(\Bx)=\lim _{h \rightarrow 0^{+}} v(\boldsymbol{\Bx} \pm h \bm{\nu}), \quad \Bx \in S_j,
\]
for an arbitrary function $v$.
In what follows, we define
\begin{equation}\label{eq:deflamb01}
\beta_{j}=\frac{\frac{4\mu}{3\lambda+2\mu}\varepsilon_{j-1}+\varepsilon_{j}}{\varepsilon_{j-1}-\varepsilon_{j}}, \quad j=1,2,\ldots,N.
\end{equation}
Now we give the representation of the perturbed field $\mathbf u-\mathbf{H}$ outside the multi-layer structure.
\begin{thm}\label{th:solmain01}
	Suppose $\mathbf u$ is the solution to the Lam\'e system \eqref{eq:aj}--\eqref{eq:mainmd01} in $\RR^3$, with the parameter $\varepsilon_c$ given by  \eqref{eq:paracho01}. Let $\mathbf H$ be given by \eqref{eq:defH01}. Define the \emph{elastic momentum matrix} $\mathbb{P}_N$ as follows.
	\begin{equation}\label{eq:matP01}
	\mathbb{P}_{N}:= \begin{bmatrix}
	\beta_1 & -1 & -1 & \cdots& -1 \\
	\frac{4\mu}{3\lambda+2\mu}(r_{2}/r_1)^3 & \beta_{2} & -1 &\cdots & -1 \\
	\vdots & \vdots & \vdots & \ddots & \vdots\\
	\frac{4\mu}{3\lambda+2\mu}(r_{N-1}/r_{1})^3 & \frac{4\mu}{3\lambda+2\mu}(r_{N-1}/r_{2})^3 & \frac{4\mu}{3\lambda+2\mu}(r_{N-1}/r_{3})^3 & \cdots & -1 \\
	\frac{4\mu}{3\lambda+2\mu}(r_{N}/r_1)^3 & \frac{4\mu}{3\lambda+2\mu}(r_{N}/r_2)^3 & \frac{4\mu}{3\lambda+2\mu}(r_{N}/r_{3})^3 &\cdots & \beta_{N}
	\end{bmatrix}
	\end{equation}
	If $\mathbb{P}_N$ is invertible, then the Lam\'e system \eqref{eq:mainmd01} is uniquely solvable with the solution given by the following formula:
	\begin{equation}\label{eq:purbmn01}
	\mathbf u-\mathbf{H}=r^{-3}\bm{e}^T \Upsilon_{N} (\mathbb{P}_{N}^T)^{-1}\bm{e}\mathbf{H},
	\end{equation}
	where $\bm{e}:=(1,1,\ldots,1)^T$, and
	\begin{equation}\label{eq:matU01}
	\Upsilon_{N}:= \begin{bmatrix}
	r_1^3 & 0 & 0 &\cdots& 0 \\
	0 & r_{2}^3 & 0 &\cdots& 0 \\
	0 & 0 & r_{3}^3 &\cdots& 0 \\
	\vdots & \vdots & \vdots & \ddots & \vdots\\
	0 & 0 & 0 &\cdots & r_{N}^3
	\end{bmatrix} .
	\end{equation}
\end{thm}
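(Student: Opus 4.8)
The plan is to collapse the vector transmission problem onto a single scalar mode, assemble the resulting interface relations into a linear system governed by $\mathbb{P}_N$, and then extract the exterior coefficient by a telescoping identity. First I would use that for $n=1$, $m=0$ the two radial solutions are elementary. Directly from its definition, $\mathcal{N}_1^0=\gamma_1^0 Y_0^0\,\Bx$ is a constant multiple of the position field $\Bx$, while $r^{-3}\mathcal{N}_1^0$ is a multiple of $r^{-3}\Bx=-\nabla(1/r)$; both solve $\mathcal{L}_{\lambda,\mu}\Bu=0$. Evaluating the traction \eqref{traction} on a sphere with $\bm{\nu}=\Bx/r$ gives $(3\lambda+2\mu)\bm{\nu}$ for the field $\Bx$ and $-4\mu r^{-3}\bm{\nu}$ for $r^{-3}\Bx$. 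Since only one mode is present, no orthogonality is needed: the common angular factor cancels and the transmission conditions \eqref{TC} reduce, at each interface $S_k$, to the two scalar identities $a_{k-1}+r_k^{-3}b_{k-1}=a_k+r_k^{-3}b_k$ (displacement) and $\varepsilon_{k-1}(a_{k-1}-\kappa r_k^{-3}b_{k-1})=\varepsilon_k(a_k-\kappa r_k^{-3}b_k)$ (scaled traction), where $\kappa:=\tfrac{4\mu}{3\lambda+2\mu}$, with $b_N=0$ enforcing regularity at the origin and $a_0$ prescribed by $\mathbf{H}$.

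Next I would eliminate the inner regular coefficient $a_k$ between the two scalar conditions at $S_k$. Solving the displacement equation for $a_k$ and substituting into the traction equation produces, after dividing by $\varepsilon_{k-1}-\varepsilon_k$, the clean relation $a_{k-1}=\beta_k r_k^{-3}b_{k-1}+(\kappa-\beta_k)r_k^{-3}b_k$; the contrast parameter of \eqref{eq:deflamb01} appears here because the coefficient of $b_{k-1}$ is exactly $\frac{\kappa\varepsilon_{k-1}+\varepsilon_k}{\varepsilon_{k-1}-\varepsilon_k}=\beta_k$. Together with the displacement condition, this relation is equivalent to the original pair, so the whole transmission system is captured by these $N$ relations plus the displacement identities.

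I would then introduce the variables $Q_k:=(b_{k-1}-b_k)/r_k^3$ for $k=1,\dots,N$. The displacement conditions read $a_k-a_{k-1}=Q_k$, hence $a_{k-1}=a_0+\sum_{i<k}Q_i$; telescoping against $b_N=0$ gives $b_k=\sum_{i>k}r_i^3 Q_i$ and $b_{k-1}=\sum_{i\ge k}r_i^3 Q_i$. Substituting these into the relation above, the factors $r_i^3$ cancel against $r_k^{-3}$ and the coefficient of $\sum_{i>k}$ collapses from $\beta_k+(\kappa-\beta_k)$ to $\kappa$, leaving $-\sum_{i<k}Q_i+\beta_k Q_k+\kappa\sum_{i>k}(r_i/r_k)^3 Q_i=a_0$. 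This is precisely the $k$-th component of $\mathbb{P}_N^{T}\mathbf Q=a_0\bm{e}$, so the dense lower-triangular entries $\frac{4\mu}{3\lambda+2\mu}(r_i/r_k)^3$, the diagonal $\beta_k$, and the upper entries $-1$ all emerge from this one substitution.

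Finally, invertibility of $\mathbb{P}_N$ (equivalently $\mathbb{P}_N^{T}$) yields $\mathbf Q=a_0(\mathbb{P}_N^{T})^{-1}\bm{e}$, and back-substitution reconstructs $b_{N-1},\dots,b_0$ from $Q_k$ and $b_N=0$ and then all $a_j$, giving existence and uniqueness. One further telescoping recovers the exterior coefficient: $b_0=\sum_{k=1}^N(b_{k-1}-b_k)=\sum_k r_k^3 Q_k=\bm{e}^{T}\Upsilon_N\mathbf Q=a_0\,\bm{e}^{T}\Upsilon_N(\mathbb{P}_N^{T})^{-1}\bm{e}$, so that $\Bu-\mathbf{H}=r^{-3}b_0\mathcal{N}_1^0=r^{-3}(b_0/a_0)\mathbf{H}$ is exactly \eqref{eq:purbmn01}. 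I expect the main obstacle to lie in the third step: checking that the two telescoping sums recombine to give precisely the dense structure of $\mathbb{P}_N$ — with the correct radius ratios, the diagonal $\beta_k$, and the transpose — while keeping the inner/outer orientation at each $S_k$ consistent so that the indices on $\varepsilon_{k-1},\varepsilon_k$ and on $r_i/r_k$ come out correctly.
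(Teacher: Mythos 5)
Your proposal is correct and follows essentially the same route as the paper's own proof: the same transmission relations \eqref{eq:trans01}, the same difference variables (your $Q_k$ are the paper's $a_k-a_{k-1}$) leading to the system $\mathbb{P}_N^T\mathbf{Q}=a_0\bm{e}$, and the same telescoping identity $b_0=\sum_k r_k^3 Q_k$ (the paper's \eqref{eq:link_a_b}) to extract the exterior coefficient. Your explicit traction computations for $\Bx$ and $r^{-3}\Bx$ and the intermediate elimination of $a_k$ merely spell out what the paper calls ``basic arrangements.''
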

\begin{rem}\label{rem2.1}
It is clear that the invertibility of the momentum matrix $\mathbb{P}_N$ is equivalent to the well-posedness of the Lam\'e system \eqref{eq:mainmd01}. If $\Re\varepsilon>0$, namely $\mathbf{C}_{\tilde\lambda,\tilde\mu}$ is a regular elastic medium, then $\mathcal{L}_{\tilde\lambda, \tilde\mu}$ is an elliptic PDO (Partial Differential Operator), which guarantees the well-posedness of \eqref{eq:mainmd01} and hence the invertibility of $\mathbb{P}_N$. On the other hand, if for some layers the elastic materials are negative, the ellipticity of $\mathbb{P}_N$ might be broken. Nevertheless, if for those negative materials, say $\varepsilon_j$, if one has $\Im\varepsilon_j>0$, then the ellipticity of $\mathcal{L}_{\tilde\lambda,\tilde\mu}$ is retained and hence the invertibility of $\mathbb{P}_N$. This can be seen in our subsequent analysis.
\end{rem}
\begin{rem}\label{rem2.102}
We remark that we only consider the uniform background field case to derive the exact perturbation formula \eqref{eq:purbmn01}. However, even if the background field contains more modes, one should be able to derive the related perturbation formula by using the form \eqref{eq:defu0101}. For an illustration, if $\mathbf{H}= \mathfrak{t}_{n,m}^0  \mathcal{T}_{n}^{m}$ for some specific $m$ and $n$, then the displacement field $\mathbf{u}$ can be written by
\[
\mathbf{u}=\left\{
\begin{aligned}
&\mathfrak{t}_{n,m}^N  \mathcal{T}_{n}^{m}, & \Bx&\in D_N,\\
&\mathfrak{t}_{n,m}^j  \mathcal{T}_{n}^{m}+{\mathfrak{\tilde t}}_{n,m}^j  {\mathcal{\widetilde T}}_{n}^{m}, & \Bx&\in D_j,\quad  j=N-1, N-2, \ldots , 1,\\
&\mathfrak{t}_{n,m}^0  \mathcal{T}_{n}^{m}+{\mathfrak{\tilde t}}_{n,m}^0  {\mathcal{\widetilde T}}_{n}^{m}, & \Bx&\in D_{0}.
\end{aligned}
\right.
\]
By using a similar strategy as the proof of Theorem \ref{th:solmain01} in what follows, one can derive the exact formula for the coefficient ${\mathfrak{\tilde t}}_{n,m}^0$. If the background field contain other modes, say e.g. $\mathcal{M}_{n}^{m}$, the representation formula can be more complicated, but it will not bring any essential difficulty. For the sake of simplicity and to better express our main idea, we only study the case for a uniformly distributed background field.
\end{rem}
\subsection{Proof of Theorem \ref{th:solmain01}}
By using the transmission conditions \eqref{TC} across the interface $S_j$, $j=1, 2, \ldots N$, we can deduce the following equations
\begin{equation}\label{eq:trans01}
\left\{
\begin{aligned}
&a_{j}  + b_{j}r_j^{-3} = a_{j-1} + b_{j-1}r_{j}^{-3} , \\
&\varepsilon_{j} \left((3\lambda+2\mu) a_{j}  - 4\mu b_{j} r_j^{-3}
\right) = \varepsilon_{j-1} \left((3\lambda+2\mu) a_{j-1}  - 4\mu b_{j-1} r_{j}^{-3}
\right)
\end{aligned}
\right.
\end{equation}
where we set $b_{N}=0$.
By using \eqref{eq:deflamb01} and some basic arrangements to the equations \eqref{eq:trans01}, we obtain that
\begin{align*}
&\beta_1(a_{1}-a_{0})+\frac{4\mu}{3\lambda+2\mu}\sum_{j=2}^{N} (a_{j}-a_{j-1})\Big(\frac{r_j}{r_1}\Big)^3=a_{0}, \\
&-\sum_{j=1}^{l-1}(a_{j}-a_{j-1})+\beta_l(a_{l}-a_{l-1})+\frac{4\mu}{3\lambda+2\mu}\sum_{j=l+1}^{N}(a_{j}-a_{j-1})\Big(\frac{r_j}{r_l}\Big)^3=a_{0}, \quad l=2,3,\ldots,N-1,\\
&-\sum_{j=1}^{N-1}(a_{j}-a_{j-1})+\beta_{N}(a_{N}-a_{N-1})=a_{0}.
\end{align*}
The matrix $\mathbb{P}_N$ is invertible in the case that the elastic material parameters $\varepsilon_j$, $j=1,2,\ldots N$, are all positive,  and then there holds that
\begin{equation}\label{eq:a}
\bm{a} = a_{0}(\Xi(\mathbb{P}_N^T)^{-1}\bm{e} +\bm{e}),
\end{equation}
where $\bm{a}:=(a_{1},a_{2},\ldots,a_{N})^T$ and $\Xi$ is defined by
\begin{equation}\label{eq:defxi01}
\Xi=
\begin{bmatrix}
1 & 0 & 0 &\cdots & 0 \\
1 & 1 & 0 &\cdots & 0 \\
\vdots & \vdots &\vdots &\ddots & \vdots\\
1 & 1  & 1 &\cdots & 1
\end{bmatrix}.
\end{equation}
Furthermore, by the first equation in \eqref{eq:trans01} and $b_N=0$, we have
\begin{equation}\label{eq:link_a_b}
\sum_{j=k}^{N}(a_j-a_{j-1})r_j^3=b_{k-1},\quad k=1,2,\ldots,N.
\end{equation}
Combining \eqref{eq:a} with \eqref{eq:link_a_b}, there holds
\begin{equation}\label{eq:b}
\bm b = a_{0}\Xi^T \Upsilon_{N} (\mathbb{P}_N^T)^{-1} \bm{e},
\end{equation}
where $\bm b:=(b_{0},b_{1},  \ldots ,b_{N-1})^T$. By extracting the first element in $\bm b$, we derive that
\[
\mathbf u-\mathbf{H}=r^{-3}b_{0}\mathcal{N}_{1}^{0}=r^{-3}a_{0}\bm{e}^T \Upsilon_{N} (\mathbb{P}_N^T)^{-1} \bm{e}\mathcal{N}_{1}^{0}=r^{-3}\bm{e}^T \Upsilon_{N} (\mathbb{P}_N^T)^{-1}\bm{e}\mathbf{H}.
\]
The proof is complete.
\section{An algebraic framework for polariton resonances}\label{sec3}
In this section, we shall establish an algebraic framework for polariton resonances. We mention that in order to analyze the polariton resonance phenomena, one should consider the situation that the elastic material is not regular, that is \eqref{eq:convexity condition} does not hold any more.
\subsection{Polariton resonance}
It is known that polariton resonance is usually relevant to some eigenvalue problem arising from the PDE systems. We shall also consider the related eigenvalue problem associated with elastostatics system for multi-layer structure.
According to our earlier discussion in Remark \ref{rem2.1}, if the parameters $\varepsilon_j$, $j=1, 2, \ldots, N$,  are all positive real-valued,  the elastostatic system \eqref{eq:aj}--\eqref{eq:mainmd01} has only trivial solution if $\mathbf{H}=0$. We seek non-trivial solutions to \eqref{eq:aj}--\eqref{eq:mainmd01} when the parameters  are allowed to be negative valued, i.e., $\varepsilon_{j}<0$ for some $j=1, 2, \ldots, N$.
In order to simplify the analysis, in our subsequent study, we always assume that
\[
\xi :=\frac{4\mu}{3\lambda+2\mu}\quad \mbox{and}\quad t^i_{j}:=(r_{j}/r_{i})^3, \quad i, j=1, 2, \ldots, N,
\]
and
\begin{equation}\label{eq:vepdef01}
\varepsilon_j
=\left\{
\begin{array}{ll}
-\varepsilon^*+\mathrm{i}\delta, & j \quad \mbox{is odd},\\
\varepsilon_0, & j \quad \mbox{is even},
\end{array}
\right.
\end{equation}
where $\varepsilon^*$ is a positive number to be chosen and $\delta>0$  is sufficiently small, signifying a lossy parameter. $\mathrm{i}=\sqrt{-1}$. Define
\begin{equation}\label{eq:defbeta011}
\beta=\frac{\xi\varepsilon_{0}-\varepsilon^*+\mathrm{i}\delta}{\varepsilon_{0}+\varepsilon^*-\mathrm{i}\delta}.
\end{equation}
Then one can readily see that
\[
\beta_j=
\left\{
\begin{array}{ll}
\beta, & j \quad \mbox{is odd},\\
\xi-1-\beta, & j \quad \mbox{is even},
\end{array}
\right.
\]
for $j=1,2,\ldots, N.$
Hence, the perturbed field \eqref{eq:purbmn01} can be rewritten by
\begin{equation}\label{eq:eigenspan01}
\mathbf u-\mathbf H=r^{-3}\bm{e}^T \Upsilon_{N} (\beta I-\mathbb{K}_{N}^T)^{-1} \tilde{\bm{e}}\mathbf H,
\end{equation}
where $\tilde{\bm{e}}:=(1,-1, 1, \dots, (-1)^{N-1})^T$ and the matrix $\mathbb{K}_N$ is given by
\begin{equation}
\mathbb{K}_N= \begin{bmatrix}
0 & -1 & 1 & \cdots& (-1)^{N-2}& (-1)^{N-1} \\
-\xi t^1_{2} & \xi-1 & 1 &\cdots & (-1)^{N-2}& (-1)^{N-1} \\
-\xi t^1_{3} & \xi t^2_{3} & 0 &\cdots & (-1)^{N-2}& (-1)^{N-1}\\
\vdots & \vdots & \vdots & \ddots & \vdots& \vdots\\
-\xi t^1_{N-1} & \xi t^2_{N-1} & -\xi t^3_{N-1} & \cdots & (\xi-1+(-1)^{N-1}(\xi-1))/2& (-1)^{N-1} \\
-\xi t^1_{N} & \xi t^2_{N} & -\xi t^3_{N} &\cdots & (-1)^{N}\xi t^{N-1}_{N}& (\xi-1+(-1)^{N}(\xi-1))/2
\end{bmatrix}.
\end{equation}

Next, we  give the definition of the polariton resonance for our subsequent study.
\begin{definition}\label{df:def01}
	Consider the system \eqref{eq:mainmd01} associated with the $N$-layer structure $D$, where the elastic material configuration is described in \eqref{eq:vepdef01}. Polariton resonance occurs if there exists elastic material configuration such that, in the limiting case as the loss parameter $\delta$ goes to zero, the corresponding PDO $\mathcal{L}_{\tilde{\lambda},\tilde{\mu}}$  possesses non-trivial kernel which in turn induces resonance.
\end{definition}
\begin{thm}\label{th:resonancedefth01}
Suppose $\mathbf u$ is the solution to the Lam\'e system \eqref{eq:aj}--\eqref{eq:mainmd01} in $\RR^3$, with the parameter $\varepsilon_c$ be given in \eqref{eq:vepdef01}. If there holds
$$
\lim_{\delta\rightarrow 0} \beta = \beta^*,
$$
where $\beta^*$ is an eigenvalue of $\mathbb{K}_N$, then the polariton resonance occurs.
\end{thm}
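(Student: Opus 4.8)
The plan is to reduce the occurrence of polariton resonance, as formulated in Definition~\ref{df:def01}, to the purely algebraic statement that the momentum matrix $\mathbb{P}_N$ degenerates in the lossless limit, and then to exhibit the resulting kernel element explicitly. The starting point is an identity already implicit in the passage from \eqref{eq:purbmn01} to \eqref{eq:eigenspan01}: under the configuration \eqref{eq:vepdef01} one has the factorization $\mathbb{P}_N=(\beta I-\mathbb{K}_N)\bm{S}$, where $\bm{S}=\mathrm{diag}(1,-1,\ldots,(-1)^{N-1})$, so that $\tilde{\bm{e}}=\bm{S}\bm{e}$. First I would verify this by splitting $\mathbb{P}_N=\beta\bm{S}+\mathbb{M}_0$ according to the dependence on $\beta$ (the diagonal $\beta_j$ contributes $+\beta$ in odd rows and $-\beta$ in even rows, while all off-diagonal entries are $\beta$-independent) and then checking entrywise that $\mathbb{M}_0=-\mathbb{K}_N\bm{S}$, using $(-\mathbb{K}_N\bm{S})_{ji}=(-1)^i(\mathbb{K}_N)_{ji}$ against the three cases $i<j$, $i>j$, $i=j$ in the definition of $\mathbb{K}_N$. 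Since $\det\bm{S}=\pm1$, this gives $\det\mathbb{P}_N=\pm\det(\beta I-\mathbb{K}_N)$, so that $\mathbb{P}_N$ is singular precisely when $\beta$ is an eigenvalue of $\mathbb{K}_N$ (equivalently of $\mathbb{K}_N^T$).

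Next I would pass to the lossless limit. The scalar $\beta=\beta(\delta)$ in \eqref{eq:defbeta011} is continuous in $\delta$, and by hypothesis $\beta(\delta)\to\beta^*$ as $\delta\to0^+$ with $\beta^*$ an eigenvalue of $\mathbb{K}_N$; hence $\det(\beta^*I-\mathbb{K}_N)=0$ and $\mathbb{P}_N$ becomes singular in the limit. By Remark~\ref{rem2.1}, for $\delta>0$ the condition $\Im\varepsilon_j=\delta>0$ keeps $\mathcal{L}_{\tilde\lambda,\tilde\mu}$ elliptic, so $\mathbb{P}_N$ is invertible and the perturbed field \eqref{eq:eigenspan01} is finite; the substance of the theorem is that this invertibility is lost exactly at $\delta=0$, so that $\bm{e}^T\Upsilon_N(\beta I-\mathbb{K}_N^T)^{-1}\tilde{\bm{e}}$ diverges as $\beta\to\beta^*$, which is the physical signature of resonance.

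I would then convert the degeneracy of $\mathbb{P}_N$ into a genuine non-trivial kernel of $\mathcal{L}_{\tilde\lambda,\tilde\mu}$, as demanded by Definition~\ref{df:def01}. Setting the incident field to zero ($a_0=0$) in the rearranged transmission relations following \eqref{eq:trans01} reduces the problem to the homogeneous system $\mathbb{P}_N^T\bm{c}=\bm{0}$ for the jump vector $\bm{c}=(a_1-a_0,\ldots,a_N-a_{N-1})^T$; singularity of $\mathbb{P}_N$ furnishes a non-zero $\bm{c}\in\ker\mathbb{P}_N^T$. Propagating $\bm{c}$ through \eqref{eq:a} (with $\Xi$ invertible, so $\bm{a}=\Xi\bm{c}\neq\bm0$) and \eqref{eq:link_a_b} produces coefficients $\{a_j,b_j\}$ that are not all zero and a decaying far field (since $a_0=0$). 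This builds a non-trivial $\mathbf{u}$ assembled from the modes $\mathcal{N}_1^0$ satisfying $\mathcal{L}_{\tilde\lambda,\tilde\mu}\mathbf{u}=0$ in $\mathbb{R}^3$ with vanishing source — precisely the non-trivial kernel whose emergence in the limit $\delta\to0$ constitutes polariton resonance.

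The main obstacle I anticipate is not the linear algebra but the last translation step: at $\delta=0$ the medium contains negative material, so $\mathcal{L}_{\tilde\lambda,\tilde\mu}$ is no longer elliptic and the variational well-posedness invoked in Remark~\ref{rem2.1} does not directly certify that a singular $\mathbb{P}_N$ yields an honest kernel element. I would circumvent this by staying entirely inside the explicit modal ansatz, where the kernel field is constructed by hand from $\ker\mathbb{P}_N^T$ and its membership in $H^1_{loc}$ together with the transmission conditions \eqref{TC} is checked directly; the ellipticity issue is thereby confined to the \emph{interpretation} of ``kernel'' and never enters the existence argument. A secondary point requiring care is to confirm that the divergence of the field is not masked by a spurious cancellation in $\bm{e}^T\Upsilon_N(\beta I-\mathbb{K}_N^T)^{-1}\tilde{\bm{e}}$, which I would settle using the spectral resolution of $\mathbb{K}_N^T$ at the eigenvalue $\beta^*$ together with the non-degeneracy of the diagonal matrix $\Upsilon_N$.
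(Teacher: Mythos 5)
Your proposal is correct and takes essentially the same route as the paper: the paper's proof likewise constructs the kernel field by hand from the modal ansatz with zero background field, derives the homogeneous system $(\beta I-\mathbb{K}_N^T)\Xi^{-1}\bm{a}=\bm{0}$ from the transmission conditions, and concludes non-triviality from the vanishing of $|\beta^* I-\mathbb{K}_N^T|$ in the limit $\delta\to 0$ together with the invertibility of $\Xi$. Your explicit factorization $\mathbb{P}_N=(\beta I-\mathbb{K}_N)\bm{S}$ is merely a repackaging of what the paper uses implicitly in passing from \eqref{eq:purbmn01} to \eqref{eq:eigenspan01}, and your remarks on divergence of $\bm{e}^T\Upsilon_N(\beta I-\mathbb{K}_N^T)^{-1}\tilde{\bm{e}}$ concern the energy criterion of Definition \ref{df:def02} rather than the kernel criterion of Definition \ref{df:def01} actually invoked in the theorem, so they are harmless but not needed.
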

\begin{proof}
According to Definition \ref{df:def01}, one only need to verify that the corresponding PDO $\mathcal{L}_{\tilde{\lambda},\tilde{\mu}}$  possesses a non-trivial kernel as $\delta \rightarrow 0$. To this end, let us consider the following Lam\'e system:
\begin{equation}\label{eq:lametrivi01}
\left\{
\begin{array}{ll}
\mathcal{L}_{\tilde{\lambda},\tilde{\mu}} \mathbf{u} =0, & \mbox{in} \quad \RR^d,\\
\mathbf{u}=\Ocal(|\Bx|^{1-d}), & |\Bx|\rightarrow \infty.
\end{array}
\right.
\end{equation}
We shall search for the solution of the following form:
\[
\mathbf{u}=\left\{
\begin{aligned}
&a_{N}\mathcal{N}_{1}^{0}, & \Bx&\in D_N,\\
&a_{j}\mathcal{N}_{1}^{0}+r^{-3}b_{j}\mathcal{N}_{1}^{0}, & \Bx&\in D_j,\quad  j=N-1, N-2, \ldots , 1,\\
&r^{-3}b_{0}\mathcal{N}_{1}^{0}, & \Bx&\in D_{0}.
\end{aligned}
\right.
\]
By following a similar treatment as the proof of Theorem \ref{th:solmain01}, one can show that
\begin{align*}
&\beta_1(a_{1})+\frac{4\mu}{3\lambda+2\mu}\sum_{j=2}^{N} (a_{j}-a_{j-1})\Big(\frac{r_j}{r_1}\Big)^3=0, \\
&-\sum_{j=1}^{l-1}(a_{j}-a_{j-1})+\beta_l(a_{l}-a_{l-1})+\frac{4\mu}{3\lambda+2\mu}\sum_{j=l+1}^{N}(a_{j}-a_{j-1})\Big(\frac{r_j}{r_l}\Big)^3=0, \quad l=2,3,\ldots,N-1,\\
&-\sum_{j=1}^{N-1}(a_{j}-a_{j-1})+\beta_N(a_{N}-a_{N-1})=0.
\end{align*}
That is
\begin{equation}\label{eq:notri01}
(\beta I-\mathbb{K}_{N}^T)\Sigma^{-1}\bm{a}=0,
\end{equation}
where $\Sigma$ and $\bm{a}$ are defined in proof of Theorem \ref{th:solmain01}. By letting $\delta\rightarrow 0$, one has $|\beta I-\mathbb{K}_N^T|=0$ and thus the homogeneous linear equations \eqref{eq:notri01} has non-trivial solutions. Note that $\Sigma^{-1}\bm{a}=(a_1, a_2-a_1, \ldots, a_N- a_{N-1})$, one immediately has that $a_j$, $j=1, \ldots, N$ are not identically zero and thus the system \eqref{eq:lametrivi01} has non-trivial solutions.
\end{proof}

In the following, we shall find out the suitable elastic material configurations to make the PDO $\mathcal{L}_{\tilde{\lambda},\tilde{\mu}}$ possess a non-trivial kernel by establishing an algebraic framework.

\subsection{Resonant fields and energy blowup}
For $\mathbf{u},\mathbf{v}\in H^1_{\text{loc}}(\mathbb{R}^3)^3$ , we introduce a bilinear form associated with the Lam\'e parameters $\lambda,\mu$ by
\begin{equation}\label{eq:energy1}
\begin{aligned}
\langle\mathbf{u},\mathbf{v}\rangle_{\mathbb{R}^3}^{\lambda,\mu}:&=\int_{\mathbb{R}^3}\nabla^{s} \mathbf{u}(\Bx): {\mathbf{C}_{\lambda,\mu}} \nabla^{s} \mathbf{v}(\Bx)~ \mathrm{d} \mathbf{x}\\
&=\int_{\mathbb{R}^3}\big[\lambda(\nabla\cdot\mathbf{u}){(\nabla\cdot\mathbf{v})}(\Bx)+2\mu\nabla^s\mathbf{u}(\Bx):{\nabla^s\mathbf{v}(\Bx)} \big]~ \mathrm{d} \Bx
\end{aligned}
\end{equation}
where and also in what follows, $\mathbf{A}:\mathbf{B}=\sum_{i,j=1}^3 a_{ij}b_{ij}$ for two matrices $\mathbf{A}=(a_{ij})_{i,j=1}^3$ and $\mathbf{B}=(b_{ij})_{i,j=1}^3$. The corresponding quadratic form is defined by
\begin{equation}\label{quadratic}
\mathbf{E}_{\mathbb{R}^3}^{\lambda, \mu}(\mathbf{u}):=\langle\mathbf{u}, \mathbf{u}\rangle_{\mathbb{R}^3}^{\lambda, \mu}.
\end{equation}

The resonant field  demonstrates an energy blowup.
Next, similar to \cite{DLZ21,LL16}, we give the formal definition of the polariton resonance.
\begin{definition}\label{df:def02}
	Consider the system \eqref{eq:mainmd01} associated with the $N$-layer structure $D$, where the elastic material configuration is described in \eqref{eq:vepdef01}.
	Then polariton resonance occurs if the following condition is fulfilled:
	\[
	\lim_{\delta\rightarrow +0}
	\mathbf{E}_{\RR^3\setminus\overline{D}}^{\lambda, \mu}(\mathbf{u}-\mathbf{H}):=\lim_{\delta\rightarrow +0}
	\langle\mathbf{u}-\mathbf{H},\mathbf{u}-\mathbf{H}\rangle_{\RR^3\setminus\overline{D}}^{\lambda,\mu}=+\infty.
	\]
	where $\mathbf{E}_{\RR^3\setminus\overline{D}}^{\lambda, \mu}(\mathbf{u}-\mathbf{H})$	is given by \eqref{eq:energy1}--\eqref{quadratic} with the integration domain replaced by $\RR^3\setminus\overline{D}$.
\end{definition}

In view of Definition \ref{df:def02}, the polariton resonance mode is parallel to the elastic material configurations which make the parameter $\beta$ the eigenvalue of $\mathbb{K}_N$ as $\delta\rightarrow 0$. For this reason, we shall focus on the explicit computation of eigenvalues to the matrix $\mathbb{K}_N$, or the exact formula to the roots of $|\mathbb{P}_N|$.

\subsection{An algebraic framework}
In this subsection, we shall derive a precise connection between the polariton resonance and the choice of parameters $\varepsilon_c$ in the multi-layer structure $D$.
By observing the explicit formula of $|\mathbb{P}_N(\beta)|$, $4\leqslant N\leqslant 6$ (see Subsection \ref{subsec4.1} below for details), we  give the delicate result about the explicit formula of $|\mathbb{P}_N(\beta)|$ for all $N\in \mathbb{N}$ in the following theorem. For the convenience of description, we denote by $\mathbf{i}_{m}$ the multi-index $(i_1,i_2,\ldots,i_{m})$ and $\tau_{\mathbf{i}_{m}}$ its sign given by
\[
\tau_{\mathbf{i}_{m}}:=(-1)^{\sum_{j=1}^{m} i_j}.
\]
Moreover, we denote by $C^{i,m}_{n}$ the set of all combinations of $m$ out $n$, $m\leqslant n$, say e.g., for one combination
\[
(i_1, i_2, \ldots, i_{m})\in C^{i,m}_{n}\quad \mbox{satisfying} \quad
(i+1)\leqslant i_1,i_2,\ldots, i_{m}\leqslant (n+i),
\]  we order them in ascending way, that is, $i_1<i_2< \cdots< i_{m}$.
\begin{thm}\label{th:ellresult01}
Define
\[
q(\beta):=\beta^2-(\xi-1)\beta.
\]
Then for $N\in \mathbb{N}$, it holds that
\begin{equation}\label{eq:maincong01}
|\mathbb{P}_N(\beta)|=
\left\{
\begin{aligned}
&(-1)^L\left(\sum_{k=0}^L \xi^k q(\beta)^{L-k}\left(\sum_{\mathbf{i}_{2k}\in C_{N}^{0,2k}}\tau_{\mathbf{i}_{2k}}\prod_{l=1}^k t^{i_{2l-1}}_{i_{2l}}\right)\right), &  N=2L,  \\
&\beta(-1)^L\left(\sum_{k=0}^L \xi^k q(\beta)^{L-k}\left(\sum_{\mathbf{i}_{2k}\in C_{N}^{0,2k}}\tau_{\mathbf{i}_{2k}}\prod_{l=1}^k t^{i_{2l-1}}_{i_{2l}}\right)\right), & N=2L+1.
\end{aligned}
\right.
\end{equation}
\end{thm}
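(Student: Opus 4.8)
The plan is to prove \eqref{eq:maincong01} by induction on $N$, driven by a two-term recurrence for $P_N:=|\mathbb{P}_N(\beta)|$ that I would extract by elementary row reduction. A direct Leibniz expansion is unattractive: long cycles in the symmetric group a priori contribute high powers of $\xi$, and the content of the formula is precisely that these all cancel through the multiplicative (cocycle) structure $t^{i}_{k}=t^{i}_{j}t^{j}_{k}$ of the sub-diagonal. I prefer to encode this cancellation once, in the recurrence, rather than track it cycle by cycle.

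To derive the recurrence, write $\rho_i:=r_i^3$, so the sub-diagonal entry of $\mathbb{P}_N$ in position $(i,j)$, $i>j$, equals $\xi\rho_i/\rho_j$. The row operation $R_N\mapsto R_N-(\rho_N/\rho_{N-1})R_{N-1}$ simultaneously annihilates the first $N-2$ entries of the last row, leaving nonzero entries only in columns $N-1$ and $N$. Expanding along this row produces $P_{N-1}$ together with the minor obtained by deleting column $N-1$; that minor differs from $\mathbb{P}_{N-1}$ in a single corner entry and hence equals $P_{N-1}-(1+\beta_{N-1})P_{N-2}$ after one further cofactor step. The decisive simplification is that, under the alternating choice \eqref{eq:vepdef01}, the coefficients collapse via the parity identities $1-\xi+\beta_{N-1}=-\beta_N$, $\;\xi-\beta_{N-1}=1+\beta_N$, $\;1+\beta_{N-1}=\xi-\beta_N$, together with the invariance $q(\beta_N)=q(\beta)$ under $\beta\mapsto\xi-1-\beta$. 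This should yield
\[
P_N=\beta_N\bigl(1-t^{N-1}_N\bigr)P_{N-1}+t^{N-1}_N(\xi-q)P_{N-2},
\]
with base values $P_0=1$ and $P_1=\beta$, which I would cross-check against the explicit determinants of Subsection~\ref{subsec4.1}.

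Substituting the conjectured closed form into this recurrence, and abbreviating the matching sums by $\Phi_N^{(k)}:=\sum_{\mathbf{i}_{2k}\in C_N^{0,2k}}\tau_{\mathbf{i}_{2k}}\prod_{l=1}^k t^{i_{2l-1}}_{i_{2l}}$, the even/odd prefactors bookkeep automatically (using $\beta_N\beta=-q$), and matching the coefficient of each monomial $\xi^k q^{L-k}$ reduces the induction step, uniformly in the parity of $N$, to the single combinatorial identity
\[
\Phi_N^{(k)}=(1-t)\,\Phi_{N-1}^{(k)}+t\,\Phi_{N-2}^{(k)}-t\,\Phi_{N-2}^{(k-1)},\qquad t:=t^{N-1}_N.
\]
I would prove this by splitting $\Phi_N^{(k)}$ according to whether the largest index $N$ is used: omitting $N$ gives $\Phi_{N-1}^{(k)}$, while using $N$ forces it to be the right endpoint of the final pair, contributing $\sum_{j=1}^{N-1}(-1)^{j+N}t^{j}_{N}\,\Phi_{j-1}^{(k-1)}$. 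Pulling out the factor $t$ via $t^{j}_{N}=t^{j}_{N-1}t^{N-1}_{N}$, isolating the $j=N-1$ term, and recognizing the remaining $j\le N-2$ sum as exactly the ``largest-index-used'' part of $\Phi_{N-1}^{(k)}$ (namely $\Phi_{N-1}^{(k)}-\Phi_{N-2}^{(k)}$) collapses the expression to the right-hand side above.

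The main obstacle is the bookkeeping in this last step: the recurrence coefficients only ``see'' the single ratio $t^{N-1}_N$, whereas the target formula is a fully symmetric sum over all ratios $t^{i}_{j}$, so one must verify that the cocycle relation regenerates every needed $t^{i}_{N}$ with the correct sign $\tau_{\mathbf{i}_{2k}}$. Aligning the signs $(-1)^{j+N}$ and the superscript shift $(k)\mapsto(k-1)$ is where the argument is delicate; everything else is routine, and the explicitly computed cases $4\le N\le 6$ furnish a reliable guard against sign errors.
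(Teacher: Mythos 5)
Your proposal is correct, but it takes a genuinely different route from the paper's own proof. The paper (Lemma \ref{le:main01}) reduces $|\mathbb{P}_N|$ symmetrically at \emph{both} ends: elementary operations on the first and last rows/columns produce a four-term recursion in the interior blocks $|P^2_{N-1}|,|P^2_{N-2}|,|P^3_{N-1}|,|P^3_{N-2}|$, and the induction step then has to track four families of combinatorial sums $\mathfrak{A}_k,\mathfrak{B}_k,\mathfrak{C}_k,\mathfrak{D}_k$ (over $C^{1,2k}_{N-2}$, $C^{1,2k}_{N-3}$, $C^{2,2k}_{N-3}$, $C^{2,2k}_{N-4}$), several relations among them, and the evaluation of the shifted blocks at $\xi-1-\beta$, with the two parities of $N_0$ handled separately. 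You instead strip only the last row: the operation $R_N\mapsto R_N-(\rho_N/\rho_{N-1})R_{N-1}$, which exploits the cocycle structure $t^j_N=t^j_{N-1}t^{N-1}_N$, does yield the three-term recursion $P_N=\beta_N\bigl(1-t^{N-1}_N\bigr)P_{N-1}+t^{N-1}_N(\xi-q)P_{N-2}$ involving only leading principal minors; I checked the two steps you compress, namely the minor identity $\det M_{N,N-1}=P_{N-1}-(1+\beta_{N-1})P_{N-2}$ (linearity of the determinant in the corner entry) and the parity collapse $(\xi-\beta_{N-1})=1+\beta_N$, $(1+\beta_{N-1})=\xi-\beta_N$, $(1+\beta_N)(\xi-\beta_N)=\xi-q$. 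The coefficient matching then indeed reduces, uniformly in parity (using $\beta\beta_N=-q$ when $N$ is even), to the single identity $\Phi_N^{(k)}=(1-t)\Phi_{N-1}^{(k)}+t\Phi_{N-2}^{(k)}-t\Phi_{N-2}^{(k-1)}$, and your proof of it is sound: splitting on whether the index $N$ is used, the $j=N-1$ term of $\sum_{j}(-1)^{j+N}t^j_N\Phi_{j-1}^{(k-1)}$ gives $-t\Phi_{N-2}^{(k-1)}$, while the $j\le N-2$ terms reassemble, after factoring out $t$, into $-t\bigl(\Phi_{N-1}^{(k)}-\Phi_{N-2}^{(k)}\bigr)$. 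What each approach buys: yours is shorter and isolates all combinatorial content in one Pascal-type identity, never needing the shifted blocks $P^i_M$ or the substitution $\beta\mapsto\xi-1-\beta$; the paper's Lemma \ref{le:main01}, by contrast, is stated for arbitrary $\beta_1,\ldots,\beta_N$ (no alternating structure assumed), so it is more general, and its blocks $P^i_M$ are reused in the base-case computations \eqref{eq:rec0101}. One bookkeeping point to make explicit in a final write-up: adopt the conventions $\Phi_M^{(k)}=0$ for $k<0$ or $2k>M$, so that the boundary coefficients $k=0$ and $k=\lfloor N/2\rfloor$ in the matching step come out correctly.
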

\begin{rem}
Theorem \ref{th:ellresult01} shows  that if the number of layers, $N$, is odd then there exists a zero root $\beta =0$ to the equation  $|\mathbb{P}_N(\beta)|=0$.
The other roots are contained in a quadratic polynomial, whose constant terms are the roots of a polynomial equation of order $\lfloor N/2\rfloor$. Here we denote by $\lfloor t \rfloor$ the integer part of $t\in \RR$.
In other words, to solve the equation $|\mathbb{P}_N(\beta)|=0$, we first solve the equation $|\mathbb{P}_N(q)|=0$, which is a polynomial equation of order $\lfloor N/2\rfloor$, with respect to $q$.
We then solve the following quadratic equation
\[
\beta^2-(\xi-1)\beta-q=0
\]
to find the roots of $|\mathbb{P}_N(\beta)|=0$.
\end{rem}

We shall analyze all the roots to the polynomial equation $|\mathbb{P}_N(\beta)|=0$. Note that the explicit formula \eqref{eq:maincong01} of $|\mathbb{P}_N(\beta)|$  can also be written by the following compact form:
\begin{equation} \label{eq:maincong02}
|\mathbb{P}_N(\beta)|=
(-1)^{\lfloor N/2\rfloor}\beta^{N-2\lfloor N/2\rfloor}\left(\sum_{k=0}^{\lfloor N/2\rfloor} \xi^k q(\beta)^{{\lfloor N/2\rfloor}-k}\left(\sum_{\mathbf{i}_{2k}\in C_{N}^{0,2k}}\tau_{\mathbf{i}_{2k}}\prod_{l=1}^k t^{i_{2l-1}}_{i_{2l}}\right)\right).
\end{equation}
As already mentioned before, in order to ensure that the polariton resonance occurs, it is essential to find the roots of the polynomial
\begin{equation}\label{eq:deffq01}
f_N(q):=\sum_{k=0}^{\lfloor N/2\rfloor} \xi^k q^{{\lfloor N/2\rfloor}-k}\left(\sum_{\mathbf{i}_{2k}\in C_{N}^{0,2k}}\tau_{\mathbf{i}_{2k}}\prod_{l=1}^k t^{i_{2l-1}}_{i_{2l}}\right).
\end{equation}
To this end, we have the following elementary result on the roots of $f_N(q)$:
\begin{thm}\label{th:ellresult02}
Let $f_N(q)$ be defined in \eqref{eq:deffq01} and $q^*$ be the root to $f_N(q)$.
Then there exists $\lfloor N/2\rfloor$ real values of roots to $f_N(q)$. Moreover, we have
\begin{equation}\label{eq:thmainsp01}
q^*\in \left[-\frac{(\xi-1)^2}{4}, \xi\right]=\left[-\frac{(3\lambda-2\mu)^2}{4(3\lambda+2\mu)^2},\frac{4\mu}{3\lambda+2\mu}\right].
\end{equation}
\end{thm}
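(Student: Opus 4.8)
The plan is to recast the statement spectrally and reduce it to a reality-and-boundedness assertion about $\mathbb{K}_N$. By the factorization \eqref{eq:maincong02} of Theorem \ref{th:ellresult01}, we have $|\mathbb{P}_N(\beta)|=(-1)^{\lfloor N/2\rfloor}\beta^{\,N-2\lfloor N/2\rfloor}f_N\!\big(q(\beta)\big)$ with $q(\beta)=\beta^2-(\xi-1)\beta$, so every root $q^*$ of $f_N$ is the image under $q(\cdot)$ of a nonzero eigenvalue $\beta^*$ of $\mathbb{K}_N$, and conversely. Since $q(\cdot)$ folds $\mathbb{R}$ about its vertex $\beta=(\xi-1)/2$, where it attains its minimum $-(\xi-1)^2/4$, and since $q(-1)=q(\xi)=\xi$, one checks that $q$ maps $[-1,\xi]$ onto $[-(\xi-1)^2/4,\xi]$; moreover the two preimages $\beta$ of a root, solving $\beta^2-(\xi-1)\beta-q^*=0$, are real exactly when $q^*\ge-(\xi-1)^2/4$ and lie in $[-1,\xi]$ exactly when in addition $q^*\le\xi$. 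Thus the whole theorem is \emph{equivalent} to the single assertion that all eigenvalues of $\mathbb{K}_N$ are real and contained in $[-1,\xi]$, and the count of $\lfloor N/2\rfloor$ roots is then immediate from $\deg f_N=\lfloor N/2\rfloor$. I record two simplifications: once reality is known, the lower bound $q^*\ge-(\xi-1)^2/4$ is automatic, being merely the minimum of $q(\beta)$ over real $\beta$; so only reality and the upper bound $q^*\le\xi$ demand genuine work.

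For reality the plan is to produce a real \emph{symmetric} realization of $f_N$. Exploiting the quasiseparable structure of $\mathbb{P}_N$ — constant upper part $-1$ and rank-one lower part $\xi\,r_i^3/r_j^3$ — I would peel off successive layer-pairs to derive a three-term recurrence for the family $f_N$ as $L=\lfloor N/2\rfloor$ grows, of the shape $f_{2L}=(q-\alpha_L)f_{2L-2}-\gamma_L f_{2L-4}$, and then show that the coupling coefficients $\gamma_L$ are \emph{positive}, since they are assembled from $\xi>0$ and the ratios $t^i_j=(r_j/r_i)^3\in(0,1)$ forced by the monotone radii $r_1>\cdots>r_N$. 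By Favard's theorem this identifies $f_N=\det(qI_L-\mathcal{J}_N)$ with $\mathcal{J}_N$ a Jacobi (real symmetric tridiagonal) matrix, whence all $L$ roots are real and, by the interlacing property, simple. Conceptually this is the finite-dimensional shadow of the self-adjointness of the elastic transmission operator with respect to the positive-definite energy form $\langle\cdot,\cdot\rangle_{\mathbb{R}^3}^{\lambda,\mu}$ of \eqref{eq:energy1}: the strong convexity \eqref{eq:convexity condition} makes that form positive definite, the cross-layer energy couplings assemble into a positive-definite Gram matrix $G$, and the governing matrix is $G$-self-adjoint, hence has real spectrum.

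This reality step is \emph{the main obstacle}. The matrix $\mathbb{K}_N$ is not symmetric, and because the product of its two mirror entries satisfies $\mathbb{K}_{ij}\mathbb{K}_{ji}=(-1)^{\,i+j-1}\xi\,t^i_j$ for $i<j$, the sign alternates with the parity of $i+j$; consequently \emph{no} diagonal conjugation can symmetrize $\mathbb{K}_N$. The required symmetrizer is genuinely non-diagonal and must be manufactured from the cross-layer couplings, and it is precisely at the verification $\gamma_L>0$ (equivalently $G\succ0$) that the hypotheses $\mu>0$ and $3\lambda+2\mu>0$ enter decisively. I expect the bookkeeping in establishing the recurrence and proving positivity of its off-diagonal coefficients to be the most delicate part of the argument.

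For the upper bound it then suffices to show $\xi I_L-\mathcal{J}_N\succeq0$. After the harmless sign conjugation that renders the off-diagonal entries of $\mathcal{J}_N$ nonnegative, I would estimate its Gershgorin discs, using $0<t^i_j<1$ and $\xi>0$ to bound, row by row, the diagonal plus the radius by $\xi$; this confines the spectrum of $\mathcal{J}_N$ to $(-\infty,\xi]$ and gives $q^*\le\xi$. Combining this with the automatic lower bound yields $q^*\in[-(\xi-1)^2/4,\xi]$, and substituting $\xi=\tfrac{4\mu}{3\lambda+2\mu}$ together with $-(\xi-1)^2/4=-\tfrac{(3\lambda-2\mu)^2}{4(3\lambda+2\mu)^2}$ reproduces the stated interval \eqref{eq:thmainsp01}, completing the proof.
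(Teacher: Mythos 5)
Your opening reduction is correct and useful: via \eqref{eq:maincong02} the theorem is equivalent to the assertion that all eigenvalues of $\mathbb{K}_N$ are real and lie in $[-1,\xi]$, and once reality is known the lower bound $q^*\ge-(\xi-1)^2/4$ is indeed automatic. The problem is that everything after that reduction is a plan, not a proof, and the plan's central claims are never established. The existence of a three-term recurrence $f_{2L}=(q-\alpha_L)f_{2L-2}-\gamma_L f_{2L-4}$, the positivity $\gamma_L>0$, and equivalently the existence of a non-diagonal positive-definite symmetrizer $G$ are exactly the statements that carry the entire weight of the theorem, and you explicitly defer them (``I would peel off\dots'', ``I expect the bookkeeping\dots to be the most delicate part''). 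There is concrete reason to doubt the recurrence takes the clean Jacobi form you posit: the determinant recursion that the quasiseparable structure of $\mathbb{P}_N$ actually yields (Lemma \ref{le:main01} of the paper) is a four-term identity relating $|\mathbb{P}_N|$ to $|P^2_{N-1}|$, $|P^2_{N-2}|$, $|P^3_{N-1}|$ and $|P^3_{N-2}|$ --- sub-blocks trimmed from \emph{both} ends, i.e. a two-parameter family of determinants, not a single chain indexed by degree. Moreover, in the extreme regime $r_i=R+c_i$, $R\gg1$, Theorem \ref{co:ellresult02} gives $f_N(q)=(q-\xi)^{\lfloor N/2\rfloor}+\Ocal(1/R)$, so all roots coalesce at $q=\xi$; any Jacobi realization would then need off-diagonal entries degenerating to zero, which shows your claims of simple roots and strict positivity $\gamma_L>0$ cannot hold uniformly and are at best delicate. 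Finally, the Gershgorin step for $q^*\le\xi$ operates on the entries $\alpha_L,\gamma_L$ of a matrix you never construct, so it too is unsupported.

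For comparison, the paper proves the theorem by an entirely different, PDE-based route that sidesteps symmetrization altogether: assuming a root with $q^*\notin[-(\xi-1)^2/4,\xi]$ or $\Im q^*\neq0$, it takes a null vector $\By$ of $\mathbb{P}_N(\beta)^T$, builds from it an explicit nontrivial solution $\mathbf{u}$ of the transmission problem \eqref{eq:mainmd01ap01} with parameters $\varepsilon_j=\frac{\beta-\xi}{\beta+1}\varepsilon_0$ in the odd layers, and then applies Green's formula together with the transmission conditions to show $\int_{\RR^3}\varepsilon\,\nabla^s\mathbf{u}:\mathbf{C}_{\lambda,\mu}\nabla^s\mathbf{u}\,\mathrm{d}\Bx=0$; pointwise nonnegativity of the energy density (this is precisely where $\mu>0$ and $3\lambda+2\mu>0$ enter) then forces $\mathbf{u}\equiv\mathbf{0}$ in both cases ($\beta$ real outside $[-1,\xi]$ via well-posedness for positive materials, or $\Im\beta\neq0$ via taking imaginary parts), contradicting nontriviality. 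This is the rigorous implementation of the ``self-adjointness with respect to the energy form \eqref{eq:energy1}'' intuition in your second paragraph --- but the paper realizes it with explicit solutions and integration by parts rather than with a Gram matrix, and that is what makes it a complete proof rather than a program.
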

\begin{rem}\label{rem3.2}\
	\indent
	\begin{enumerate}
		\item[(i)] It follows from Theorem \ref{th:ellresult02} that the roots of $|\mathbb{P}_N(\beta)|$ are all real values. In fact, for any real solution $q^*$ to \eqref{eq:deffq01}, one can derive that
		the polynomial equation $|\mathbb{P}_N(\beta)|=0$ has two roots,
		\[
		\beta=\frac{\xi-1\pm\sqrt{(\xi-1)^2+4q^*}}{2},
		\]
		this, togrther with \eqref{eq:thmainsp01},  implies that
		\[
		\beta \in [-1, \xi]=\left[-1,\frac{4\mu}{{3\lambda+2\mu}}\right].
		\]
		\item[(ii)] We also remark that if the polariton resonance occurs, the parameters $\varepsilon_{2j-1}$, $j=1,2,\ldots,N$,
		defined in \eqref{eq:vepdef01} should have a negative real part, i.e., $\varepsilon^*>0$.
		Indeed, if $\varepsilon^*<0$, by the definition of $\beta$ in \eqref{eq:defbeta011} we have that
		\[
		\Re\beta\in (-\infty,-1)\cup(\xi,+\infty).
		\]
		Thus in such a case, we cannot have the polariton resonance since the spectrum of the matrix $\mathbb{K}_N$ lies in $[-1,\xi]$.
	\end{enumerate}

\end{rem}

Next, we shall consider that the radius of the layers are extreme large.
\begin{thm}\label{co:ellresult02}
Suppose $r_i=R+c_i$, where $R\gg 1$ and $c_i$ are regular constants, $i=1, 2, \ldots, N$. Then the polynomial  \eqref{eq:deffq01} can be rewritten as
\begin{equation}
f_N(q)=
\sum_{k=0}^{\lfloor N/2\rfloor} (-1)^k \xi^k q^{{\lfloor N/2\rfloor}-k}C_{\lfloor N/2 \rfloor}^k+\Ocal(1/R)=(q-\xi)^{\lfloor N/2 \rfloor}+\Ocal(1/R).
\end{equation}
\end{thm}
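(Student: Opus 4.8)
\emph{Proof plan.} The plan is to exploit the fact that under the scaling $r_i=R+c_i$ every ratio $t^i_j=(r_j/r_i)^3$ degenerates to $1$ as $R\to\infty$, so that $f_N(q)$ collapses to the same polynomial in $q$ but with its coefficients replaced by purely combinatorial signed sums. Concretely, I would write
\[
t^i_j=\Big(\frac{R+c_j}{R+c_i}\Big)^3=\Big(1+\frac{c_j-c_i}{R+c_i}\Big)^3=1+\mathcal{O}(1/R),
\]
and observe that each product $\prod_{l=1}^k t^{i_{2l-1}}_{i_{2l}}$ has exactly $k\leqslant\lfloor N/2\rfloor$ factors, a fixed $R$-independent number, so $\prod_{l=1}^k t^{i_{2l-1}}_{i_{2l}}=1+\mathcal{O}(1/R)$ uniformly over the finitely many index tuples in $C_N^{0,2k}$. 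Substituting into \eqref{eq:deffq01} then gives
\[
f_N(q)=\sum_{k=0}^{\lfloor N/2\rfloor}\xi^k q^{\lfloor N/2\rfloor-k}\,S_{2k}+\mathcal{O}(1/R),\qquad S_{2k}:=\sum_{\mathbf{i}_{2k}\in C_N^{0,2k}}\tau_{\mathbf{i}_{2k}},
\]
so that everything hinges on evaluating the signed combinatorial sum $S_{2k}$.

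The heart of the argument is the evaluation of $S_{2k}$, which I would carry out with a generating function. Since $C_N^{0,2k}$ ranges over all subsets $\{i_1<\cdots<i_{2k}\}\subseteq\{1,\ldots,N\}$ and $\tau_{\mathbf{i}_{2k}}=(-1)^{i_1+\cdots+i_{2k}}$, the quantity $S_{2k}$ is precisely the coefficient of $x^{2k}$ in $\prod_{i=1}^N\bigl(1+(-1)^i x\bigr)$. Splitting this product according to the parity of $i$ — the $\lceil N/2\rceil$ odd indices contribute factors $(1-x)$ and the $\lfloor N/2\rfloor$ even indices contribute factors $(1+x)$ — yields
\[
\prod_{i=1}^N\bigl(1+(-1)^i x\bigr)=(1-x)^{\lceil N/2\rceil}(1+x)^{\lfloor N/2\rfloor}.
\]
When $N=2L$ this equals $(1-x^2)^L$, and when $N=2L+1$ it equals $(1-x)(1-x^2)^L$; in the latter case the extra factor $(1-x)$ only produces odd-degree terms beyond $(1-x^2)^L$, so it is invisible to the coefficient of the even power $x^{2k}$. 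In both cases, therefore, $S_{2k}=(-1)^k C^k_{\lfloor N/2\rfloor}$, where $L=\lfloor N/2\rfloor$.

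Finally I would substitute this value back to obtain
\[
f_N(q)=\sum_{k=0}^{\lfloor N/2\rfloor}(-1)^k\xi^k q^{\lfloor N/2\rfloor-k}C^k_{\lfloor N/2\rfloor}+\mathcal{O}(1/R),
\]
and recognise the sum as the binomial expansion of $(q-\xi)^{\lfloor N/2\rfloor}$, which is exactly the claimed identity. The genuinely nontrivial step is the combinatorial identity $S_{2k}=(-1)^k C^k_{\lfloor N/2\rfloor}$; the main thing to be careful about is the parity bookkeeping — correctly counting $\lceil N/2\rceil$ odd versus $\lfloor N/2\rfloor$ even indices, and verifying that the leftover factor $(1-x)$ in the odd-$N$ case does not contribute to even-degree coefficients. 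By contrast, the $\mathcal{O}(1/R)$ control is routine, since the sums are finite and the expansion of each $t^i_j$ is uniform in the index tuples.
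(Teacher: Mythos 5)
Your proof is correct, and it takes a genuinely different route from the paper's at the combinatorial core. Both arguments begin identically: expanding $t^i_j=1+\mathcal{O}(1/R)$ uniformly over the finitely many index tuples (the paper's \eqref{extremecase}) reduces $f_N(q)$ to $\sum_{k=0}^{\lfloor N/2\rfloor}\xi^k q^{\lfloor N/2\rfloor-k}h_{N,2k}+\mathcal{O}(1/R)$, where $h_{N,2k}$ defined in \eqref{eq:coefg01} is exactly your $S_{2k}$. The difference lies in how the identity $h_{N,2k}=(-1)^kC^k_{\lfloor N/2\rfloor}$ is established. The paper devotes Lemma \ref{le:recco01} to recursion formulae for $h_{N,k}$ (e.g.\ $h_{2N+1,2k}=h_{2N,2k}$ and $h_{2N+2,2k}=h_{2N+1,2k}-h_{2N+1,2k-2}$), proved by induction from the basic relations $h_{N,k}=h_{N-1,k}+(-1)^Nh_{N-1,k-1}$ and $h_{2N,2k-1}=0$; these recursions mirror Pascal's rule for signed binomial coefficients, and the closed form is then asserted ``by elementary combination theory'' in \eqref{elmcombin}. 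You instead evaluate the signed sum in one stroke via the generating function $\prod_{i=1}^N\bigl(1+(-1)^ix\bigr)=(1-x)^{\lceil N/2\rceil}(1+x)^{\lfloor N/2\rfloor}$, which equals $(1-x^2)^L$ for $N=2L$ and $(1-x)(1-x^2)^L$ for $N=2L+1$, the leftover factor $(1-x)$ being invisible to even-degree coefficients. Your route is more self-contained and makes fully explicit precisely the step the paper leaves terse, whereas the paper's recursions additionally provide a scheme for computing the coefficients $h_{N,k}$ (including odd $k$) recursively, which is how it frames the calculation. Both arguments then conclude with the binomial theorem, so the two proofs are equally valid; yours is arguably the cleaner justification of the key identity.
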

\begin{rem}
This indicates that the possible polariton modes for such a set of multi-layer structure are only
$\beta=-1, \xi$ if the number of layers $N$ is even and $\beta=0, -1, \xi$ if $N$ is odd.
\end{rem}
\section{Proofs of Theorems \ref{th:ellresult01}--\ref{co:ellresult02}}\label{sec4}\
\newline
\indent
In this section, we study the polariton resonances for the elastostatic system, which in turn casts some light on deriving the explicit formula in Theorem \ref{th:ellresult01}, and hence facilitates the proofs of Theorems \ref{th:ellresult01}--\ref{co:ellresult02}.
\subsection{Eigenvalue problem}\label{subsec4.1}
As mentioned above, the polariton mode is parallel to the elastic material configurations which make the parameter $\beta$ the eigenvalue of $\mathbb{K}_N$ as $\delta\rightarrow 0$. So, we focus on the eigenvalues of the matrix $\mathbb{K}_N$, or the explicit formula  of $|\mathbb{P}_N|$ defined in \eqref{eq:matP01} in this subsection.
For this, we first define
\begin{equation}\label{eq:matP02}
P^i_{M}:= \begin{bmatrix}
\beta_i & -1 & -1 & \cdots& -1 \\
\frac{4\mu}{3\lambda+2\mu}(r_{i+1}/r_i)^3 & \beta_{i+1} & -1 &\cdots & -1 \\
\vdots & \vdots & \vdots & \ddots & \vdots\\
\frac{4\mu}{3\lambda+2\mu}(r_{M-1}/r_i)^3 & \frac{4\mu}{3\lambda+2\mu}(r_{M-1}/r_{i+1})^3 & \frac{4\mu}{3\lambda+2\mu}(r_{M-1}/r_{i+2})^3 & \cdots & -1 \\
\frac{4\mu}{3\lambda+2\mu}(r_{M}/r_i)^3 & \frac{4\mu}{3\lambda+2\mu}(r_{M}/r_{i+1})^3 & \frac{4\mu}{3\lambda+2\mu}(r_{M}/r_{i+2})^3 &\cdots & \beta_{M}
\end{bmatrix}
\end{equation}
and set $P^3_2=1$.

Next, we give the recursion formula for $|\mathbb{P}_N|$ in the following lemma.
\begin{lem}\label{le:main01}
	Let $N\geqslant 4$. Then there holds the following  recursion formula:
	\begin{equation}\label{eq:deter0102}
	\begin{aligned}
	|\mathbb{P}_N|&=\(\beta_1+(\beta_2-\xi+1)t^1_{2}\)\(\beta_{N}+(\beta_{N-1}-\xi+1)t^{N-1}_{N}\)|P^2_{N-1}| \\
	&\quad -\(\beta_1+(\beta_2-\xi+1)t^1_{2}\)(\beta_{N-1}+1)(\beta_{N-1}-\xi)t^{N-1}_{N}|P^2_{N-2}|\\
	&\quad-(\beta_{2}-\xi)t^1_{2}(\beta_{2}+1)\(\beta_{N}+(\beta_{N-1}-\xi+1)t^{N-1}_{N}\)|P^3_{N-1}|\\
	&\quad+(\beta_{2}-\xi)t^1_{2}(\beta_{2}+1)(\beta_{N-1}+1)(\beta_{N-1}-\xi)t^{N-1}_{N}|P^3_{N-2}|.
	\end{aligned}
	\end{equation}
\end{lem}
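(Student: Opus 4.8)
The plan is to turn the first and last rows and columns of $\mathbb{P}_N$ into $2$-sparse vectors by four determinant-preserving elementary operations, and then to peel off the two corners by iterated cofactor expansions. The structural facts that make this work are that the strictly upper entries of $\mathbb{P}_N$ all equal $-1$, that the strictly lower entries have the multiplicative form $(\mathbb{P}_N)_{ij}=\xi\, t^{j}_{i}$ for $i>j$, and that these ratios obey the semigroup identity $t^{i}_{k}\,t^{k}_{j}=t^{i}_{j}$.

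First I would apply the column operation $C_1\mapsto C_1-t^1_2 C_2$. By $t^1_2 t^2_i=t^1_i$ every entry of the first column below row $2$ cancels, leaving only $\beta_1+t^1_2$ in row $1$ and $t^1_2(\xi-\beta_2)$ in row $2$; the symmetric row operation $R_N\mapsto R_N-t^{N-1}_N R_{N-1}$, via $t^{N-1}_N t^{j}_{N-1}=t^{j}_{N}$, annihilates the last row to the left of column $N-1$. I would then use $R_1\mapsto R_1-R_2$ and $C_N\mapsto C_N-C_{N-1}$, which exploit the constancy of the upper part to clear the remaining interior entries of the first row and the last column; a direct computation shows that the two surviving corner entries are precisely
\[
\beta_1+(\beta_2-\xi+1)t^1_2\qquad\text{and}\qquad \beta_N+(\beta_{N-1}-\xi+1)t^{N-1}_N,
\]
the boundary factors of \eqref{eq:deter0102}. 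The essential bookkeeping point, which uses $N\geqslant 4$ so that $\{1,2\}\cap\{N-1,N\}=\varnothing$ and $2\leqslant N-2$, is that these four operations only overwrite the outer rows and columns and merely read the inner ones; hence the central block on the indices $2,\dots,N-1$ is preserved and still coincides with $P^2_{N-1}$.

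After this reduction the first column is supported on rows $1,2$ and the last row on columns $N-1,N$. I would expand along the first column. Its $(1,1)$ cofactor keeps the block on indices $2,\dots,N$ and carries the factor $\beta_1+(\beta_2-\xi+1)t^1_2$; in the $(2,1)$ cofactor the image of the old first row has the single nonzero entry $-(1+\beta_2)$, so one more expansion collapses the block to indices $3,\dots,N$ and yields the factor $-(\beta_2-\xi)t^1_2(\beta_2+1)$. Both resulting minors still carry the sparse bottom-right corner, so I would expand each along its last row and then along the sparse last column, exactly mirroring the top-left step. This produces the determinants $P^{s}_{N-1}$ and $P^{s}_{N-2}$ with $s=2$ and $s=3$, together with the factors $\beta_N+(\beta_{N-1}-\xi+1)t^{N-1}_N$ and $-(\beta_{N-1}+1)(\beta_{N-1}-\xi)t^{N-1}_N$. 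Substituting the bottom expansions into the top ones and collecting the four cross terms reproduces \eqref{eq:deter0102}.

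The step I expect to be most delicate is the sign bookkeeping across the nested expansions rather than any individual computation: one must check that the two crossing assignments (row $1\to$ column $2$ forcing column $1\leftarrow$ row $2$ at the top, and the analogous crossing at the bottom) each contribute $-1$, giving the sign pattern $+,-,-,+$ of the four terms. I would control this by performing honest single-line cofactor expansions and evaluating every sign $(-1)^{i+j}$ explicitly; because all active indices lie at the extreme ends, each minor is again a contiguous principal submatrix $P^{s}_{M}$ and the signs reduce to the elementary values $(-1)^{2+1}=-1$, $(-1)^{2m}=+1$ and $(-1)^{2m-1}=-1$ used above.
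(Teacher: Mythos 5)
Your proposal is correct and follows essentially the same route as the paper: the paper performs exactly the same four elementary operations (written there as $|\mathbf{E}_{N,(N-1)*(-t^{N-1}_N)}\mathbf{E}_{1,2*(-1)}\mathbb{P}_N\mathbf{E}^T_{1,2*(-t^1_2)}\mathbf{E}^T_{N,(N-1)*(-1)}|$), arriving at the same sparse bordered matrix with corner entries $\beta_1+(\beta_2-\xi+1)t^1_{2}$ and $\beta_N+(\beta_{N-1}-\xi+1)t^{N-1}_{N}$ and the untouched central block $P^2_{N-1}$, and then extracts the four terms by Laplace expansion. The only cosmetic difference is that you perform iterated single-line cofactor expansions with explicit sign bookkeeping, whereas the paper invokes the Laplace expansion theorem in one step.
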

\begin{proof}
Denote by $\mathbf{E}_{i,j*k}$  the elementary matrix which is transform of identity matrix by multiplying each element of the $j$-th row of the identity matrix by $k$, and then adding it to the $i$-th row.
Thus, by using some elementary translation, we can obtain
	\begin{equation}
	\begin{aligned}
	|\mathbb{P}_N|= 
	&\left|\mathbf{E}_{N,(N-1)*(-t^{N-1}_N)}\mathbf{E}_{1,2*(-1)}\mathbb{P}_N\mathbf{E}^T_{1,2*(-t^1_2)}\mathbf{E}^T_{N,(N-1)*(-1)}\right|\\
	=&\left|\begin{array}{ccccc}
	\beta_1+(\beta_2-\xi+1)t^1_{2} & -\beta_{2}-1 & \cdots& 0& 0 \\
	(\xi-\beta_{2})t^1_{2} & \beta_{2} &\cdots& -1 & 0 \\
	\vdots & \vdots & \ddots &\vdots & \vdots\\
	0 & \xi t^2_{N-1} & \cdots& \beta_{N-1} & -\beta_{N-1}-1 \\
	0 & 0 &\cdots& (\xi-\beta_{N-1})t^{N-1}_{N} & \beta_{N}+(\beta_{N-1}-\xi+1)t^{N-1}_{N}
	\end{array}
	\right|,
	\end{aligned}
	\end{equation}
	where we only changed the first row and column, together with the last row and column. By virtue of the notation \eqref{eq:matP02} and the Laplace expansion theorem for determinant one thus has that
	\begin{equation}
	\begin{aligned}
	|\mathbb{P}_N|=&\left(\beta_1+(\beta_2-\xi+1)t^1_{2}\right)\left(\beta_{N}+(\beta_{N-1}-\xi+1)t^{N-1}_{N}\right)|P^2_{N-1}| \\
	&-\left(\beta_1+(\beta_2-\xi+1)t^1_{2}\right)(\beta_{N-1}+1)(\beta_{N-1}-\xi)t^{N-1}_{N}|P^2_{N-2}|\\
	&-(\beta_{2}-\xi)t^1_{2}(\beta_{2}+1)\left(\beta_{N}+(\beta_{N-1}-\xi+1)t^{N-1}_{N}\right)|P^3_{N-1}|\\
	&+(\beta_{2}-\xi)t^1_{2}(\beta_{2}+1)(\beta_{N-1}+1)(\beta_{N-1}-\xi)t^{N-1}_{N}|P^3_{N-2}|
	\end{aligned}
	\end{equation}
	holds for all $N\geqslant 4$, $N\in \mathbb{N}$.
\end{proof}

Next, we shall study the dependence of the determinant of  $P^i_m$ defined in \eqref{eq:matP02}, $i\leqslant m$, on $\beta$.
By direct computations, we first have that for $m\geqslant 1$,
\begin{equation}\label{eq:rec0101}
\begin{aligned}
|P^m_{m+1}(\beta)|&=-\left(\beta^2 -(\xi-1)\beta\right)+\xi t^{m}_{m+1}, \\
|P^{2m-1}_{2m+1}(\beta)|&=\left(-\left(\beta^2 -(\xi-1)\beta\right)+\left(\xi t^{2m-1}_{2m}+\xi t^{2m}_{2m+1}-\xi t^{2m-1}_{2m+1}\right)\right)\beta,\\
|P^{2m}_{2m+2}(\beta)|&=(\xi-1-\beta)^2\beta+(\xi-1-\beta)\left(\xi t^{2m}_{2m+1}+\xi t^{2m+1}_{2m+2}-\xi t^{2m}_{2m+2}\right)\\
&= \left(-\left(\beta^2-(\xi-1)\beta\right)+\left(\xi t^{2m}_{2m+1}+\xi t^{2m+1}_{2m+2}-\xi t^{2m}_{2m+2}\right)\right)(\xi-1-\beta)
.
\end{aligned}
\end{equation}
Combining the recursion formula \eqref{eq:deter0102} with \eqref{eq:rec0101} we obtain that
\begin{equation}\label{p4}
\begin{aligned}
|\mathbb{P}_{4}(\beta)|&=\beta\(1-t^1_{2}\)(\xi-1-\beta)\(1-t^{3}_{4}\)\(\beta( \xi-1-\beta)+\xi t^{2}_{3}\)\\
&\quad  -\beta\(1-t^1_{2}\)(\beta+1)(\beta-\xi)t^{3}_{4}(\xi-1-\beta)-(\beta+1)t^1_{2}(\beta-\xi)(\xi-1-\beta)\(1-t^{3}_{4}\)\beta\\
&\quad +(\beta+1)t^1_{2}(\beta-\xi)(\beta+1)(\beta-\xi)t^{3}_{4}\\
&=(\beta^2-(\xi-1)\beta)^2
+\xi(\beta^2-(\xi-1)\beta)\sum_{(i,j)\in C_4^{0,2}}(-1)^{i+j}t^i_j+\xi^2 t^1_2t^3_4.
\end{aligned}
\end{equation}
In a similar manner, by straightforward (but lengthy and tedious) we have that
\begin{equation}
\begin{aligned}
|\mathbb{P}_{5}(\beta)|
&=
\beta\(1-t^1_{2}\)\beta\(1- t^{4}_{5}\)\left(-\(\beta^2-(\xi-1)\beta\)+\(\xi t^{2}_{3}+\xi t^{3}_{4}-\xi t^{2}_{4}\)\right)(\xi-1-\beta) \\
&\quad -\beta\(1-t^1_{2}\)(\beta-\xi)(\beta+1)t^{4}_{5}\(\beta( \xi-1-\beta)+\xi t^{2}_{3}\)\\
&\quad-(\beta+1)t^1_{2}(\beta-\xi)\beta\(1- t^{4}_{5}\)\(\beta( \xi-1-\beta)+\xi t^{3}_{4}\)\\
&\quad+(\beta+1)t^1_{2}(\beta-\xi)(\beta-\xi)(\beta+1)t^{4}_{5}\beta\\
&=\beta\left(\left(\beta^2-(\xi-1)\beta\right)^2+\xi\left(\beta^2- (\xi-1)\beta\right)\sum_{(i,j)\in C_5^{0,2}}(-1)^{i+j}t^i_j+\xi^{2}\sum_{(i,j,k,l)\in C_5^{0,4}}\tau_{(i,j,k,l)}t^i_jt^k_l\right).
\end{aligned}
\end{equation}
Similarly,
\begin{equation}\label{p6}
\begin{aligned}
|\mathbb{P}_{6}(\beta)|&=-\left(\beta^2-(\xi-1)\beta\right)^3-\xi\left(\beta^2-(\xi-1)\beta\right)^2\sum_{(i,j)\in C_6^{0,2}}(-1)^{i+j}t^i_j\\
&\quad -\xi^2\left(\beta^2-(\xi-1)\beta\right)\sum_{(i,j,k,l)\in C_6^{0,4}}\tau_{(i,j,k,l)}t^i_jt^k_l+\xi^3 t^1_2t^3_4t^5_6.
\end{aligned}
\end{equation}
Observing the above results \eqref{p4}--\eqref{p6} gives us a clue to derive the exact formula in Theorem \ref{th:ellresult01}.  Next, we prove it by mathematical induction.
\begin{proof}[Proof of Theorem \ref{th:ellresult01}]
We shall use induction to prove Theorem \ref{th:ellresult01}.
In view of \eqref{eq:maincong01} we can apply \eqref{eq:rec0101} to deduce that  \eqref{eq:maincong01} holds for $1\leqslant N\leqslant 3$.
Suppose that \eqref{eq:maincong01} holds for all $N\leqslant N_0$, $N_0\geqslant 4$, we show that it also holds for $N=N_0+1$. Note that $q(\xi-1-\beta)=q(\beta)$, in what follows, we shall denote $q(\beta)$ by $q$  for simplicity.
\begin{enumerate}
	\item [\textbf{Case i}] $N_0$ is even. Since \eqref{eq:maincong01}  and \eqref{eq:deter0102} hold for all $N\leqslant N_0$ and the  fact that $N=N_0+1$ is odd,  we obtain
	\begin{equation}
	\begin{aligned}
	|\mathbb{P}_N(\beta)|&=\beta^2\left(1-t^1_{2}\right)\left(1- t^{N-1}_{N}\right)|P^2_{N-1}(\xi-1-\beta)| +\beta\left(1-t^1_{2}\right)(\xi-\beta)(\beta+1)t^{N-1}_{N}|P^2_{N-2}(\xi-1-\beta)|\\
	&\quad+(\beta+1)t^1_{2}(\xi-\beta)\beta\left(1-t^{N-1}_{N}\right)|P^3_{N-1}(\beta)|+(\beta+1)^2t^1_{2}(\xi-\beta)^2t^{N-1}_{N}|P^3_{N-2}(\beta)|\\
	&=\sum_{j=1}^4 d_j \mathcal{D}_j,
	\end{aligned}
	\end{equation}
	where we use the notations
	\[
	\begin{aligned}
	d_1&=\beta^2\left(1-t^1_{2}\right)\left(1-t^{N-1}_{N}\right)(-1)^{N_0/2-1}(\xi-1-\beta),& d_2&=\beta\left(1-t^1_{2}\right)(\xi-\beta)(\beta+1)t^{N-1}_{N}(-1)^{N_0/2-1}, \\
	d_3&=(\beta+1)t^1_{2}(\xi-\beta)\beta\left(1-t^{N-1}_{N}\right)(-1)^{N_0/2-1}, & d_4&=(\xi-\beta)^2(\beta+1)^2t^1_{2}t^{N-1}_{N}(-1)^{N_0/2}\beta,
	\end{aligned}
	\]
	and
	\begin{equation}\label{DJ}
	\mathcal{D}_j=\left(\sum_{k=0}^{N_0/2-1-\lfloor j/4\rfloor} \xi^k q^{N_0/2-k-1-\lfloor j/4\rfloor}\left(\sum_{\mathbf{i}_{2k}\in C_{N-2-\lfloor j/2\rfloor}^{\lfloor (j+1)/2\rfloor,2k}}\tau_{\mathbf{i}_{2k}}\prod_{l=1}^k t^{i_{2l-1}}_{i_{2l}}\right)\right).
	\end{equation}
	From \eqref{DJ}, we set
	\[
	\begin{aligned}
	&\mathfrak{A}_k:=\sum_{\mathbf{i}_{2k}\in C_{N-2}^{1,2k}}\tau_{\mathbf{i}_{2k}}\prod_{l=1}^k t^{i_{2l-1}}_{i_{2l}},
	\quad \mathfrak{B}_k:=\sum_{\mathbf{i}_{2k}\in C_{N-3}^{1,2k}}\tau_{\mathbf{i}_{2k}}\prod_{l=1}^k t^{i_{2l-1}}_{i_{2l}},\\
	&\mathfrak{C}_k:=\sum_{\mathbf{i}_{2k}\in C_{N-3}^{2,2k}}\tau_{\mathbf{i}_{2k}}\prod_{l=1}^k t^{i_{2l-1}}_{i_{2l}},
	\quad \mathfrak{D}_k:=\sum_{\mathbf{i}_{2k}\in C_{N-4}^{2,2k}}\tau_{\mathbf{i}_{2k}}\prod_{l=1}^k t^{i_{2l-1}}_{i_{2l}}.
	\end{aligned}
	\]
	By direct computations, we deduce that
	\[
	\begin{aligned}
	|\mathbb{P}_N(\beta)|&=(-1)^{N_0/2}\beta\left(q\left(1-t^1_{2}\right)\left(1-t^{N-1}_{N}\right)\left(\sum_{k=0}^{N_0/2-1} \xi^k q^{N_0/2-k-1}\mathfrak{A}_k\right)\right.\\
	&
	\left.\quad\quad\quad\quad\quad+(q-\xi)\left(1-t^1_{2}\right)t^{N-1}_{N}\left(\sum_{k=0}^{N_0/2-1} \xi^k q^{N_0/2-k-1}\mathfrak{B}_k\right)\right.\\
	&
	\left.\quad\quad\quad\quad\quad+(q-\xi)t^1_{2}\left(1-t^{N-1}_{N}\right)\left(\sum_{k=0}^{N_0/2-1} \xi^k q^{N_0/2-k-1}\mathfrak{C}_k\right)\right.\\
	&
	\left.\quad\quad\quad\quad\quad+(q-\xi)^2t^1_{2}t^{N-1}_{N}\left(\sum_{k=0}^{N_0/2-2} \xi^k q^{N_0/2-k-2}\mathfrak{D}_k\right)\right)\\
	&=:(-1)^{N_0/2}\beta\left(\sum_{k=0}^{N_0/2}\xi^kq^{N_0/2-k}\mathrm{G}_k\right),
	\end{aligned}
	\]
	where it can be seen that
	\begin{equation}\label{G0}
	\mathrm{G}_0=\left(1+t^1_{2}t^{N-1}_{N}-t^1_{2}-t^{N-1}_{N}\right)+\left(1-t^1_{2}\right)t^{N-1}_{N}+t^1_{2}\left(1-t^{N-1}_{N}\right)+t^1_{2}t^{N-1}_{N}=1,
	\end{equation}
	and
	\begin{equation}\label{G1}
	\begin{aligned}
	\mathrm{G}_1&=\left(1+t^1_{2}t^{N-1}_{N}-t^1_{2}-t^{N-1}_{N}\right)\mathfrak{A}_1+\left(1-t^1_{2}\right)t^{N-1}_{N}\mathfrak{B}_1 +t^1_{2}\(1-t^{N-1}_{N}\)\mathfrak{C}_1+t^1_{2}t^{N-1}_{N}\mathfrak{D}_1\\
	&\quad - \(1-t^1_{2}\)t^{N-1}_{N}- t^1_{2}\(1-t^{N-1}_{N}\)-2 t^1_{2}t^{N-1}_{N}\\
	& =\sum_{\mathbf{i}_2\in C_{N}^{0,2}}\tau_{\mathbf{i}_2}t^{i_{1}}_{i_{2}},
	\end{aligned}
	\end{equation}
	where the last equality follows by using the relations
	\[
	\begin{aligned}
	\mathfrak{A}_1&=\mathfrak{B}_1+\sum_{j=2}^{N-2}(-1)^{j+N-1}t^j_{N-1},  &\mathfrak{A}_1&=\mathfrak{C}_1+\sum_{j=3}^{N-1}(-1)^{2+j}t^2_j,\\
	\mathfrak{B}_1&=\mathfrak{D}_1+\sum_{j=3}^{N-2}(-1)^{2+j}t^2_j,  &\mathfrak{C}_1&=\mathfrak{D}_1+\sum_{j=3}^{N-2}(-1)^{j+N-1}t^j_{N-1}.
	\end{aligned}
	\]
	We can also deduce that
	\begin{equation}\label{GN2}
	\begin{aligned}
	\mathrm{G}_{N_0/2}&=-\(1-t^1_{2}\)t^{N-1}_{N}\mathfrak{B}_{N_0/2-1}-t^1_{2}\(1-t^{N-1}_{N}\)\mathfrak{C}_{N_0/2-1}+t^1_2t^{N-1}_N\mathfrak{D}_{N_0/2-2}\\
	&=-\(1-t^1_{2}\)t^{N-1}_{N}\left(\sum_{\mathbf{i}_{N_0-2}\in C_{N-3}^{1,N_0-2}}\tau_{\mathbf{i}_{N_0-2}}\prod_{l=1}^{N_0/2-1} t^{i_{2l-1}}_{i_{2l}}\right)\\
	&\quad -t^1_{2}\(1-t^{N-1}_{N}\)\left(\sum_{\mathbf{i}_{N_0-2}\in C_{N-3}^{2,N_0-2}}\tau_{\mathbf{i}_{N_0-2}}\prod_{l=1}^{N_0/2-1} t^{i_{2l-1}}_{i_{2l}}\right)\\
	&\quad+t^1_2t^{N-1}_N\left(\sum_{\mathbf{i}_{N_0-4}\in C_{N-4}^{2,N_0-4}}\tau_{\mathbf{i}_{N_0-4}}\prod_{l=1}^{N_0/2-2} t^{i_{2l-1}}_{i_{2l}}\right)\\
	&=\(t^1_{2}t^{N-1}_{N}-t^{N-1}_{N}\)(-1)^{N_0/2-1}t^2_3t^4_5\cdots t^{N-3}_{N-2}+\(t^1_{2}t^{N-1}_{N}-t^{1}_{2}\)(-1)^{N_0/2-1}t^3_4t^5_6\cdots t^{N-2}_{N-1}\\
	&\quad+t^1_2t^{N-1}_N\left(\sum_{\mathbf{i}_{N_0-4}\in C_{N-4}^{2,N_0-4}}\tau_{\mathbf{i}_{N_0-4}}\prod_{l=1}^{N_0/2-2} t^{i_{2l-1}}_{i_{2l}}\right)\\
	&=\sum_{\mathbf{i}_{N_0}\in C_{N}^{0,N_0}}\tau_{\mathbf{i}_{N_0}}\prod_{l=1}^{N_0/2} t^{i_{2l-1}}_{i_{2l}}.
	\end{aligned}
	\end{equation}
	Furthermore, one can readily obtain that
	\begin{equation}\label{gk}
	\begin{aligned}
	\mathrm{G}_k&=\(1+t^1_{2}t^{N-1}_{N}-t^1_{2}-t^{N-1}_{N}\)\mathfrak{A}_k+\(1-t^1_{2}\)t^{N-1}_{N}(\mathfrak{B}_k-\mathfrak{B}_{k-1})\\
	&\quad +t^1_{2}\(1-t^{N-1}_{N}\)(\mathfrak{C}_k-\mathfrak{C}_{k-1})+t^1_{2}t^{N-1}_{N}(\mathfrak{D}_k-2\mathfrak{D}_{k-1}+\mathfrak{D}_{k-2}),
	\end{aligned}
	\end{equation}
	for all $k=2, 3, \ldots, N_0/2-1$.
	By some proper arrangements to the equation \eqref{gk}, we have
	\begin{equation}\label{eq:solpfmn01}
	\begin{aligned}
	\mathrm{G}_k&=\mathfrak{D}_k+\(1+t^1_{2}t^{N-1}_{N}-t^1_{2}-t^{N-1}_{N}\)\left(\mathfrak{A}_k-\mathfrak{C}_k-\mathfrak{B}_k+\mathfrak{D}_k\right)\\
	&\quad+\(1-t^1_{2}\)\left(\mathfrak{B}_k-\mathfrak{D}_k\right)+\(1-t^{N-1}_{N}\)\left(\mathfrak{C}_k-\mathfrak{D}_k\right)\\
	&\quad+\(t^1_2t^{N-1}_N-t^{N-1}_N\)\left(\mathfrak{B}_{k-1}-\mathfrak{D}_{k-1}\right)+\(t^1_2t^{N-1}_N-t^{1}_2\)\left(\mathfrak{C}_{k-1}-\mathfrak{D}_{k-1}\right)\\
	&\quad-\(t^1_2+t^{N-1}_N\)\mathfrak{D}_{k-1}+t^1_{2}t^{N-1}_{N}\mathfrak{D}_{k-2},
	\end{aligned}
	\end{equation}
	for $k=2, 3, \ldots, N_0/2-1$.
	By direct computations, there holds the following relations:
	\begin{equation}\label{eq:solpfmn02}
	\mathfrak{B}_k-\mathfrak{D}_k=\sum_{\mathbf{i}_{2k-1}\in C_{N-4}^{2,2k-1}}(-1)^2\tau_{\mathbf{i}_{2k-1}}t^2_{i_{1}}\prod_{l=1}^{k-1} t^{i_{2l}}_{i_{2l+1}},
	\quad \mathfrak{C}_k-\mathfrak{D}_k=\sum_{\mathbf{i}_{2k-1}\in C_{N-4}^{2,2k-1}}(-1)^{N-1}\tau_{\mathbf{i}_{2k-1}}\prod_{l=1}^{k-1} t^{i_{2l-1}}_{i_{2l}}t^{i_{2k-1}}_{N-1},
	\end{equation}
	and
	\begin{equation}\label{eq:solpfmn03}
	\mathfrak{A}_k-\mathfrak{C}_k-\mathfrak{B}_k+\mathfrak{D}_k=\sum_{\mathbf{i}_{2k-2}\in C_{N-4}^{2,2k-2}}(-1)^{N+1}\tau_{\mathbf{i}_{2k-2}}t^2_{i_{1}}\prod_{l=1}^{k-2} t^{i_{2l}}_{i_{2l+1}}t^{i_{2k-2}}_{N-1}.
	\end{equation}
	By substituting \eqref{eq:solpfmn02}--\eqref{eq:solpfmn03} into \eqref{eq:solpfmn01} and  straight forward computations,  we can deduce that
	\[
	\begin{aligned}
	\mathrm{G}_k&=\mathfrak{D}_k+\sum_{\mathbf{i}_{2k-2}\in C_{N-4}^{2,2k-2}}(-1)^{N+1}\tau_{\mathbf{i}_{2k-2}}t^2_{i_{1}}\prod_{l=1}^{k-2} t^{i_{2l}}_{i_{2l+1}}t^{i_{2k-2}}_{N-1}+\sum_{\mathbf{i}_{2k-2}\in C_{N-4}^{2,2k-2}}(-1)^{N+1}\tau_{\mathbf{i}_{2k-2}}t^1_{i_{1}}\prod_{l=1}^{k-2} t^{i_{2l}}_{i_{2l+1}}t^{i_{2k-2}}_{N}\\
	&\quad+\sum_{\mathbf{i}_{2k-2}\in C_{N-4}^{2,2k-2}}(-1)^N\tau_{\mathbf{i}_{2k-2}}t^1_{i_{1}}\prod_{l=1}^{k-2} t^{i_{2l}}_{i_{2l+1}}t^{i_{2k-2}}_{N-1}+\sum_{\mathbf{i}_{2k-2}\in C_{N-4}^{2,2k-2}}\tau_{\mathbf{i}_{2k-2}}(-1)^{N+2}t^2_{i_{1}}\prod_{l=1}^{k-2} t^{i_{2l}}_{i_{2l+1}}t^{i_{2k-2}}_{N}
	\\
	&\quad+\sum_{\mathbf{i}_{2k-1}\in C_{N-4}^{2,2k-1}}(-1)^2\tau_{\mathbf{i}_{2k-1}}t^2_{i_{1}}\prod_{l=1}^{k-1} t^{i_{2l}}_{i_{2l+1}}+\sum_{\mathbf{i}_{2k-1}\in C_{N-4}^{2,2k-1}}(-1)\tau_{\mathbf{i}_{2k-1}}t^1_{i_{1}}\prod_{l=1}^{k-1} t^{i_{2l}}_{i_{2l+1}}\\
	&\quad+\sum_{\mathbf{i}_{2k-1}\in C_{N-4}^{2,2k-1}}(-1)^{N-1}\tau_{\mathbf{i}_{2k-1}}\prod_{l=1}^{k-1} t^{i_{2l-1}}_{i_{2l}}t^{i_{2k-1}}_{N-1}+\sum_{\mathbf{i}_{2k-1}\in C_{N-4}^{2,2k-1}}(-1)^N\tau_{\mathbf{i}_{2k-1}}\prod_{l=1}^{k-1} t^{i_{2l-1}}_{i_{2l}}t^{i_{2k-1}}_{N}
	\\
	&\quad+\sum_{\mathbf{i}_{2k-3}\in C_{N-4}^{2,2k-3}}(-1)^{2N}\tau_{\mathbf{i}_{2k-3}}t^1_{i_{1}}\prod_{l=1}^{k-2} t^{i_{2l}}_{i_{2l+1}}t^{N-1}_{N}+\sum_{\mathbf{i}_{2k-3}\in C_{N-4}^{2,2k-3}}(-1)^{2N+1}\tau_{\mathbf{i}_{2k-3}}t^2_{i_{1}}\prod_{l=1}^{k-2} t^{i_{2l}}_{i_{2l+1}}t^{N-1}_{N}\\
	&\quad+\sum_{\mathbf{i}_{2k-3}\in C_{N-4}^{2,2k-3}}(-1)^{N+3}\tau_{\mathbf{i}_{2k-3}}t^1_{2}\prod_{l=1}^{k-2} t^{i_{2l-1}}_{i_{2l}}t^{i_{2k-1}}_{N}
	+\sum_{\mathbf{i}_{2k-3}\in C_{N-4}^{2,2k-3}}(-1)^{N+2}\tau_{\mathbf{i}_{2k-3}}t^1_{2}\prod_{l=1}^{k-2} t^{i_{2l-1}}_{i_{2l}}t^{i_{2k-1}}_{N-1}
	\\
	&\quad +\sum_{\mathbf{i}_{2k-2}\in C_{N-4}^{2,2k-2}}(-1)^3\tau_{\mathbf{i}_{2k-2}}t^1_2\prod_{l=1}^{k-1} t^{i_{2l-1}}_{i_{2l}}+\sum_{\mathbf{i}_{2k-2}\in C_{N-4}^{2,2k-2}}(-1)^{2N-1}\tau_{\mathbf{i}_{2k-2}}\prod_{l=1}^{k-1} t^{i_{2l-1}}_{i_{2l}}t^{N-1}_N\\
	&\quad +\sum_{\mathbf{i}_{2k-4}\in C_{N-4}^{2,2k-4}}(-1)^{2N+2}\tau_{\mathbf{i}_{2k-4}}t^1_2\prod_{l=1}^{k-2} t^{i_{2l-1}}_{i_{2l}}t^{N-1}_N\\
	& =\sum_{\mathbf{i}_{2k}\in C_{N}^{0,2k}}\tau_{\mathbf{i}_{2k}}\prod_{l=1}^k t^{i_{2l-1}}_{i_{2l}},
	\end{aligned}
	\]
	which competes the proof for the case that $N_0$ is even.
	\item [\textbf{Case ii}] $N_0$ is odd. The proof follows from a similar argument to the case that $N_0$ is even. We shall only briefly sketch it. First, it can be derived that
	\[
	\begin{aligned}
	|\mathbb{P}_N(\beta)|=&(-1)^{N/2}\left(q\(1-t^1_{2}\)\(1-t^{N-1}_{N}\)\left(\sum_{k=0}^{N/2-1} \xi^k q^{N/2-k-1}\mathfrak{A}_k\right)\right.\\
	&
	\left.\quad\quad\quad\quad+q(q-\xi)\(1-t^1_{2}\)t^{N-1}_{N}\left(\sum_{k=0}^{N/2-2} \xi^k q^{N/2-k-2}\mathfrak{B}_k\right)\right.\\
	&
	\left.\quad\quad\quad\quad+q(q-\xi)t^1_{2}\(1-t^{N-1}_{N}\)\left(\sum_{k=0}^{N/2-2} \xi^k q^{N/2-k-2}\mathfrak{C}_k\right)\right.\\
	&
	\left.\quad\quad\quad\quad+(q-\xi)^2t^1_{2}t^{N-1}_{N}\left(\sum_{k=0}^{N/2-2} \xi^k q^{N/2-k-2}\mathfrak{D}_k\right)\right)\\
	=&(-1)^{N/2}\left(\sum_{k=0}^{N/2}\xi^kq^{N/2-k}\mathrm{F}_k\right).
	\end{aligned}
	\]
	Similarly to \eqref{G0}--\eqref{GN2}, we can obtain that $\mathrm{F}_0=1$ and
	\[
	\mathrm{F}_1=\sum_{\mathbf{i}_2\in C_{N}^{0,2}}\tau_{\mathbf{i}_2}t^{i_{1}}_{i_{2}}, \quad \mathrm{F}_{N/2}=t^1_2t^3_4\ldots t^{N-1}_N.
	\]
	We can also deduce that $\mathrm{F}_k$, $k=2, 3, \ldots, N/2-1$, have the same form with \eqref{eq:solpfmn01}, thus there holds
	\[
	\mathrm{F}_k=\sum_{\mathbf{i}_{2k}\in C_{N}^{0,2k}}\tau_{\mathbf{i}_{2k}}\prod_{l=1}^k t^{i_{2l-1}}_{i_{2l}}, \quad k=2, 3, \ldots, N/2-1.
	\]
\end{enumerate}
The proof is complete.
\end{proof}

Next, we proceed with the proof of Theorem \ref{th:ellresult02}.

\begin{proof}[Proof of Theorem \ref{th:ellresult02}]
Arguing by contradiction, suppose that
\begin{equation}\label{eq:contradic}
q^*\not\in \left[-\frac{(\xi-1)^2}{4}, \xi\right], \quad \mbox{or} \quad \Im {q^*}\neq 0.
\end{equation}
Combining \eqref{eq:contradic} with the fact that  $\beta^2-(\xi-1)\beta=q^*$, we have
\[
\beta\not\in [-1, \xi], \quad \mbox{or} \quad \Im {\beta}\neq 0.
\]
Since $\beta$ satisfies the polynomial equation $|\mathbb{P}_N(\beta)|=0$, we can obtain  that there exists non-trivial solution to the following linear equations
\begin{equation}
\mathbb{P}_N(\beta)^T \By=\mathbf{0},
\end{equation}
where $\By\in \RR^N$ and $\By\neq \mathbf{0}$. Next, for $\Bx=(\Bx_1,\Bx_2,\Bx_3)\in \RR^3$, we define
\begin{equation}\label{layerstr01u01}
\mathbf u=\left\{
\begin{aligned}
&a_{N}\Bx, & \Bx&\in D_N,\\
&a_{j}\Bx + b_{j}\frac{\Bx}{|\Bx|^3}, & \Bx&\in D_j,\quad j=N-1, N-2, \ldots , 1\\
&b_{0}\frac{\Bx}{|\Bx|^3}, & \Bx&\in D_{0},
\end{aligned}
\right.
\end{equation}
where $\bm{a}=(a_1, a_2, \ldots, a_N)^T$ and $\bm{b}=(b_0, b_1, \ldots, b_{N-1})^T$ are determined by
\[
\bm{a}=\Xi\By, \quad \bm{b}=\Xi^T\Upsilon_N\By,
\]
where $\Upsilon_N$ is defined in \eqref{eq:matU01} and $\Xi$ is defined by \eqref{eq:defxi01}.
Then it can be verified that $\mathbf u$ defined in \eqref{layerstr01u01} is the solution to
\begin{equation}\label{eq:mainmd01ap01}
\left\{
\begin{array}{ll}
\mathcal{L}_{\tilde{\lambda},\tilde{\mu}} \mathbf{u} =0, & \mbox{in} \quad \RR^3,\\
\mathbf{u}=\Ocal(|\Bx|^{-1}), & |\Bx|\rightarrow \infty,
\end{array}
\right.
\end{equation}
where
\begin{equation}\label{eq:permidf01}
(\tilde{\lambda},\tilde{\mu})= \varepsilon(\Bx)(\lambda,\mu)\quad\mbox{and}\quad
\varepsilon(\Bx)=\sum_{j=0}^{N}\varepsilon_j\chi(D_j)
\end{equation}
and the parameter $\varepsilon_j$ is given by
\begin{equation}\label{eq:vepdef01ap01}
\varepsilon_j
=\left\{
\begin{array}{ll}
\frac{\beta-\xi}{\beta+1}\varepsilon_0, & \mbox{in}\quad D_j, \quad j \quad \mbox{is odd},\\
\varepsilon_0, & \mbox{in}\quad D_j, \quad j \quad \mbox{is even}.
\end{array}
\right.
\end{equation}
Now we claim that $\mathbf u$ is not identically zero vector in $\RR^3$.
Indeed, if  $\mathbf u\equiv\mathbf 0$, it then follows from
\eqref{layerstr01u01} that
\begin{equation}\label{eq:eq0101}
\bm{a}+\Upsilon_N^{-1}M\bm{b}=0, \quad b_0=0,
\end{equation}
where
\begin{equation}\label{eq:defM01}
M=
\begin{bmatrix}
0 & 1 & 0 &\cdots & 0 \\
0 & 0 & 1 &\cdots & 0 \\
\vdots & \vdots &\vdots &\ddots & \vdots\\
0 & 0 & 0 & \cdots & 1\\
0 & 0  & 0 &\cdots & 0
\end{bmatrix}.
\end{equation}
Furthermore, by using the transmission conditions across the boundary of each layer, we have that
\begin{equation}\label{eq:eq0102}
(I-M)\bm{b}=\Upsilon_N (I-M^T)\bm{a}.
\end{equation}
Combining \eqref{eq:eq0101} with \eqref{eq:eq0102}, we obtain that
\begin{equation}
(\Upsilon_N-M\Upsilon_N M^T)\bm{a}=\bm{0},
\end{equation}
which implies that $\bm{a}=\mathbf{0}$. This is a contradiction since $\bm{a}=\Xi\By$ and $\By\neq \mathbf{0}$. Thus the claim is validated.

On the one hand, by using Green's formula \cite{ABGKLW15} for Lam\'e operator, it holds that
\begin{equation}
\begin{aligned}
\int_{\RR^3} \varepsilon\nabla^{s} \mathbf{u}: {\mathbf{C}_{\lambda,\mu}} {\nabla^{s} \mathbf{u}(\mathbf{x})} ~ \mathrm{d} \mathbf{x}&=\sum_{j=0}^N\int_{D_j}\varepsilon_j\nabla^{s} \mathbf{u}: {\mathbf{C}_{\lambda,\mu}} {\nabla^{s} \mathbf{u}(\mathbf{x})} ~ \mathrm{d} \mathbf{x}\\
&=-\varepsilon_0\int_{S_{1}} \left.\frac{\partial \mathbf{u}}{\partial \bm\nu}\right|_{+} \cdot \left.\mathbf{u}\right|_{+} d \sigma(\Bx)+\varepsilon_N\int_{S_{N}} \left.\frac{\partial \mathbf{u}}{\partial \bm\nu}\right|_{-}  \cdot \left.\mathbf{u}\right|_{-}  d \sigma(\Bx)\\
&\quad+\sum_{j=2}^{N}\varepsilon_{j-1}\left(\int_{S_{{j-1}}} \left.\frac{\partial \mathbf{u}}{\partial \bm\nu}\right|_{-} \cdot \left.\mathbf{u}\right|_{-} d \sigma(\Bx)-\int_{S_j} \left.\frac{\partial \mathbf{u}}{\partial \bm\nu}\right|_{+} \cdot \left.\mathbf{u}\right|_{+} d \sigma(\Bx)\right)\\
&=\sum_{j=1}^{N}\left(-\varepsilon_{j-1}\int_{S_j} \left.\frac{\partial \mathbf{u}}{\partial \bm\nu}\right|_{+} \cdot \left.\mathbf{u}\right|_{+} d \sigma(\Bx)+\varepsilon_{j}\int_{S_j} \left.\frac{\partial \mathbf{u}}{\partial \bm\nu}\right|_{-} \cdot \left.\mathbf{u}\right|_{-} d \sigma(\Bx)\right)\\
&=0,
\end{aligned}
\end{equation}
where the last equality follows by using the transmission conditions across the interfaces $S_{j}$, $j=1, 2, \ldots, N$.
On the other hand, by using \eqref{eq:vepdef01ap01} it also holds that
\begin{equation}\label{eq:keyint01}
\begin{aligned}
0&=\int_{\RR^3}\varepsilon\nabla^{s} \mathbf{u}: {\mathbf{C}_{\lambda,\mu}} {\nabla^{s} \mathbf{u}(\mathbf{x})} ~ \mathrm{d} \mathbf{x}\\
&=\varepsilon_0\sum_{j=0}^{\lfloor N/2 \rfloor}\int_{D_{2j}}\nabla^{s} \mathbf{u}(\mathbf{x}): {\mathbf{C}_{\lambda,\mu}} {\nabla^{s} \mathbf{u}(\mathbf{x})} ~ \mathrm{d} \mathbf{x}+ \frac{\beta-\xi}{\beta+1}\varepsilon_0\sum_{j=1}^{\lfloor (N+1)/2 \rfloor}\int_{D_{2j-1}}\nabla^{s} \mathbf{u}(\mathbf{x}): {\mathbf{C}_{\lambda,\mu}} \nabla^{s} \mathbf{u}(\mathbf{x}) ~ \mathrm{d} \mathbf{x}.
\end{aligned}
\end{equation}
We next distinguish the proof into two cases.
\begin{itemize}
	\item [\textbf{Case i}] $\beta\not\in [-1,\xi]$. Then one can readily see that
	\[
	\frac{\beta-\xi}{\beta+1}>0,	
	\]
	which means that the elastic material parameter $\varepsilon_{j},$ $j=1,2,\ldots,N$, are all positive valued. Then by the well-posedness  of the elastostatic system \eqref{eq:mainmd01ap01}, one must have that $\mathbf u\equiv\mathbf0$ in $\RR^3$.
	Indeed,
	in view of \eqref{eq:keyint01}, and using that $\nabla^{s} \mathbf{u}(\mathbf{x}): {\mathbf{C}}_{\lambda,\mu} \nabla^{s} \mathbf{u}(\mathbf{x})\geqslant 0$, we have  that $\nabla^{s} \mathbf{u}(\mathbf{x}): {\mathbf{C}}_{\lambda,\mu} \nabla^{s} \mathbf{u}(\mathbf{x})=0$ in $\RR^3$.
	Thus, from \cite{ABGKLW15}, one then has\[\mathbf{u}\in \bm{\Psi}=\mbox{span}\left\{(1,0,0)^T,\;(0,1,0)^T,\;(0,0,1)^T,\;(\Bx_2,-\Bx_1,0)^T,\;(\Bx_3,0,-\Bx_1)^T,\;(0,\Bx_3,-\Bx_2)^T\right\},\] this, together with the decay behavior of $\mathbf u$ at infinity, implies that $\mathbf u\equiv\mathbf0$ in $\RR^3$.
	\item [\textbf{Case ii}] $\Im {\beta}\neq 0$. It then follows from \eqref{eq:keyint01} that
	\begin{equation}\label{eq:keyint02}
	\Im{\left(\frac{\beta-\xi}{\beta+1}\right)}\sum_{j=1}^{\lfloor (N+1)/2 \rfloor}\int_{D_{2j-1}}\nabla^{s} \mathbf{u}(\mathbf{x}): {\mathbf{C}_{\lambda,\mu}} {\nabla^{s} \mathbf{u}(\mathbf{x})} ~ \mathrm{d} \mathbf{x}=0.
	\end{equation}
	Thus we have
	\begin{equation}\label{oddlayer}
	\nabla^{s} \mathbf{u}(\mathbf{x}): {\mathbf{C}}_{\lambda,\mu} \nabla^{s} \mathbf{u}(\mathbf{x})=0\;\mbox{in}\;D_{2j-1},\; j=1, 2, \ldots, \lfloor (N+1)/2 \rfloor.
	\end{equation}
	By using Green's formula \cite{ABGKLW15} for Lam\'e operator in $D_{2j-1}$ ,$j=1, 2, \ldots, \lfloor (N+1)/2 \rfloor$, and the transmission conditions across the interface $S_{{2j}}$, $j= 1,2, \ldots, \lfloor N/2 \rfloor$, one can deduce that
	\[
	\sum_{j=0}^{\lfloor N/2 \rfloor}\int_{D_{2j}}\nabla^{s} \mathbf{u}(\mathbf{x}): {\mathbf{C}_{\lambda,\mu}} {\nabla^{s} \mathbf{u}(\mathbf{x})} ~ \mathrm{d} \mathbf{x}=0.
	\]
	Thus we also have
	\begin{equation}\label{evenlayer}
	\nabla^{s} \mathbf{u}(\mathbf{x}): {\mathbf{C}}_{\lambda,\mu} \nabla^{s} \mathbf{u}(\mathbf{x})=0\;\mbox{in}\;D_{2j},\; j=1,2, \ldots, \lfloor N/2 \rfloor.
	\end{equation}
	Combining \eqref{oddlayer} with \eqref{evenlayer}, we obtain that $\nabla^{s} \mathbf{u}(\mathbf{x}): {\mathbf{C}}_{\lambda,\mu} \nabla^{s} \mathbf{u}(\mathbf{x})=0$ in $\RR^3$.
	Thus  $\mathbf u\equiv\mathbf0$ in $\RR^3$.
\end{itemize}
We have shown that $\mathbf u\equiv\mathbf0$ holds for either $\beta\not\in [-1,\xi]$ or $\Im {\beta}\neq 0$, which contradicts with our assumptions. The proof is complete.
\end{proof}

\subsection{Extreme case}
In this subsection, we consider that the radius of the layers are extreme large.
It is worth mentioning that the structure of the Earth or other celestial bodies can be treated in this case.
To describe this case mathematically, we set $r_i=R+c_i$, where $R\gg 1$ and $c_i$ are regular constants, $i=1, 2, \ldots, N$. Then we have
\begin{equation}\label{extremecase}
t^i_j=(r_j/r_i)^3=1+\Ocal(1/R).
\end{equation}
Let
\begin{equation}\label{eq:coefg01}
h_{N,k}:=\sum_{\mathbf{i}_{k}\in C_{N}^{0,k}}\tau_{\mathbf{i}_{k}}.
\end{equation}
By direct computations, we have
\[
h_{N,1}=\sum_{\mathbf{i}_{1}\in C_{N}^{0,1}}\tau_{\mathbf{i}_{1}}=((-1)^N -1)/2, \quad h_{N,2}=\sum_{\mathbf{i}_{2}\in C_{N}^{0,2}}\tau_{\mathbf{i}_{2}}=- \lfloor N/2\rfloor,
\]
and
\[
h_{N,N}=\sum_{\mathbf{i}_{2N+1}\in C_{2N+1}^{0,2N+1}}\tau_{\mathbf{i}_{2N+1}}=(-1)^{N(N+1)/2}, \quad h_{N,2\lfloor N/2\rfloor}=\sum_{\mathbf{i}_{2\lfloor N/2\rfloor}\in C_{N}^{0,2\lfloor N/2\rfloor}}\tau_{\mathbf{i}_{2\lfloor N/2\rfloor}}=(-1)^{\lfloor N/2\rfloor}.
\]
Moreover, we can also observe that
\begin{equation}\label{eq:obse01}
h_{N, k}=h_{N-1, k}+(-1)^N h_{N-1, k-1}, \quad k= 2,3, \ldots, N-1,
\end{equation}
and
\begin{equation}\label{eq:obse02}
h_{2N, 2k-1}=\sum_{\mathbf{i}_{2k-1}\in C_{2N}^{0,2k-1}}\tau_{\mathbf{i}_{2k-1}}=0, \quad k=1, 2, \ldots, N.
\end{equation}

Next, we give some recursion formulae for $h_{N,k}$ in the following lemma.
\begin{lem}\label{le:recco01}
Let $h_{N,k}$ be given by \eqref{eq:coefg01}. Then there holds
	\begin{equation}\label{eq:lerec01}
	h_{2N+1,2k}=h_{2N,2k},  \quad k=1, 2, \ldots, N,
	\end{equation}
	and
	\begin{equation}\label{eq:lerec02}
	\begin{aligned}
	h_{2N+2,2k}=h_{2N+1,2k}-h_{2N+1,2k-2}, & \quad k=2, 3, \ldots, N, \\
	h_{2N+1,2k-1}=h_{2N-1,2k-1}-h_{2N-1,2k-3}, & \quad k=2, 3, \ldots, N.
	\end{aligned}
	\end{equation}
\end{lem}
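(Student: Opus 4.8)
The plan is to deduce all three identities directly from the two relations established immediately before the lemma, namely the Pascal-type recursion \eqref{eq:obse01}, $h_{N,k}=h_{N-1,k}+(-1)^N h_{N-1,k-1}$, together with the vanishing of the odd-order coefficients at even upper index \eqref{eq:obse02}, $h_{2N,2k-1}=0$. The conceptual engine behind both is the generating-function identity $\sum_{k=0}^N h_{N,k}x^k=\prod_{i=1}^N\bigl(1+(-1)^i x\bigr)=(1-x)^{\lceil N/2\rceil}(1+x)^{\lfloor N/2\rfloor}$, verified by expanding the product and reading off that the coefficient of $x^k$ is exactly the signed count $\sum_{|S|=k}(-1)^{\sum_{i\in S}i}=h_{N,k}$; in particular the even levels collapse to $(1-x^2)^N$ and the odd levels to $(1-x)(1-x^2)^N$, whence the closed forms $h_{2N,2k}=(-1)^k\binom{N}{k}$, $h_{2N+1,2k}=(-1)^k\binom{N}{k}$ and $h_{2N+1,2k-1}=(-1)^k\binom{N}{k-1}$ drop out and Pascal's rule settles every identity at once. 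I would nonetheless present the streamlined, induction-free derivation from \eqref{eq:obse01}--\eqref{eq:obse02}, since those are already in hand.

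First I would prove \eqref{eq:lerec01}. Applying \eqref{eq:obse01} at upper index $2N+1$ gives $h_{2N+1,2k}=h_{2N,2k}+(-1)^{2N+1}h_{2N,2k-1}=h_{2N,2k}-h_{2N,2k-1}$, and the second term vanishes by \eqref{eq:obse02}; this settles the first identity for all admissible $k$. The same device yields an auxiliary single-step cancellation in odd order: \eqref{eq:obse01} at upper index $2N+1$ applied to an odd lower index gives $h_{2N+1,2k-1}=h_{2N,2k-1}-h_{2N,2k-2}=-h_{2N,2k-2}$, again using \eqref{eq:obse02}, and combining this with \eqref{eq:lerec01} (with $k$ replaced by $k-1$) yields $h_{2N+1,2k-1}=-h_{2N+1,2k-2}$.

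Next I would turn to the two relations in \eqref{eq:lerec02}. For the first, apply \eqref{eq:obse01} at upper index $2N+2$ to obtain $h_{2N+2,2k}=h_{2N+1,2k}+(-1)^{2N+2}h_{2N+1,2k-1}=h_{2N+1,2k}+h_{2N+1,2k-1}$, and then substitute the cancellation $h_{2N+1,2k-1}=-h_{2N+1,2k-2}$ from the previous step to arrive at $h_{2N+2,2k}=h_{2N+1,2k}-h_{2N+1,2k-2}$. For the second, I would step down from level $2N+1$ to level $2N-1$ through the intermediate even level $2N$: one application of \eqref{eq:obse01} gives $h_{2N+1,2k-1}=-h_{2N,2k-2}$ as above, a further application gives $h_{2N,2k-2}=h_{2N-1,2k-2}+h_{2N-1,2k-3}$, and the vanishing relation $h_{2N,2k-1}=0=h_{2N-1,2k-1}+h_{2N-1,2k-2}$ lets me replace $h_{2N-1,2k-2}$ by $-h_{2N-1,2k-1}$; assembling these three pieces gives $h_{2N+1,2k-1}=h_{2N-1,2k-1}-h_{2N-1,2k-3}$.

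The only genuine obstacle is bookkeeping at the endpoints of the index ranges: the recursion \eqref{eq:obse01} is stated only for $k=2,\ldots,N-1$, so the extreme coefficients (the constant term, the top-degree term, and the lowest nonzero odd term) are not covered by the generic step and must be checked separately against the explicit boundary values recorded just before the lemma, such as $h_{N,N}=(-1)^{N(N+1)/2}$ and $h_{N,2\lfloor N/2\rfloor}=(-1)^{\lfloor N/2\rfloor}$. These are routine parity computations, but they are exactly where a sign error is most likely to enter, so I would dispatch them explicitly rather than absorb them into the general argument.
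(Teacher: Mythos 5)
Your proof is correct, and for the hardest part it takes a genuinely different and cleaner route than the paper. For \eqref{eq:lerec01} and the second identity in \eqref{eq:lerec02} your derivations coincide with the paper's (the same chains through the even level using \eqref{eq:obse01} and \eqref{eq:obse02}). The difference is the first identity in \eqref{eq:lerec02}: the paper proves it by induction on $N$, checking base cases $N\leqslant 4$ and running a six-step chain that invokes \eqref{eq:lerec01}, the already-proved second identity, and the inductive hypothesis. You avoid induction entirely by first isolating the cancellation $h_{2N+1,2k-1}=-h_{2N,2k-2}=-h_{2N+1,2k-2}$, after which $h_{2N+2,2k}=h_{2N+1,2k}+h_{2N+1,2k-1}=h_{2N+1,2k}-h_{2N+1,2k-2}$ is immediate; all index ranges you use stay within the stated validity $2\leqslant k\leqslant N-1$ of \eqref{eq:obse01}, so the endpoint caveats in your last paragraph, while prudent, turn out to be unnecessary. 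Your generating-function remark, $\sum_{k}h_{N,k}x^k=\prod_{i=1}^{N}\bigl(1+(-1)^{i}x\bigr)$, which collapses to $(1-x^2)^{\lfloor N/2\rfloor}$ or $(1-x)(1-x^2)^{\lfloor N/2\rfloor}$, is a further genuine improvement: it yields the closed form $h_{N,2k}=(-1)^k\binom{\lfloor N/2\rfloor}{k}$ at once, which is exactly the fact \eqref{elmcombin} that the paper later asserts only ``by elementary combination theory'' in proving Theorem \ref{co:ellresult02}, so your route would subsume both the lemma and that subsequent step in one stroke.
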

\begin{proof}
First, by using \eqref{eq:obse01} and \eqref{eq:obse02} we have
	\[
    h_{2N+1,2k}=h_{2N,2k}-h_{2N,2k-1}=h_{2N,2k},	
	\]
	which verifies \eqref{eq:lerec01}. Next,  by using again \eqref{eq:obse01} and \eqref{eq:obse02}, we obtain
	\[
	\begin{aligned}
	h_{2N+1,2k-1}&=h_{2N, 2k-1}-h_{2N, 2k-2}\\
	&=-h_{2N, 2k-2}\\
	&=-h_{2N-1, 2k-2}-h_{2N-1, 2k-3}\\
	&=h_{2N-1, 2k-1}-h_{2N-1, 2k-3},
	\end{aligned}
	\]
	where the last equality follows the fact that
	\[
	0=h_{2N,2k-1}=h_{2N-1,2k-1}+h_{2N-1,2k-2}.
	\]
	Then the second equation in \eqref{eq:lerec02} holds. In order to show the first equation in \eqref{eq:lerec02}, we shall make use of induction. It can be readily verified that the first equation in \eqref{eq:lerec02} holds for $N\leqslant 4$. Suppose it holds for $N\leqslant N_0-1$, $N_0\geqslant 5$, we show that it also holds for $N=N_0$.
	Combining \eqref{eq:obse01}--\eqref{eq:lerec01} with the first equation in \eqref{eq:lerec02}, one finally has
	\[
	\begin{aligned}
	h_{2N_0+2,2k}&=h_{2N_0+1, 2k}+h_{2N_0+1, 2k-1}\\
	&=h_{2N_0+1, 2k}+h_{2N_0-1,2k-1}-h_{2N_0-1,2k-3}\\
	&=h_{2N_0+1, 2k}-h_{2N_0-2,2k-2}+h_{2N_0-2, 2k-4}\\
	&=h_{2N_0+1, 2k}-h_{2N_0-1,2k-2}+h_{2N_0-1, 2k-4}\\
	&=h_{2N_0+1, 2k}-h_{2N_0, 2k-2}\\
	&=h_{2N_0+1, 2k}-h_{2N_0+1, 2k-2},
	\end{aligned}
	\]
	which completes the proof.
\end{proof}

In what follows, we begin with the proof of Theorem \ref{co:ellresult02}.

\begin{proof}[proof of Theorem \ref{co:ellresult02}]
By using \eqref{eq:deffq01} and \eqref{extremecase}, the associated polynomial  $f_N(q)$ is then given by
\begin{equation} \label{eq:maincongex02}
f_N(q)=
\sum_{k=0}^{\lfloor N/2\rfloor} \xi^k q^{{\lfloor N/2\rfloor}-k}h_{N,2k}+\Ocal(1/R),
\end{equation}
where $h_{N,2k}$ can be calculated recursively by using Lemma \ref{le:recco01}.
By using elementary combination theory, we can deduce that
\begin{equation}\label{elmcombin}
h_{N,2k}=(-1)^kC_{\lfloor N/2 \rfloor}^k,
\end{equation}
where $C_{\lfloor N/2 \rfloor}^k$ denotes the number of combinations for $k$ out of $\lfloor N/2 \rfloor$. It then follows from  \eqref{eq:maincongex02} and \eqref{elmcombin} that
\begin{equation}
f_N(q)=
\sum_{k=0}^{\lfloor N/2\rfloor} (-1)^k \xi^k q^{{\lfloor N/2\rfloor}-k}C_{\lfloor N/2 \rfloor}^k+\Ocal(1/R)=(q-\xi)^{\lfloor N/2 \rfloor}+\Ocal(1/R).
\end{equation}
The proof is complete.
\end{proof}

\section{Two dimensional case}\label{sec5}
In this section, for the sake of completeness, we consider the elastostatic scattering problem in the two dimensional space $\RR^2$.
\subsection{Representation of the perturbed field}
Let us keep the notations for the multi-layer structure $D$. Suppose the background field $\mathbf{H}$ is represented as
\begin{equation}\label{eq:defH02}
\mathbf{H}=\tilde  a_{0}  \Bx.
\end{equation}
Then the displacement field $\mathbf{u}$ takes the following form
\begin{equation}\label{layerstr02}
\mathbf{u}=\left\{
\begin{aligned}
&\tilde{a}_{N}\Bx, & \Bx&\in D_N,\\
&\tilde{a}_{j}\Bx+r^{-2}\tilde{b}_{j}\Bx, & \Bx&\in D_j,\quad  j=N-1, N-2, \ldots , 1,\\
&\tilde{a}_{0}\Bx+r^{-2}\tilde{b}_{0}\Bx, & \Bx&\in D_{0}.
\end{aligned}
\right.
\end{equation}
By using the transmission conditions across the interface $\{|x|=r_j\}$, $j=1, 2, \ldots N$, we deduce
\begin{equation}\label{eq:trans02}
\left\{
\begin{aligned}
&\tilde{a}_{j}  +\tilde b_{j}r_j^{-2} =\tilde a_{j-1} +\tilde b_{j-1}r_{j}^{-2} , \\
&\varepsilon_{j} \left((2\lambda+2\mu)\tilde a_{j}  - 2\mu\tilde b_{j} r_j^{-2}
\right) = \varepsilon_{j-1} \left((2\lambda+2\mu)\tilde a_{j-1}  - 2\mu\tilde b_{j-1} r_{j}^{-2}
\right)
\end{aligned}
\right.
\end{equation}
where we set $\tilde b_{N}=0$.  By similar argument as in the proof of Theorem \ref{th:solmain01}, we can obtain  the following result:
\begin{thm}
Suppose $\mathbf u$ is the solution to the Lam\'e system
\begin{equation}\label{eq:mainmd02}
\left\{
\begin{array}{ll}
\mathcal{L}_{\tilde{\lambda},\tilde{\mu}} \mathbf{u} =0, & \mbox{in} \quad \RR^2,\\
\mathbf{u}-\mathbf{H}=\Ocal(|\Bx|^{-1}), & |\Bx|\rightarrow \infty,
\end{array}
\right.
\end{equation}
where the Lam\'e parameters $\tilde{\lambda}$ and $\tilde{\mu}$ are given by
\begin{equation}\label{eq:permidf02}
(\tilde{\lambda},\tilde{\mu})= \varepsilon(\Bx)(\lambda,\mu)\quad\mbox{and}\quad
\varepsilon(\Bx)=\sum_{j=0}^{N}\varepsilon_j\chi(D_j).
\end{equation}
 Let $\mathbf H$ be given by \eqref{eq:defH02} and
\begin{equation}\label{eq:deflamb02}
\tilde\beta_{j}=\frac{\frac{\mu}{\lambda+\mu}\varepsilon_{j-1}+\varepsilon_{j}}{\varepsilon_{j-1}-\varepsilon_{j}}, \quad j=1,2,\ldots,N.
\end{equation}
Then we have
\begin{equation}\label{eq:purbmn02}
\mathbf u-\mathbf{H}=r^{-2}\bm{e}^T\tilde \Upsilon_{N} (\tilde{\mathbb{P}}_{N}^T)^{-1}\bm{e}\mathbf{H},
\end{equation}
where $\bm{e}:=(1,1,\ldots,1)^T$, the matrix $\tilde{\mathbb{P}}_N$ and $\tilde\Upsilon_{N}$ are defined by
\begin{equation}\label{eq:matP022}
\tilde{\mathbb{P}}_{N}:= \begin{bmatrix}
\tilde\beta_1 & -1 & -1 & \cdots& -1 \\
\frac{\mu}{\lambda+\mu}(r_{2}/r_1)^2 & \tilde\beta_{2} & -1 &\cdots & -1 \\
\vdots & \vdots & \vdots & \ddots & \vdots\\
\frac{\mu}{\lambda+\mu}(r_{N-1}/r_{1})^2 & \frac{\mu}{\lambda+\mu}(r_{N-1}/r_{2})^2 & \frac{\mu}{\lambda+\mu}(r_{N-1}/r_{3})^2 & \cdots & -1 \\
\frac{\mu}{\lambda+\mu}(r_{N}/r_1)^2 & \frac{\mu}{\lambda+\mu}(r_{N}/r_2)^2 & \frac{\mu}{\lambda+\mu}(r_{N}/r_{3})^2 &\cdots & \tilde\beta_{N}
\end{bmatrix}
\end{equation}
and
\begin{equation}\label{eq:matU02}
\tilde\Upsilon_{N}:= \begin{bmatrix}
r_1^2 & 0 & 0 &\cdots& 0 \\
0 & r_{2}^2 & 0 &\cdots& 0 \\
0 & 0 & r_{3}^2 &\cdots& 0 \\
\vdots & \vdots & \vdots & \ddots & \vdots\\
0 & 0 & 0 &\cdots & r_{N}^2
\end{bmatrix} .
\end{equation}
\end{thm}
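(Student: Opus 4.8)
The plan is to follow the proof of Theorem~\ref{th:solmain01} almost verbatim, replacing the three-dimensional radial decay $r^{-3}$ by $r^{-2}$ and adjusting the traction coefficients accordingly. Indeed, the transmission conditions \eqref{eq:trans02} are the exact two-dimensional counterpart of \eqref{eq:trans01}, with the pair $(3\lambda+2\mu,\,4\mu)$ replaced by $(2\lambda+2\mu,\,2\mu)$ and $r_j^{-3}$ replaced by $r_j^{-2}$. Writing $\tilde\xi:=\mu/(\lambda+\mu)$ for the two-dimensional analog of $\xi$, the first step is to eliminate the coefficients $\tilde b_j$ from \eqref{eq:trans02} and recast the system purely in terms of the differences $\tilde a_j-\tilde a_{j-1}$.

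First I would use the displacement-continuity relation in \eqref{eq:trans02} to write $(\tilde b_{j-1}-\tilde b_j)r_j^{-2}=\tilde a_j-\tilde a_{j-1}$ and substitute this into the traction-continuity relation. Invoking the definition \eqref{eq:deflamb02} of $\tilde\beta_j$ and performing the same rearrangement as in the passage following \eqref{eq:trans01}, I expect to arrive at the $N$ equations
\[
\begin{aligned}
&\tilde\beta_1(\tilde a_1-\tilde a_0)+\tilde\xi\sum_{j=2}^N(\tilde a_j-\tilde a_{j-1})(r_j/r_1)^2=\tilde a_0,\\
&-\sum_{j=1}^{l-1}(\tilde a_j-\tilde a_{j-1})+\tilde\beta_l(\tilde a_l-\tilde a_{l-1})+\tilde\xi\sum_{j=l+1}^N(\tilde a_j-\tilde a_{j-1})(r_j/r_l)^2=\tilde a_0,\\
&-\sum_{j=1}^{N-1}(\tilde a_j-\tilde a_{j-1})+\tilde\beta_N(\tilde a_N-\tilde a_{N-1})=\tilde a_0,
\end{aligned}
\]
valid for $l=2,\ldots,N-1$ in the middle row, whose coefficient matrix is precisely $\tilde{\mathbb{P}}_N^T$ as displayed in \eqref{eq:matP022}. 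Assuming $\tilde{\mathbb{P}}_N$ is invertible, I would then solve for $\tilde{\bm a}:=(\tilde a_1,\ldots,\tilde a_N)^T$ exactly as in \eqref{eq:a}, obtaining $\tilde{\bm a}=\tilde a_0\bigl(\Xi(\tilde{\mathbb{P}}_N^T)^{-1}\bm e+\bm e\bigr)$ with $\Xi$ as in \eqref{eq:defxi01}.

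Next, telescoping the first equation in \eqref{eq:trans02} together with $\tilde b_N=0$ gives the two-dimensional analog of \eqref{eq:link_a_b}, namely $\sum_{j=k}^N(\tilde a_j-\tilde a_{j-1})r_j^2=\tilde b_{k-1}$, which in matrix form reads $\tilde{\bm b}=\tilde a_0\,\Xi^T\tilde\Upsilon_N(\tilde{\mathbb{P}}_N^T)^{-1}\bm e$ with $\tilde\Upsilon_N$ as in \eqref{eq:matU02}, just as in \eqref{eq:b}. Extracting the first component $\tilde b_0=\tilde a_0\,\bm e^T\tilde\Upsilon_N(\tilde{\mathbb{P}}_N^T)^{-1}\bm e$ and recalling that in $D_0$ the field is $\mathbf u=\tilde a_0\Bx+r^{-2}\tilde b_0\Bx=\mathbf H+r^{-2}\tilde b_0\Bx$ while $\mathbf H=\tilde a_0\Bx$, I obtain $\mathbf u-\mathbf H=r^{-2}(\tilde b_0/\tilde a_0)\mathbf H$, which is exactly \eqref{eq:purbmn02}.

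The computations are entirely routine and mirror the three-dimensional case; the only point demanding genuine care is the algebraic reduction in the first step, where one must verify that the precise two-dimensional traction coefficients $(2\lambda+2\mu)$ and $2\mu$, combined with the $r^{-2}$ scaling, reproduce exactly the off-diagonal factor $\tilde\xi=\mu/(\lambda+\mu)$ of $\tilde{\mathbb{P}}_N$ together with the diagonal entries $\tilde\beta_j$. Since this is the sole place where the spatial dimension enters nontrivially, I expect it to be the main (and essentially only) obstacle, and it is dispatched by a direct computation identical in spirit to the one carried out for Theorem~\ref{th:solmain01}.
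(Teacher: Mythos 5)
Your proposal is correct and follows essentially the same route as the paper, which itself proves this theorem by setting up the two-dimensional transmission conditions and then invoking "a similar argument as in the proof of Theorem \ref{th:solmain01}". Your reduction to the difference variables $\tilde a_j-\tilde a_{j-1}$, the identification of the coefficient matrix with $\tilde{\mathbb{P}}_N^T$, the telescoping identity $\tilde b_{k-1}=\sum_{j=k}^N(\tilde a_j-\tilde a_{j-1})r_j^2$, and the extraction of $\tilde b_0$ are exactly the intended two-dimensional mirror of \eqref{eq:a}--\eqref{eq:b}, with the correct observation that the traction coefficients $(2\lambda+2\mu,\,2\mu)$ produce the ratio $\tilde\xi=\mu/(\lambda+\mu)$ in place of $\xi=4\mu/(3\lambda+2\mu)$.
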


\subsection{Polariton resonance modes}
In this subsection, we study the polariton resonances for the elastostatic system in $\RR^2$. As in the 3D case, we shall introduce a similar structural design, i.e., we
assume that $\varepsilon_j$, $j=1,2,\ldots,N$, is given by \eqref{eq:vepdef01}. Then one can readily see that the matrix $\tilde{P}_N$ in (3.6) is reduced to
\begin{equation}\label{eq:matP03}
\tilde{\mathbb{P}}_{N}(\tilde{\beta}):= \begin{bmatrix}
\tilde\beta & -1 & -1 & \cdots& -1 \\
\zeta(r_{2}/r_1)^2 & \zeta-1-\tilde\beta & -1 &\cdots & -1 \\
\vdots & \vdots & \vdots & \ddots & \vdots\\
\zeta(r_{N-1}/r_{1})^2 & \zeta(r_{N-1}/r_{2})^2 & \zeta(r_{N-1}/r_{3})^2 & \cdots & -1 \\
\zeta(r_{N}/r_1)^2 & \zeta(r_{N}/r_2)^2 & \zeta(r_{N}/r_{3})^2 &\cdots & \left(\zeta-1+(-1)^{N}(\zeta-1)\right)/2+(-1)^{N-1}\tilde{\beta}
\end{bmatrix},
\end{equation}
where
\[
\zeta=\frac{\mu}{\lambda+\mu}\quad\mbox{and}\quad\tilde{\beta}=\frac{\zeta\varepsilon_{0}-\varepsilon^*+\mathrm{i}\delta}{\varepsilon_{0}+\varepsilon^*-\mathrm{i}\delta}.
\]
For such a setup, all possible polariton resonance modes are contained in the solution to
\begin{equation}\label{2dcase}
|\tilde{\mathbb{P}}_{N}(\tilde{\beta})|=0,
\end{equation}
which is equivalent to
\[
\tilde \beta^{N-2\lfloor N/2\rfloor}\left(\sum_{k=0}^{\lfloor N/2\rfloor} \zeta^k \left(\tilde\beta^2-(\zeta-1)\tilde\beta\right)^{{\lfloor N/2\rfloor}-k}\left(\sum_{\mathbf{i}_{2k}\in C_{N}^{0,2k}}\tau_{\mathbf{i}_{2k}}\prod_{l=1}^k \left(\frac{r_{2l-1}}{r_{2l}}\right)^2\right)\right)=0.
\]
By similar argument as in the proof of Theorem \ref{th:ellresult02}, we can obtain that
\begin{thm}\label{th:ellresult03}
	Let $\tilde{\mathbb{P}}_{N}(\tilde{\beta})$ be defined in \eqref{eq:matP03} and $\tilde{\beta}^*$ be the root to \eqref{2dcase}.
	Then there exists $N$ real values of roots to \eqref{2dcase}. Moreover, we have
	\begin{equation}\label{eq:thmainsp02}
	\tilde{\beta}^*\in \left[-1, \zeta\right]=\left[-1,\frac{\mu}{\lambda+\mu}\right].
	\end{equation}
\end{thm}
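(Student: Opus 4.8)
The plan is to mirror the contradiction-and-energy argument used in the proof of Theorem \ref{th:ellresult02}, replacing the exponent $3$ by $2$ throughout and the constant $\xi=4\mu/(3\lambda+2\mu)$ by $\zeta=\mu/(\lambda+\mu)$. Since $|\tilde{\mathbb{P}}_N(\tilde\beta)|$ is a degree-$N$ polynomial in $\tilde\beta$ (each diagonal entry of $\tilde{\mathbb{P}}_N$ is affine in $\tilde\beta$), it has exactly $N$ roots counted with multiplicity; the content of the theorem is that all of them are real and lie in $[-1,\zeta]$. Accordingly I would argue by contradiction, supposing that some root $\tilde\beta^*$ satisfies $\tilde\beta^*\notin[-1,\zeta]$ or $\Im\tilde\beta^*\neq 0$.

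First I would exploit that, since $\tilde\beta^*$ solves $|\tilde{\mathbb{P}}_N(\tilde\beta)|=0$, the transpose system $\tilde{\mathbb{P}}_N(\tilde\beta^*)^T\By=\mathbf{0}$ admits a nontrivial solution $\By\neq\mathbf{0}$. From $\By$ I would build a displacement field $\mathbf{u}$ of the two-dimensional form \eqref{layerstr02}, with layer coefficients $\bm a=\Xi\By$ and $\bm b=\Xi^T\tilde\Upsilon_N\By$ (the exact analogue of the construction in the proof of Theorem \ref{th:ellresult02}), so that $\mathbf{u}$ solves the homogeneous Lam\'e system in $\RR^2$ decaying like $\Ocal(|\Bx|^{-1})$, with layered parameters $\varepsilon_j=\frac{\tilde\beta^*-\zeta}{\tilde\beta^*+1}\varepsilon_0$ in odd layers and $\varepsilon_j=\varepsilon_0$ in even layers (one checks directly from \eqref{eq:deflamb02} that this choice produces $\tilde\beta_j=\tilde\beta^*$ on odd interfaces and $\tilde\beta_j=\zeta-1-\tilde\beta^*$ on even ones). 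The next step is to verify that $\mathbf{u}\not\equiv\mathbf{0}$; this follows, exactly as before, from the relation $(\tilde\Upsilon_N-M\tilde\Upsilon_N M^T)\bm a=\mathbf{0}$ with $M$ as in \eqref{eq:defM01}, which forces $\bm a=\Xi\By=\mathbf{0}$ and hence $\By=\mathbf{0}$, contradicting $\By\neq\mathbf{0}$.

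With a nontrivial $\mathbf{u}$ in hand, I would apply Green's formula for the Lam\'e operator together with the transmission conditions across each interface $S_j$ to obtain the identity $\int_{\RR^2}\varepsilon\,\nabla^s\mathbf{u}:\mathbf{C}_{\lambda,\mu}\nabla^s\mathbf{u}\,\mathrm{d}\Bx=0$, the two-dimensional analogue of \eqref{eq:keyint01}, and then split the weighted energy into its even-layer and odd-layer contributions. In the case $\tilde\beta^*\notin[-1,\zeta]$ one checks $\frac{\tilde\beta^*-\zeta}{\tilde\beta^*+1}>0$, so every $\varepsilon_j$ is positive and ellipticity forces $\nabla^s\mathbf{u}:\mathbf{C}_{\lambda,\mu}\nabla^s\mathbf{u}\equiv 0$; in the case $\Im\tilde\beta^*\neq 0$ the imaginary part of the identity kills the strain energy in the odd layers, and a further application of Green's formula on the even layers (using the transmission conditions on $S_{2j}$) kills it in the even layers as well. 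Either way $\nabla^s\mathbf{u}=0$ throughout $\RR^2$, so $\mathbf{u}$ lies in the two-dimensional rigid-motion space $\mathrm{span}\{(1,0)^T,(0,1)^T,(\Bx_2,-\Bx_1)^T\}$, and the decay at infinity forces $\mathbf{u}\equiv\mathbf{0}$, the desired contradiction.

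I expect the only genuine point needing care to be the bookkeeping in the construction of $\mathbf{u}$ and the verification of the transmission relations \eqref{eq:trans02} in two dimensions: the $r^{-2}$ decay, the traction coefficients $(2\lambda+2\mu)$ and $2\mu$ replacing $(3\lambda+2\mu)$ and $4\mu$, and the reduced three-dimensional rigid-motion space all change constants relative to the three-dimensional proof. None of these introduces a new structural difficulty, however; once the analogues of \eqref{eq:eq0101}--\eqref{eq:keyint01} are written down with the two-dimensional constants, the two case distinctions close verbatim, and since the degree-$N$ polynomial $|\tilde{\mathbb{P}}_N(\tilde\beta)|$ can then have no non-real root and no real root outside $[-1,\zeta]$, it has exactly $N$ real roots contained in $[-1,\zeta]$, as claimed in \eqref{eq:thmainsp02}.
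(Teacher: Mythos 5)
Your proposal is correct and follows essentially the same route as the paper: the paper proves Theorem \ref{th:ellresult03} simply by invoking ``a similar argument as in the proof of Theorem \ref{th:ellresult02}'', and your write-up is precisely that argument carried out in two dimensions, with the right substitutions ($r^{-3}\to r^{-2}$, $\xi\to\zeta$, the 2D traction constants, and the reduced rigid-motion space). Your verification that the choice $\varepsilon_j=\frac{\tilde\beta^*-\zeta}{\tilde\beta^*+1}\varepsilon_0$ reproduces $\tilde\beta_j\in\{\tilde\beta^*,\zeta-1-\tilde\beta^*\}$, and your degree-$N$ counting of roots with multiplicity, correctly fill in the details the paper leaves implicit.
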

\section{Numerical examples}\label{sec6}
In this section, we present some numerical examples to corroborate our theoretical findings in the previous sections. From a theoretical point of view (see Theorem \ref{th:ellresult01}), in order to find the polariton modes for any multi-layer structure, it suffices to find the roots of the characteristic polynomial \eqref{eq:maincong01}. Moreover, from  Definition \ref{df:def02}, to show the polariton resonance phenomena, one only needs to observe the elastic energy outside the structure, determined
by the term
\[
r^{-3}\bm{e}^T \Upsilon_{N} (\beta I-\mathbb{K}_{N}^T)^{-1} \tilde{\bm{e}}\mathbf H.
\]
 We next investigate these two scenarios numerically.
\subsection{Roots to characteristic polynomials}
First, we consider the radius of layers are equidistance. Let $\varepsilon_0=1$. For $N$-layer structure, set
\begin{equation}\label{eq:str01}
r_i=N-i+1, \quad i=1, 2, \ldots N.
\end{equation}
TABLE \ref{tab:1} shows the roots to the characteristic polynomial  $f_{19}(q)$ with $\xi=3$ and $\xi=0.8$, respectively. In FIGURE \ref{fig:2}, we plot the values of the polynomial $f_{19}(q)$ with $\xi=3$ in the span $[2.59, 3.0]$ (approximately the span between the fourth eigenvalue and last eigenvalue) and the values of the polynomial $f_{19}(q)$ with $\xi=0.8$ in the span $[0.13, 0.8]$ (approximately the span between the first eigenvalue and last eigenvalue), it is interesting that the values of the characteristic polynomial in the span are all very close to zero (of the order $10^{-5}$). These spans may have great potential applications such as wave guide for seismic wave.
\begin{table}[h]
	\begin{center}
		\caption{Roots to the characteristic polynomial with layers which are chosen by \eqref{eq:str01}. \label{tab:1}}
		\begin{tabular}{|c|cccc ccc|}
			\hline
			&  & &  & $N=19$ and $\xi=3$ & &  &  \\
			\hline
			$q$             &   & 2.9691 &  & 2.9496 &  & 2.9185 &  \\
			$\beta$& 0 &2.9923&-0.9923&2.9874&-0.9874&2.9795&-0.9795   \\
			$\varepsilon_i$   & -3.0000 &   -0.0019 &-517.0116 & -0.0032 &-315.5802 &   -0.0051 &-194.3029\\
			\hline
			$q$&& 2.8668 & & 2.7735 & &2.5941 &  \\
			$\beta$ && 2.9664&-0.9664&2.9425&-0.9425&2.8958&-0.8958 \\
			$\varepsilon_i$ &&   -0.0085&-118.0785 &-0.0146  &-68.6177   &-0.0267 &-37.3961 \\
			\hline
			$q$ && 2.2295 & & 1.5099 &  & 0.4949 &  \\
			$\beta$ && 2.7971&-0.7971&2.5843&-0.5843&2.2227&-0.2227\\
			$\varepsilon_i$&&  -0.0534&  -18.7110&   -0.1160&   -8.6217&   -0.2412&   -4.1458\\
			\hline
			&  & &  & $N=19$ and $\xi=0.8$ & &  &  \\
			\hline
			$q$             &   & 0.7918 &  & 0.7866 &  & 0.7783 &  \\
			$\beta$& 0 &   0.7954&  -0.9954&    0.7925 &  -0.9925&    0.7878&   -0.9878   \\
			$\varepsilon_i$   & -0.8000&-0.0026 &-392.1546   &-0.0042 &-239.1326   &-0.0068 &-147.0737\\
			\hline
			$q$&&  0.7645 & & 0.7396 & & 0.6918 &  \\
			$\beta$ && 0.7800&   -0.9800&    0.7658&   -0.9658&    0.7377&   -0.9377 \\
			$\varepsilon_i$ &&   -0.0112 & -89.1786&   -0.0194&  -51.6170&   -0.0358&  -27.9008 \\
			\hline
			$q$ && 0.5945 & &  0.4026 &  & 0.1320 &  \\
			$\beta$ && 0.6775&   -0.8775&    0.5424&   -0.7424&    0.2768&   -0.4768\\
			$\varepsilon_i$&&  -0.0730&  -13.6950&   -0.1670&   -5.9869&   -0.4098&   -2.4404\\
			\hline
		\end{tabular}
	\end{center}
\end{table}
\begin{figure}[!h]
	\centering
	\includegraphics[scale=0.06]{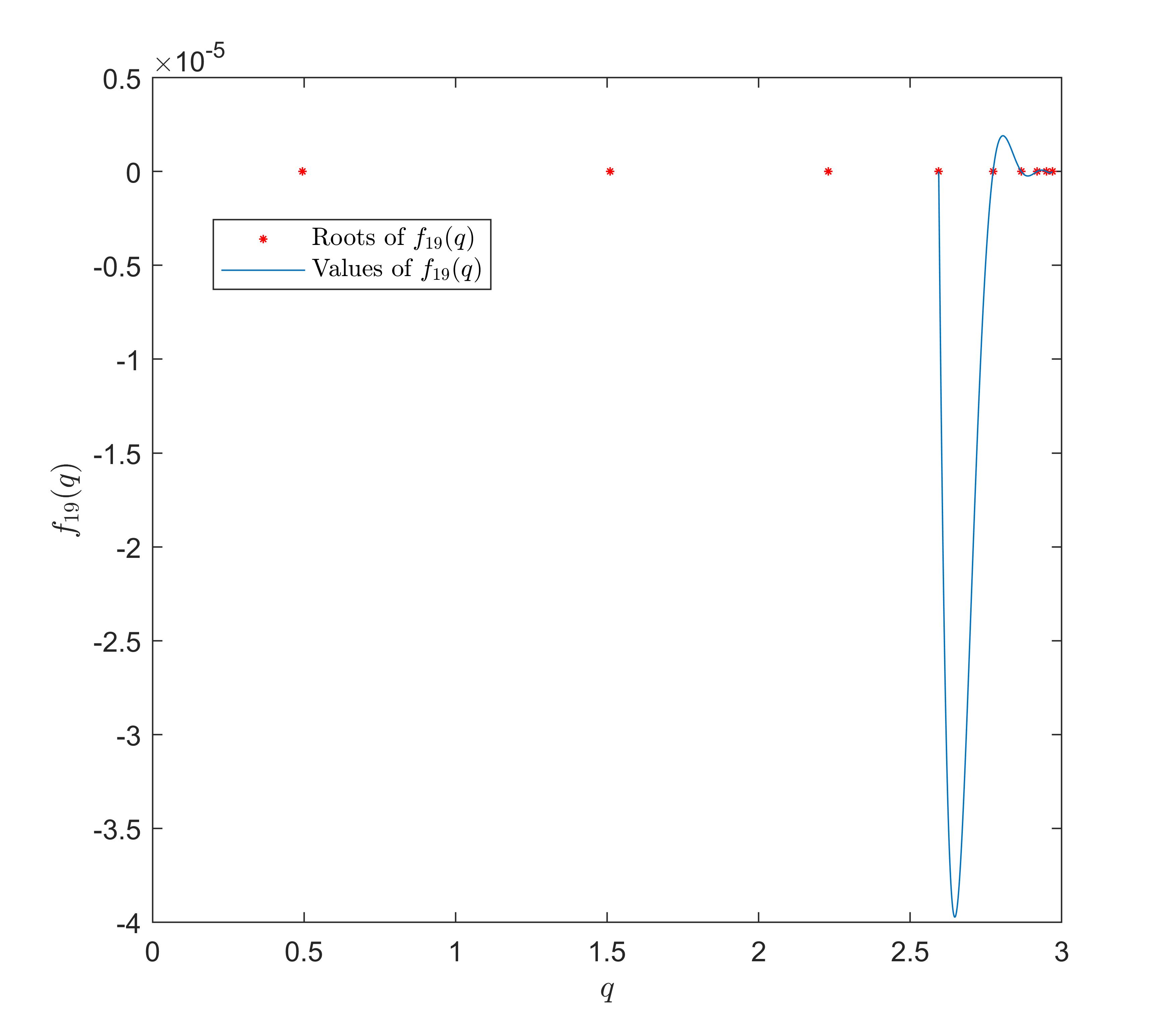}
	\hspace{0.1in}
	\includegraphics[scale=0.059]{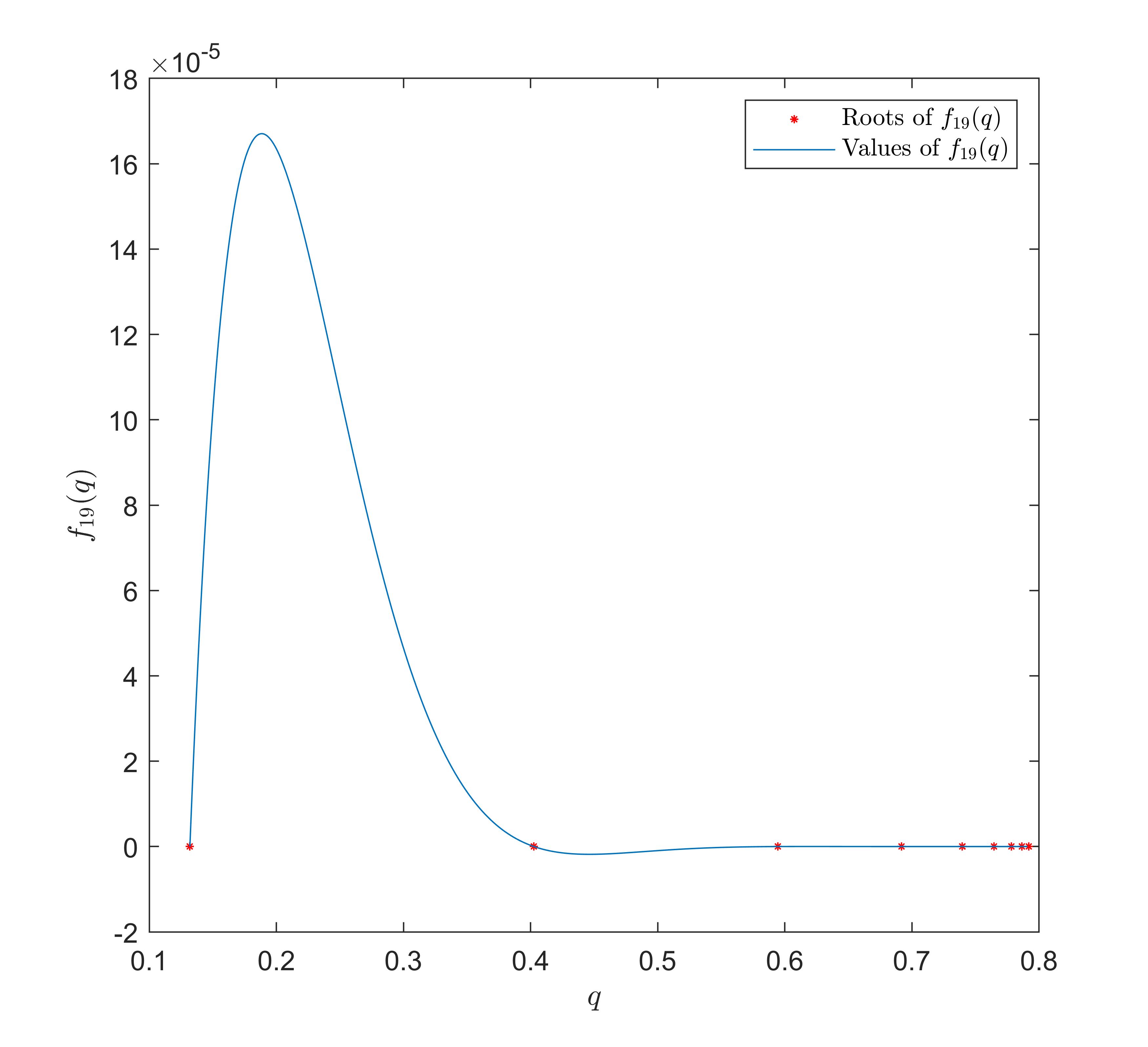}
	\caption{Graph on the left shows the values of $f_{19}(q)$ with $\xi=3$ in the span [2.59,3.0] and plot on
		the right  shows the values of $f_{19}(q)$ with $\xi=0.8$ in the span [0.13, 0.8].}\label{fig:2}
\end{figure}

Next, we consider the radius of layers are decreasing with the same scale $s$, that is
\begin{equation}\label{eq:str02}
r_{i+1}=sr_i, \quad i=1, 2, \ldots N-1.
\end{equation}
Let $r_1=1$ and $s=0.7$. TABLE \ref{tab:222} presents all the roots to the characteristic polynomial  $f_{19}(q)$ with $\xi=3$ and $\xi=0.8$, respectively. Similarly, in FIGURE \ref{fig:3}, one can also find out that the values of the polynomial $f_{19}(q)$ with $\xi=3$ in the span $[1.53, 2.27]$ ( approximately the span between the fourth eigenvalue and last eigenvalue) and the values of the polynomial $f_{19}(q)$ with $\xi=0.8$ in the span $[0.05, 0.61]$ (approximately the span between the first eigenvalue and last eigenvalue),  are all very close to zero. Besides, it is worth mentioning that in both set up of structures, the roots $q$ are all positive values. Similar results can be found in FIGURE \ref{fig:4} and \ref{fig:5} for $N=13$ and $N=16$, respectively.

\begin{table}[h]
	\begin{center}
		\caption{Roots to the characteristic polynomial with layers which are chosen by \eqref{eq:str02}. \label{tab:222}}
		\begin{tabular}{|c|cccc ccc|}
			\hline
			&  & &  & $N=19$  and $\xi=3$ & &  &   \\
			\hline
			$q$             &   & 2.2677 &  & 2.2227 &  & 2.1412 & \\
			$\beta$             & 0 & 2.8077&   -0.8077&    2.7952&   -0.7952&    2.7724&   -0.7724 \\
			$\varepsilon_i$   & -3.0000 &  -0.0505&  -19.7986&   -0.0540&  -18.5314&   -0.0603&  -16.5712 \\
			\hline
			$q$&& 2.0119 & &  1.8171 & &1.5334 &  \\
			$\beta$ && 2.7355&   -0.7355&    2.6784&   -0.6784&    2.5917&   -0.5917 \\
			$\varepsilon_i$ &&   -0.0708&  -14.1215&   -0.0874&  -11.4391&   -0.1137&   -8.7959\\
			\hline
			$q$&& 1.1403 & & 0.6549 &  &0.1979 & \\
			$\beta$ && 2.4630&   -0.4630&    2.2864&   -0.2864&    2.0945&   -0.0945\\
			$\varepsilon_i$ & & -0.1551&   -6.4484&   -0.2171&   -4.6057&   -0.2926&   -3.4174\\
			\hline
			&  & &  & $N=19$  and $\xi=0.8$ & &  &   \\
			\hline
			$q$             &   & 0.6047 &  & 0.5927 &  & 0.5710 & \\
			$\beta$             & 0 & 0.6840&   -0.8840&    0.6764&   -0.8764&    0.6622&   -0.8622 \\
			$\varepsilon_i$   & -0.8000 &  -0.0689&  -14.5228&   -0.0738&  -13.5583&   -0.0829&  -12.0654 \\
			\hline
			$q$&& 0.5365 & &   0.4846 & &0.4089 &  \\
			$\beta$ && 0.6393&   -0.8393&    0.6033&   -0.8033&    0.5472&   -0.7472 \\
			$\varepsilon_i$ &&   -0.0981&  -10.1980&   -0.1227&   -8.1490&   -0.1634&   -6.1211\\
			\hline
			$q$&& 0.3041 & &  0.1746 &  &0.0527 & \\
			$\beta$ && 0.4604&   -0.6604&    0.3297&   -0.5297&    0.1505&   -0.3505\\
			$\varepsilon_i$ & & -0.2325&   -4.3008&   -0.3537&   -2.8274&   -0.5645&   -1.7715\\
			\hline
		\end{tabular}
	\end{center}
\end{table}
\begin{figure}[!h]
	\centering
	\includegraphics[scale=0.06]{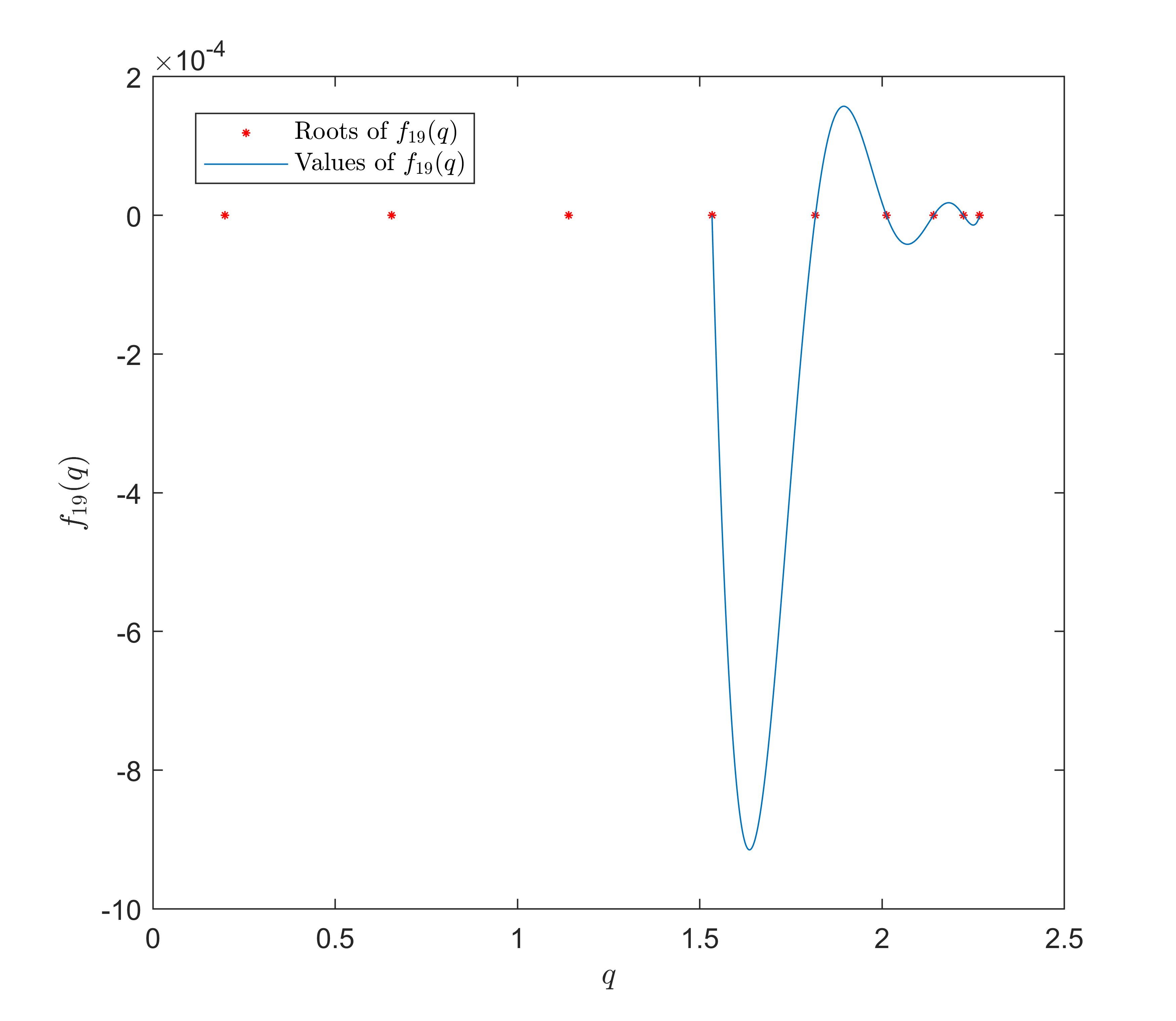}
	\hspace{0.1in}
	\includegraphics[scale=0.06]{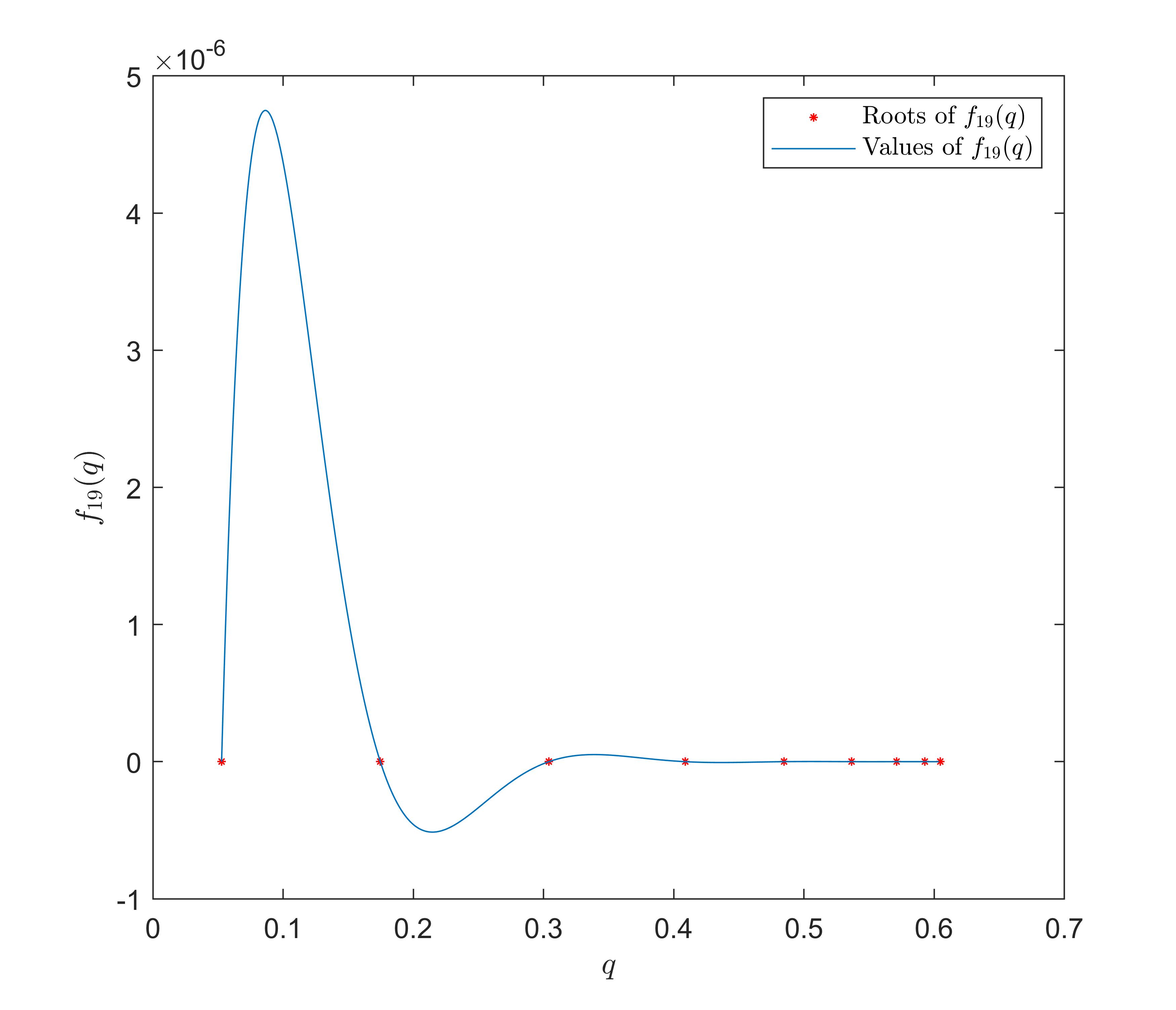}
	\caption{Graph on the left shows the values of $f_{19}(q)$ with $\xi=3$ in the span [1.53,2.27] and plot on
		the right  shows the values of $f_{19}(q)$ with $\xi=0.8$ in the span [0.05, 0.61].}\label{fig:3}
\end{figure}
\begin{figure}[H]
	\begin{center}
		{\includegraphics[width=5in]{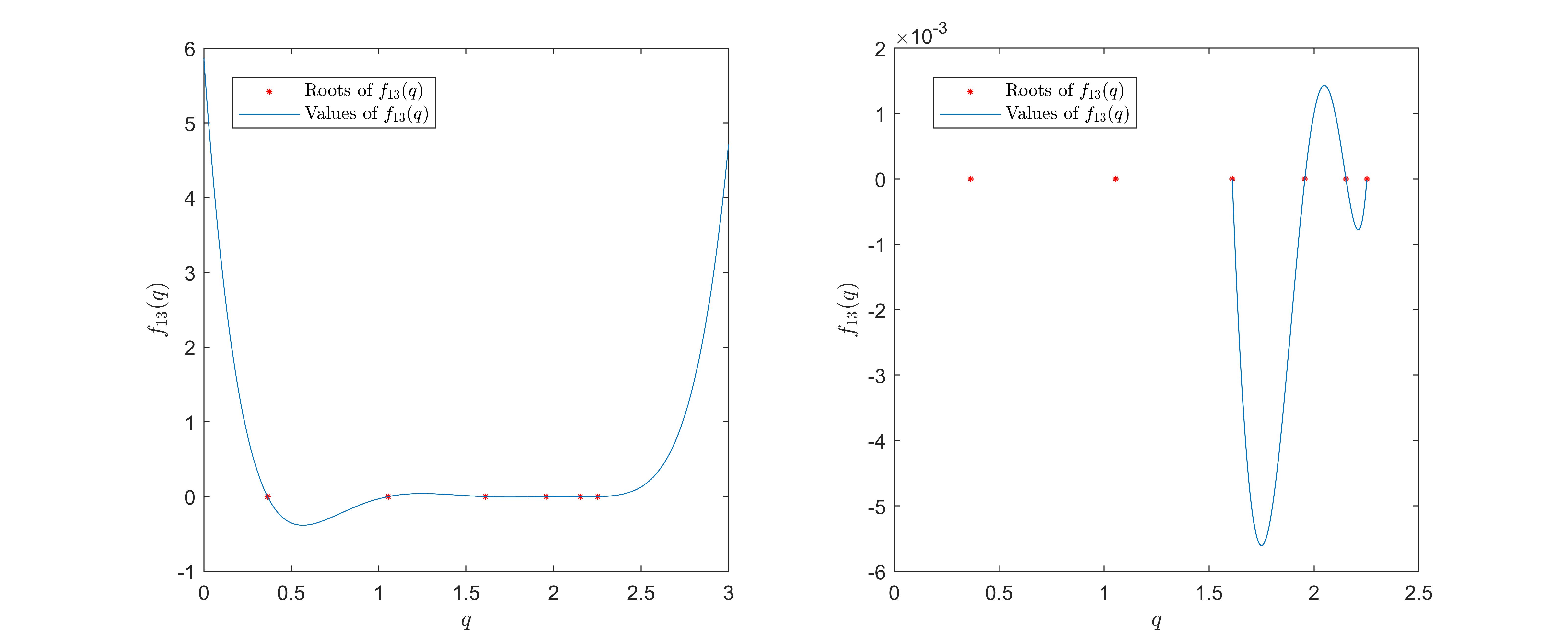}}
	\end{center}
	\begin{center}
		{\includegraphics[width=5in]{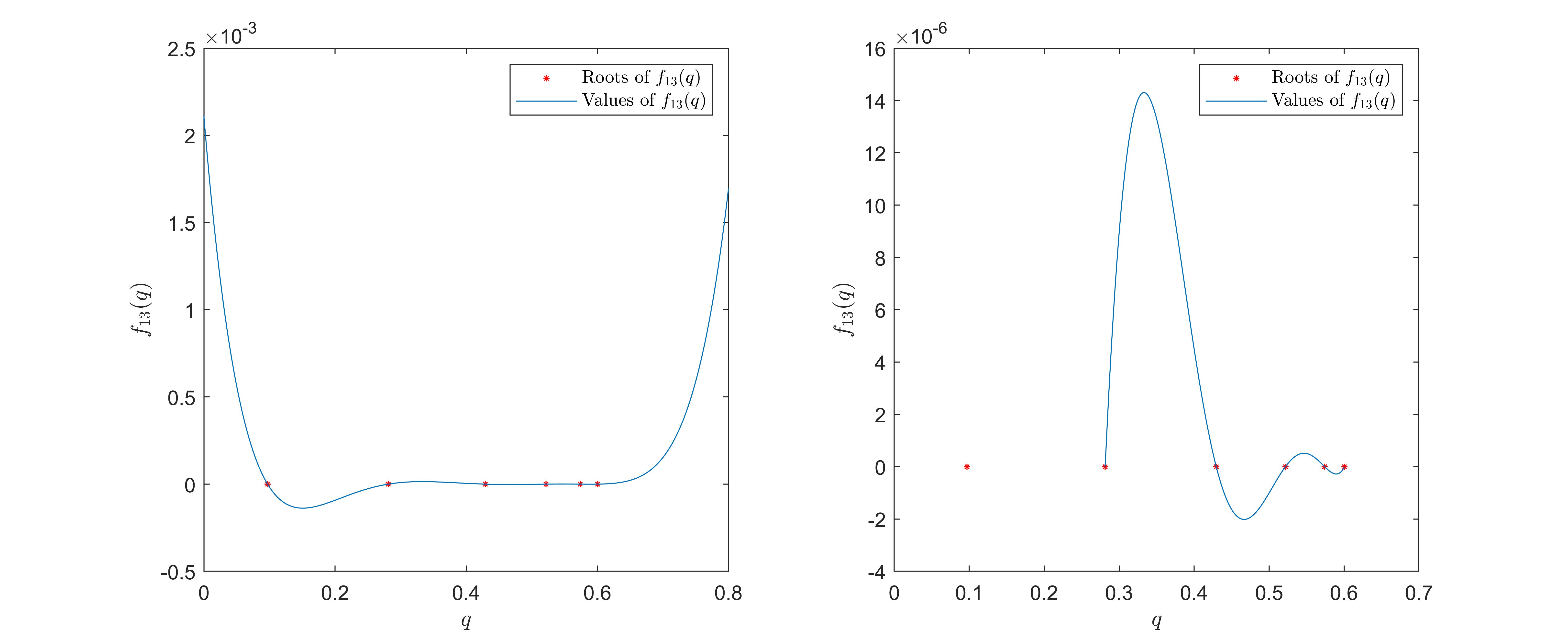}}
	\end{center}
	\caption{Graphs on the first row show the values of $f_{13}(q)$ with $\xi=3$ and plots on
		the second row  show the values of $f_{13}(q)$ with $\xi=0.8$.
	}\label{fig:4}
	\begin{center}
		{\includegraphics[width=5in]{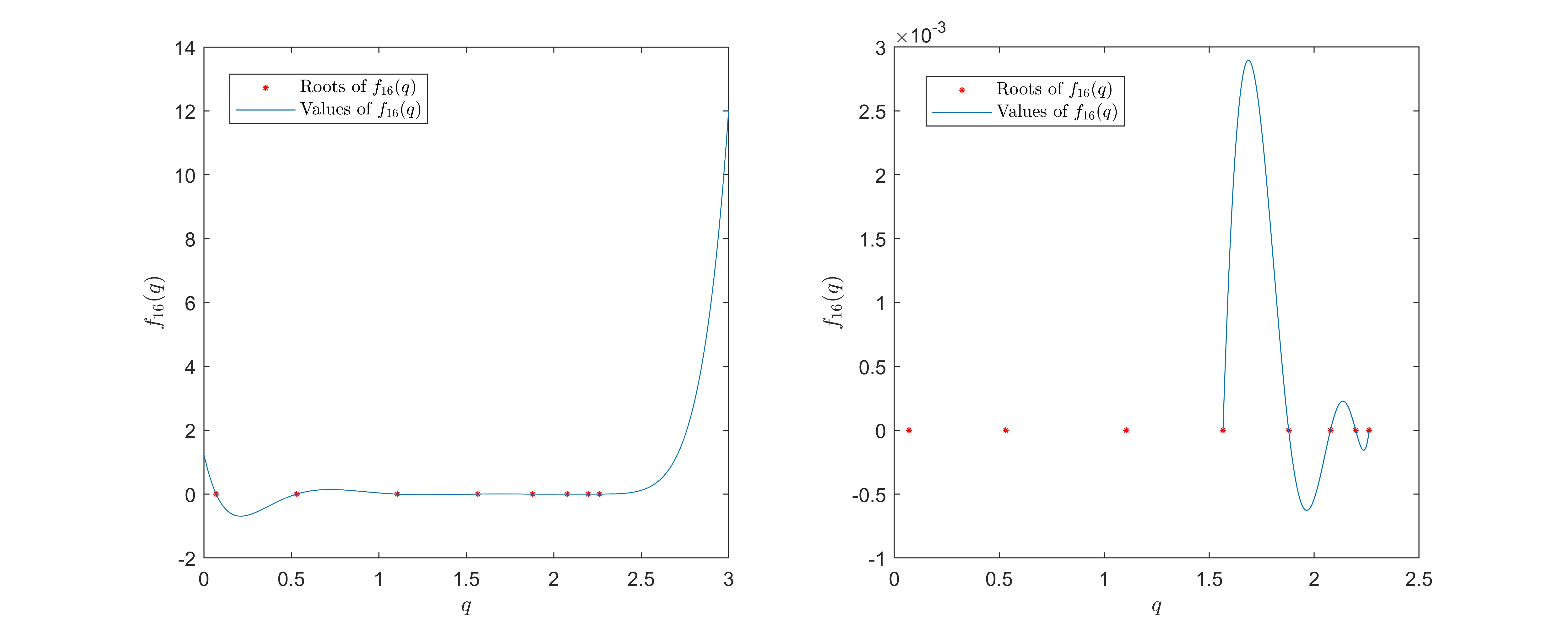}}
	\end{center}
	\begin{center}
		{\includegraphics[width=5in]{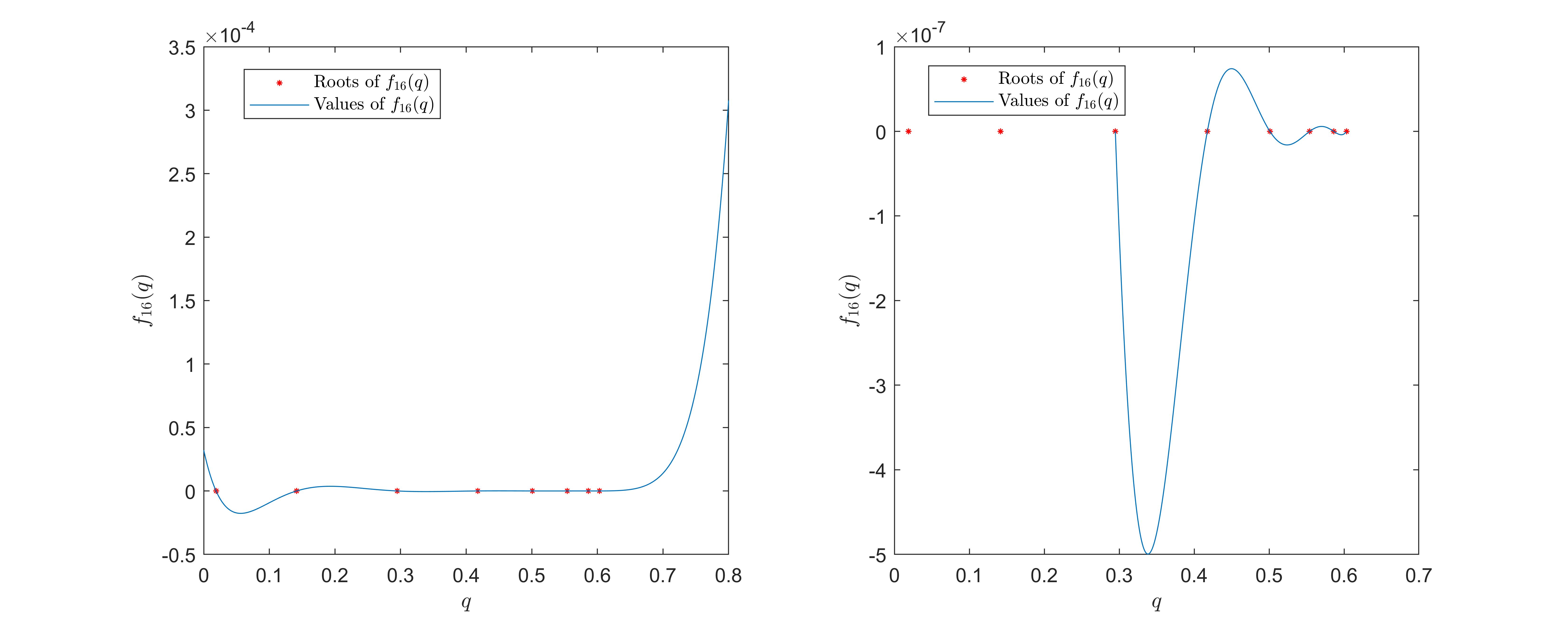}}
	\end{center}
	\caption{Graphs on the first row show the values of $f_{16}(q)$ with $\xi=3$ and plots on
		the second row  show the values of $f_{16}(q)$ with $\xi=0.8$.	
	}\label{fig:5}
\end{figure}
\subsection{Illustration of polariton resonances}
In \eqref{eq:vepdef01}, $\varepsilon_i$ are replaced by
\begin{equation}\label{eq:D}
\varepsilon_i=9 \cdot 10^{-12}\(1-\frac{4 \cdot 10^{30}\delta^2}{1+10^{14}\delta\mathrm{i}}\), \quad i \quad \mbox{is odd}.
\end{equation}
We define the elastic moment tensor here by
\begin{equation}\label{eq:defpol0101}
\mathbf{M}:=r_1^{-3}\Upsilon_{N} (\beta I-\mathbb{K}_{N}^T)^{-1}.
\end{equation}
The functionality of $r_1^{-3}$ appearing in \eqref{eq:defpol0101} is to reduce the scale of the structure.
In FIGURE \ref{fig:6}, we show the norm of the elastic moment tensor defined in \eqref{eq:defpol0101} with the multi-layer structure designed by \eqref{eq:str01}, where $\xi=3$ and $N=14$. It can be seen that the peaks of the norm of the elastic moment tensor are in accordance with the polariton modes in the setup \eqref{eq:D}.
\begin{figure}[!h]
	\begin{center}
		{\includegraphics[width=3in]{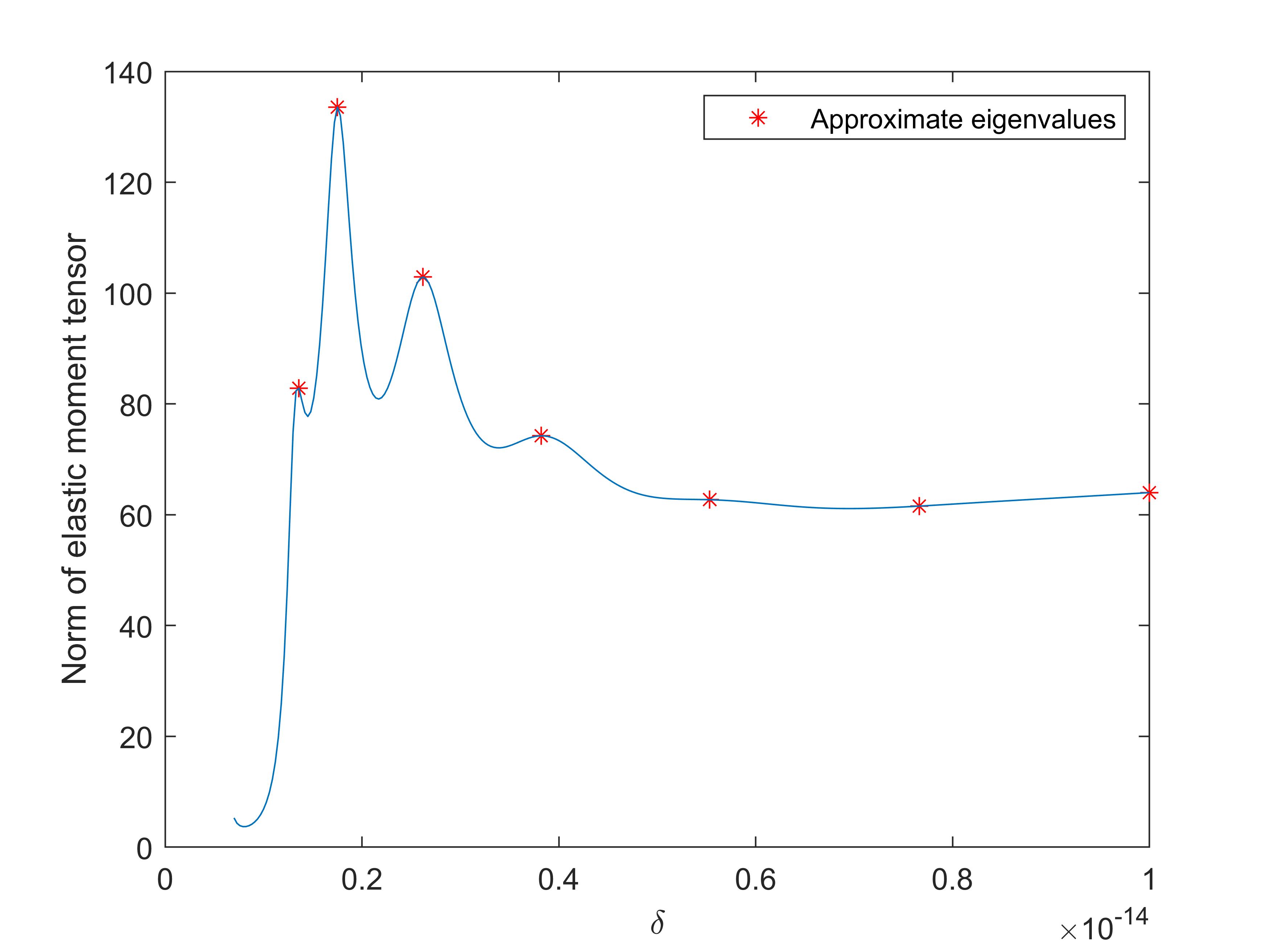}}
	\end{center}
	\caption{Norm of elastic moment tensor  with layers which are chosen by \eqref{eq:str01}.
	}\label{fig:6}
\end{figure}

\section{Concluding remarks}\label{sec7}

In this paper, we studied the elastostatic scattering from a rather general multi-layer metamaterial structures and derived the exact scattering field in terms of the elastic momentum matrix. By highly intricate and delicate analysis of the momentum matrix, we established a handy algebraic framework for analysing the polariton resonances associated with such material structures which yields explicit relationships between the polariton resonance and the geometric and material configurations of the structure. This facilitates the design of metamaterial structures to induce customised resonances. Our study opens up an intriguing direction of further development with many possible extensions as well as applications. In practice, the multi-layer structure can serve as the building block for various material devices. For example, it may serve for constructing elastic/seismic waveguide with enlarged SPR-like (Surface Polariton Resonance) band by employing more layers of metamaterials.  As also discussed in the introduction, it can be used for the quantitative design of ESC-vanishing structures to produce elastic invisibility cloaking devices with enhanced cloaking effects with a rigorous basis. Moreover, our study can be extended to the other physical systems including the electromagnetism and acoustics. We shall study these and other extensions in our forthcoming works.

\section*{Acknowledgement}
The work of Y. Deng was supported by NSFC-RGC Joint Research Grant No. 12161160314 and NSF grant of China No. 11971487.
The work of H. Liu was supported  by NSFC/RGC Joint Research Scheme, N\_CityU101/21, ANR/RGC Joint Research Scheme, A-CityU203/19, and the Hong Kong RGC General Research Funds (projects 12302919, 12301420 and 11300821).

\section*{Data availability statement}
This is a mathematical paper containing all the necessary theoretical proofs. 
There are no data to be reported concerning this work.


\begin{thebibliography}{1}
\bibitem{AAHWY17}
T. Abbas,  H. Ammari, G. Hu, A. Wahab, and J.C. Ye,
Two-dimensional elastic scattering coefficients and
enhancement of nearly elastic cloaking,
\emph{J. Elasticity}, {\bf 128} (2017), 203--243.	
	
\bibitem{ABGKLW15}
H. Ammari, E. Bretin, J. Garnier, H. Kang, H. Lee, and A. Wahab,
Mathematical Methods in Elasticity Imaging,
Princeton University Press, Princeton, NJ, 2015.	
	
\bibitem{ACKLM1}
H. Ammari, G. Ciraolo, H. Kang, H. Lee, and G. Milton,
Spectral theory of a Neumann-Poincar\'e-type operator and analysis of cloaking due to anomalous localized resonance,
\emph{Arch. Ration. Mech. Anal.}, {\bf 208}  (2013), 667--692.



\bibitem{ADKLMZ}
H. Ammari, Y. Deng, and P. Millien,
Surface plasmon resonance of nanoparticles and applications in imaging,
\emph{Arch. Ration. Mech. Anal.}, {\bf 220} (2016), 109--153.



\bibitem{AKLL11}
H. Ammari, H. Kang, H. Lee, and M. Lim,
{Enhancement of near cloaking using generalized polarization tensors vanishing structures. Part I: The conductivity
problem},
\emph{Commun. Math. Phys.}, {\bf 317} (2013), 253--266.

\bibitem{AKLL13}
H. Ammari, H. Kang, H. Lee, and M. Lim,
Enhancement of near cloaking. Part II: The Helmholtz equation,
\emph{Commun. Math. Phys.}, {\bf 317} (2013), 485--502.

\bibitem{AKLLY13}
H. Ammari, H. Kang, H. H. Lee, M. Lim,  and S. Yu,
Enhancement of near cloaking for the full Maxwell equations,
\emph{SIAM J. Appl. Math.,} {\bf 73} (2013) 2055--2076.

%

\bibitem{AKKY17}
K. Ando, Y. Ji, H. Kang, K. Kim, and S. Yu,
Spectrum of Neumann-Poincar\'e operator on annuli and cloaking by anomalous localized resonance for linear elasticity,
\emph{SIAM J. Math. Anal.,} {\bf 49} (2017), 4232--4250.

\bibitem{AKKY18}
K. Ando, Y. Ji, H. Kang, K. Kim, and S. Yu,
Spectral properties of the Neumann-Poincar\'e operator and cloaking by anomalous localized resonance for the elasto-static system,
\emph{European J. Appl. Math.,} {\bf 29} (2018), 189--225.

%
%
%


\bibitem{BS11}
G. Bouchitt\'e and B. Schweizer,
{Cloaking of small objects by anomalous localized resonance},
\emph{Quart. J. Mech. Appl. Math.}, {\bf 63} (2010), 437--463.

%
%
%

\bibitem{DLL201}
Y. Deng,  H. Li, and H. Liu,
Analysis of surface polariton resonance for nanoparticles in elastic system,
\emph{SIAM J. Math. Anal.}, {\bf 52} (2020),  1786--1805.

\bibitem{DLL19}
Y. Deng,  H. Li, and H. Liu,
{On spectral properties of Neuman-Poincar\'e operator and plasmonic resonances in 3D elastostatics},
\emph{J. Spectr. Theory}, {\bf 9} (2019),  767--789.

\bibitem{DLL202}
Y. Deng,  H. Li, and H. Liu,
Spectral properties of Neumann-Poincar\'e operator and anomalous localized resonance in elasticity beyond quasi-static limit,
\emph{J. Elasticity,} {\bf 140} (2020), 213--242.

\bibitem{DLZ21}
Y. Deng, H. Liu, and G. Zheng,
{Mathematical analysis of plasmon resonances for curved nanorods},
\emph{J. Math. Pures Appl.}, {\bf 153} (2021), 248--280.

\bibitem{DLZ22}
Y. Deng, H. Liu, and G. Zheng,
{Plasmon resonances of nanorods in transverse electromagnetic scattering},
\emph{J. Differ. Equ.,} {\bf 318} (2022), 502--536.

\bibitem{DFArxiv}
X. Fang and Y. Deng, On plasmon modes in multi-layer structures, arXiv:2201.12492.


\bibitem{FDC19}
X. Fang, Y. Deng, and X. Chen,
{Asymptotic behavior of spectral of Neumann-Poincar\'e operator in Helmholtz system},
\emph{Math. Meth. Appl. Sci.}, {\bf 42} (2019), 942--953.

\bibitem{FDL15}
 X. Fang, Y. Deng, and J. Li,
{Plasmon resonance and heat generation in nanostructures}, \emph{Math. Meth. Appl. Sci.}, {\bf 38}  (2015), 4663--4672.

%
%
%
%
%


\bibitem{KM14}
D.M. Kochmann and G.W. Milton,
Rigorous bounds on the effective moduli of composites and inhomogeneous bodies with negative-stiffness phases,
\emph{J. Mech. Phys. Solids}, {\bf 71} (2014), 46--63.

\bibitem{LLBW01}
R.S. Lakes, T. Lee, A. Bersie, and Y. Wang,
Extreme damping in composite materials with negative-stiffness inclusions,
\emph{Nature}, {\bf 410} (2001), 565--567.

%


\bibitem{LLLW19}
H. Li, S. Li, H. Liu, and X. Wang,
Analysis of electromagnetic scattering from plasmonic inclusions beyond the quasi-static approximation and applications,
\emph{ESAIM Math. Model. Numer. Anal.}, {\bf 53} (2019), 1351--1371.

\bibitem{LL16}
H. Li and H. Liu,
{On anomalous localized resonance for the elastostatic system},
\emph{SIAM J. Math. Anal.}, {\bf 48} (2016),  3322--3344.

\bibitem{LL15}
H. Li and H. Liu,
{On anomalous localized resonance and plasmonic cloaking beyond the quasistatic limit},
\emph{Proc. R. Soc. A}, {\bf 474} (2018), 20180165.

\bibitem{LLZ23}
H. Li, H. Liu and J. Zou,
{Elastodynamical resonances and cloaking of negative material structures beyond quasistatic approximation},
https://doi.org/10.1111/sapm.12555

\bibitem{LLL9}
H. Li, J. Li, and H. Liu,
{On quasi-static cloaking due to anomalous localized resonance in $\mathbb{R}^3$},
\emph{SIAM J. Appl. Math.}, {\bf 75} (2015), 1245--1260.

\bibitem{LLL16}
H. Li, J. Li, and H. Liu,
{On novel elastic structures inducing polariton resonances with finite frequencies and cloaking due to anomalous localized resonance},
\emph{J. Math. Pures Appl.}, {\bf 120} (2018), 195--219.

\bibitem{LTWW21}
H. Liu, W.-Y. Tsui, A. Wahab, and X. Wang,
Three-dimensional elastic scattering coefficients and enhancement of the elastic near cloaking,
\emph{J. Elasticity}, {\bf 143} (2021), 111--146.



%
%

\bibitem{RS}
M. Ruiz and O. Schnitzer,
Slender-body theory for plasmonic resonance,
\emph{Pro. R. Soc. A}, {\bf 475} (2019), 20190294.


%

\bibitem{WN10}
G. Milton and N. Nicorovici,
On the cloaking effects associated with anomalous
localized resonance,
\emph{Proc. R. Soc. A}, {\bf 462} (2006), 3027--3059.


%


\end{thebibliography}
\end{document}